\documentclass[12pt]{article}
\usepackage{geometry} \geometry{verbose,a4paper,tmargin=2cm,bmargin=2cm,lmargin=2.5cm,rmargin=2cm}

\linespread{1.7}
\usepackage[utf8x]{inputenc} 
\usepackage{amsmath}
\usepackage{mathtools}
\usepackage{titling}
\usepackage{lipsum}
\usepackage{lscape}
\usepackage{amsfonts}
\usepackage{amssymb}
\usepackage{mathrsfs}
\usepackage{amscd}
\usepackage{physics}
\usepackage{amsthm}
\usepackage{subcaption}
\usepackage[dvips]{graphicx}
\usepackage[english]{babel}
\graphicspath{{noiseimages/}}

\usepackage{pst-node}
\usepackage{tikz-cd}

\usetikzlibrary{decorations.pathmorphing}
\begingroup
\catcode`\+\active\gdef+{\mathchar8235\nobreak\discretionary{}{\usefont{OT1}{cmr}{m}{n}\char43}{}}
\catcode`\-\active\gdef-{\mathchar8704\nobreak\discretionary{}{\usefont{OMS}{cmsy}{m}{n}\char0}{}}
\catcode`\=\active\gdef={\mathchar12349\nobreak\discretionary{}{\usefont{OT1}{cmr}{m}{n}\char61}{}}
\catcode`\<\active\gdef<{\mathchar"313C\nobreak\discretionary{}{\usefont{OML}{cmm}{m}{n}\char60}{}}
\catcode`\>\active\gdef>{\mathchar"313E\nobreak\discretionary{}{\usefont{OML}{cmm}{m}{n}\char62}{}}
\endgroup
\def\times{\mathchar8706\nobreak\discretionary{}{\usefont{OMS}{cmsy}{m}{n}\char2}{}}
\def\subset{\mathchar"321A\nobreak\discretionary{}{\usefont{OMS}{cmsy}{m}{n}\char26}{}}
\def\supset{\mathchar"321B\nobreak\discretionary{}{\usefont{OMS}{cmsy}{m}{n}\char27}{}}

\def\neq{\not=\nobreak\discretionary{}{\usefont{OMS}{cmsy}{m}{n}\char54\usefont{OT1}{cmr}{m}{n}\char61}{}}

\def\in{\mathchar"3232\nobreak\discretionary{}{\usefont{OMS}{cmsy}{m}{n}\char50}{}}
\def\to{\mathchar"3221\nobreak\discretionary{}{\usefont{OMS}{cmsy}{m}{n}\char33}{}}
\def\?#1{#1\nobreak\discretionary{}{\hbox{$\mathsurround=0pt #1$}}{}}

\title{The three-point Gaudin model and branched coverings of the Riemann sphere}

\author{Natalia Amburg
        \and
        Ilya Tolstukhin
        }


\usepackage{graphicx}
\newtheorem{theorem}{Theorem}
\newtheorem{remark}{Remark}
\newtheorem{corollary}{Corollary}
\newtheorem{hypothesis}{Hypothesis}
\newtheorem{lemma}{Lemma}
\newtheorem{observation}{Observation}
\newtheorem{proposition}{Proposition}
\newtheorem{definition}{Definition}


\begin{document}
\newpage
\maketitle


\begin{abstract}

We study the three-point quantum $\mathfrak{sl_2}$-Gaudin model. In this case the compactification of the parameter space is $\overline{M_{0,4}(\mathbb{C})}$, which is the Riemann sphere. We analyze sphere
coverings by the joint spectrum of the Gaudin Hamiltonians treating them as algebraic curves. We write equations of these curves as determinants of tridiagonal matrices and deduce some consequences regarding the geometric structure of the Gaudin coverings.
\end{abstract}

\tableofcontents

\section{Introduction}

The Gaudin spin chain model is a physical model related to the Lie algebra $\mathfrak{sl}_2$, which models the interactions of a number of particles with spin on a line. It was introduced in the paper \cite{GD} and was then generalized to the case of any semisimple Lie algebra.  The Gaudin Hamiltonians are $n$ commuting operators, which depend on pairwise distinct complex parameters $z_1,z_2,...,z_n$ and act on the $n$-fold tensor product of irreducible representations of a given semisimple Lie algebra. One of the main problems of the Gaudin model is to understand the joint spectrum of these operators and how it changes when the parameters vary. 

In his papers \cite{R1} and \cite{R2} Rybnikov showed that parametrization of the Gaudin algebras by pairwise distinct complex numbers $z_1,z_2,...,z_n$ can be naturally extended to the Deligne-Mumford compactification $\overline{M_{0,n+1}(\mathbb{C})}$ of genus $0$ curves with $n+1$ marked points. In the same works the coverings of $\overline{M_{0,n+1}(\mathbb{C})}$ by the spectrum of the Gaudin algebras are studied. 

In the present work, we consider the three-point Gaudin model, so that $n=3$. In this case, the Deligne-Mumford compactification of parameter space is $\overline{M_{0,4}(\mathbb{C})}$, which is the Riemann sphere. The coverings of $\overline{M_{0,4}(\mathbb{C})}$ by the joint spectrum of the Gaudin Hamiltonians are the branched coverings of the sphere, or seen differently, the algebraic curves. Natural questions arise about the equations of these curves and the genus of irreducible components. We are also interested in the branch points and the monodromy group of these coverings. In the present work, we are exploring approaches to these problems.

In section $\ref{degrees}$ we obtain formulas on degrees of the Gaudin coverings. In section $\ref{Shapovalov}$ we study the Shapovalov form and construct a family of bases of the space of singular vectors of a given weight, which is orthonormal with respect to it. In section $\ref{Shbases}$ we study the action of the Gaudin Hamiltonians on these bases, which we call Shapovalov bases. Each basis of this family remarkably does not depend on the point in the parameter space, and the action of the Gaudin Hamiltonians in it is given by a tridiagonal matrix. Depending on the three given irreducible $\mathfrak{sl}_2$-modules and parameter $u \in \overline{M_{0,4}(\mathbb{C})}$ on the Riemann sphere, these tridiagonal matrices are written explicitly. In section $\ref{Tridiagonal}$ we derive some consequences and describe the geometrical structure of the Gaudin coverings in terms of these matrices. In section $\ref{Hypotheses}$ we provide some observations and new interesting hypotheses regarding the structure of the Gaudin coverings and its monodromy. Section $\ref{Ornaments}$ is devoted to concrete examples of the Gaudin algebraic curves and their branch points, which we were able to calculate using a tridiagonal form of the Gaudin Hamiltonians in the Shapovalov bases.

\section{The Gaudin model and coverings of $\overline{M_{0, n+1}}(\mathbb{C})$}

We follow \cite{R1} in our exposition of the Gaudin model. Let $\mathfrak{g}$ be a semisimple complex Lie algebra (we will be mostly interested in the case $ \mathfrak{g}  = \mathfrak{sl}_2$) and $U(\mathfrak{g})$ be its universal enveloping algebra. 
Let ${x_{a}}, \;  a = 1,...,dim \: \mathfrak{g}$ be an orthonormal basis of $\mathfrak{g}$ with respect to the Killing form and ${x^{a}}, \; a = 1,...,dim \: \mathfrak{g}$ be its dual basis. Let us fix a positive integer $n$. For any $x \in U(\mathfrak{g})$ we  define  $x(i) := 1 \otimes ... \otimes 1  \otimes x \otimes 1 \otimes ... \otimes 1 \in  U(\mathfrak{g})^{\otimes n}$, where $x$ stands on $i$-th place. We also denote by $V_{\mu}$ the irreducible representation of $\mathfrak{g}$ of highest weight $\mu$. For any collection of dominant integral weights  $\underline{\lambda} = (\lambda_1, ..., \lambda_n)$ we set by definition  $V_{\underline{\lambda}} = V_{\lambda_1} \otimes ... \otimes  V_{\lambda_n} $.

Let us now fix a collection of pairwise distinct complex numbers  $\underline{z} = (z_1, ..., z_n)$. The Gaudin Hamiltonians are the following commuting operators acting on the space $V_{\underline{\lambda}}$: 
$$ H_i = \sum_{j \neq i} \sum_{a = 1}^{dim \: \mathfrak{g}} \; \frac{x_{a}(i) \; x^{a} {(j)} }{z_i - z_j } $$

and one of the main problems of the Gaudin model is to understand their simultaneous diagonalization.

The Gaudin Hamiltonians can be viewed as the elements of the space  $U(\mathfrak{g})^{\otimes n}$, and in the paper \cite{FFR} an existence of a large commutative subalgebra $ \mathcal{A}(\underline{z}) = \mathcal{A}(z_1,...,z_n) \subset U(\mathfrak{g})^{\otimes n}$, which contains all $H_i$, was proved. This subalgebra commutes with the diagonal action of $\mathfrak{g}$ on $U(\mathfrak{g})^{\otimes n}$ and in fact is the maximal commutative subalgebra in $[U(\mathfrak{g})^{\otimes n}]^{\mathfrak{g}}$ with this property. We call $ \mathcal{A}(\underline{z})$  the Gaudin algebra. Detailed construction of this algebra is in section  7.10 of the paper \cite{HKRW}.

 In the case  $ \mathfrak{g}  = \mathfrak{sl}_2$ subalgebra  $ \mathcal{A}(\underline{z}) \subset U(\mathfrak{g})^{\otimes n}$  is generated by the elements  $H_i$ and the center
 $U(\mathfrak{g})^{\otimes n}$, see \cite{R1}. More explicitly: let $e, f, h$ be the standard basis of $\mathfrak{sl}_2$, so that the dual basis is $f, e, h/2$. Then $ \mathcal{A}(\underline{z})$ is generated by  $ H_i = \sum_{j \neq i}  \; \frac{e(i) f(j) + f(i) e(j) + \frac{1}{2} h(i) h(j) }{z_i - z_j } $ and  $C_i = e(i)f(i) + f(i) e(i) + 1/2 h(i) h(i), \; \; i = 1, ..., n$ as an algebra. The only algebraic relation between the generators is  $\sum_{i}^{n} H_i = 0$, so the Gaudin algebra, in this case, is a polynomial algebra in $2n-1$ variables.

  The Gaudin algebras are parametrized by a non-compact algebraic variety $M_{0, n+1}(\mathbb{C})$ of the configurations of $n$ pairwise distinct points on a complex line up to an affine transformation. In section 3 of the paper \cite{R1} it was proved that this parametrization can be naturally extended to a sheaf of commutative algebras $\mathcal{A}$ on $\overline{M_{0, n+1}}(\mathbb{C})$, -- the Deligne-Mumford compactification of rational curves with $n+1$ marked points. The representation of $V_{\underline{ \lambda }} $ the universal enveloping algebra $U(\mathfrak{g})^{\otimes n}$ (in the notation of the previous chapter)  induces the morphism of sheaves of commutative algebras $$ \mathcal{A} \to \textit{End}(V_{\underline{\lambda}}) $$ where $\textit{End}(V_{\underline{\lambda}})$ is the constant sheaf with the stalk $\mathfrak{gl}(V_{\underline{\lambda}})$. We denote the image of this morphism as  $\mathcal{A}_{V_{\underline{\lambda}}}$. The stalks of the sheaf  $\mathcal{A}_{V_{\underline{\lambda}}}$ on the open subvariety ${M_{0, n+1}}(\mathbb{C}) $ are the commutative subalgebras of $\mathfrak{gl}(V_{\underline{\lambda}})$ generated by the Gaudin Hamiltonians. It is natural to consider their spectrum, and in the same work \cite{R1} it was proved an existence of a branched covering $$\pi_{V_{\underline{\lambda}}}: C_{V_{\underline{\lambda}}} \to \overline{M_{0, n+1}}(\mathbb{C})$$
 
 such that preimage of any point  $\underline{z} \in {M_{0, n+1}}(\mathbb{C})$ is the joint spectrum of the Gaudin Hamiltonians corresponding to the parameters $\underline{z}$.

\section{Bethe ansatz equations}

From now on, we consider the case of $\mathfrak{g} = \mathfrak{sl}_2$, so $\underline{\lambda} =(m_1,m_2, ..., m_n)$ is just a collection of non-negative integers. Let  $e, f, h$ be the standard basis of $\mathfrak{sl}_2$.
\begin{observation}\label{obs}

The Gaudin Hamiltonians commute with the diagonal action of $\mathfrak{sl_2}$ on $U(\mathfrak{sl_2})^{\otimes n}$:
$$ x H_i = H_i x \quad \forall x \in \mathfrak{sl_2} $$

which implies that they preserve singular (highest with respect to the diagonal action) vectors and weight spaces. Therefore, it suffices to diagonalize the Gaudin Hamiltonians on the space $(V_{\underline{\lambda}})^{sing}_{\mu}$  of highest vectors of weight $\mu$ for all possible $\mu \in \mathbb{Z}_{\geq 0}$. 
\end{observation}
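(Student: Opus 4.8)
The plan is to reduce the asserted commutation $[\,\Delta^{(n)}(x),H_i\,]=0$ for all $x\in\mathfrak{sl}_2$ to the $\mathfrak{g}$-invariance of the tensor Casimir, and then to draw the representation-theoretic consequences by Schur's lemma. Write $\Omega=\sum_{a=1}^{\dim\mathfrak{g}} x_a\otimes x^a\in\mathfrak{g}\otimes\mathfrak{g}$, and for $i\neq j$ let $\Omega_{ij}\in U(\mathfrak{g})^{\otimes n}$ be the element obtained by placing the two tensor legs of $\Omega$ in slots $i$ and $j$ (identity elsewhere), so that $H_i=\sum_{j\neq i}(z_i-z_j)^{-1}\,\Omega_{ij}$ and $C_i=\Omega_{ii}$. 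Since commuting with $\Delta^{(n)}(x)$ is a linear condition, it suffices to prove $[\Delta^{(n)}(x),\Omega_{ij}]=0$ for each pair $i\neq j$ (the analogous fact for $C_i$ being immediate from $C_i$ lying in the center of the $i$-th factor $U(\mathfrak{g})$).

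First I would split $\Delta^{(n)}(x)=\big(x(i)+x(j)\big)+\sum_{k\neq i,j}x(k)$. The terms $x(k)$ for $k\neq i,j$ act on tensor slots disjoint from those touched by $\Omega_{ij}$, hence commute with it. So the statement collapses to the two-factor identity $[\,x\otimes 1+1\otimes x,\ \Omega\,]=0$ in $U(\mathfrak{g})^{\otimes 2}$, which is exactly the statement that $\Omega\in\mathfrak{g}\otimes\mathfrak{g}$ is invariant under the adjoint action of $\mathfrak{g}$. This in turn follows from the bases $\{x_a\}$ and $\{x^a\}$ being dual with respect to the (invariant) Killing form: writing $[x,x_a]=\sum_b\alpha_{ab}x_b$ and $[x,x^b]=\sum_a\beta^{b}{}_{a}x^a$, invariance of the form gives $\alpha_{ab}=-\beta^{b}{}_{a}$, and then $\mathrm{ad}_x(\Omega)=\sum_{a,b}(\alpha_{ab}+\beta^{b}{}_{a})\,x_b\otimes x^a=0$. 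For $\mathfrak{g}=\mathfrak{sl}_2$ one may instead just check by hand that $e(i)f(j)+f(i)e(j)+\tfrac12 h(i)h(j)$ commutes with each of $e(i)+e(j)$, $f(i)+f(j)$, $h(i)+h(j)$.

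Granting this, the consequences are formal. On the finite-dimensional space $V_{\underline{\lambda}}$ the operator $H_i$ commutes with $\Delta^{(n)}(e)$ and with $\Delta^{(n)}(h)$, hence preserves $\ker\Delta^{(n)}(e)$ (the singular vectors) and each eigenspace of $\Delta^{(n)}(h)$; intersecting, $H_i$ preserves $(V_{\underline{\lambda}})^{sing}_{\mu}$ for every $\mu$. Decomposing $V_{\underline{\lambda}}\cong\bigoplus_{\mu}V_{\mu}\otimes(V_{\underline{\lambda}})^{sing}_{\mu}$ as an $\mathfrak{sl}_2$-module, the $H_i$ act as endomorphisms of this module, so by Schur's lemma each $H_i$ acts as $\mathrm{id}_{V_\mu}\otimes\big(H_i|_{(V_{\underline{\lambda}})^{sing}_{\mu}}\big)$ on the $\mu$-isotypic block; thus the joint spectrum of the Gaudin Hamiltonians on $V_{\underline{\lambda}}$ is the union over $\mu$ of their joint spectra on the spaces $(V_{\underline{\lambda}})^{sing}_{\mu}$, which is the claimed reduction. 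The only step that is not pure bookkeeping is the invariance of the tensor Casimir, i.e. the sign identity between the structure constants in the dual bases; I do not expect it to pose any real difficulty, and it is the one place a careful argument is needed.
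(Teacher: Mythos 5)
Your argument is correct and complete: the reduction to the adjoint-invariance of the tensor Casimir $\Omega=\sum_a x_a\otimes x^a$, the sign cancellation $\alpha_{ab}+\beta^{b}{}_{a}=0$ coming from invariance of the Killing form, and the Schur-lemma bookkeeping on the isotypic decomposition are all sound. The paper itself offers no proof of this Observation (it is stated as a standard fact, implicitly referring back to the $\mathfrak{g}$-invariance of the Gaudin algebra from the cited literature), so your write-up simply supplies the standard argument the authors omit; it matches what they clearly have in mind.
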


Let us now describe what Bethe ansatz equations are. 
We search for eigenvectors of Gaudin Hamiltonians in a specific form. For any complex number $w \neq z_i$ let
$$ f(\omega) = \sum_{i =1}^{n} \frac{f(i)}{\omega - z_i} \in  U(\mathfrak{sl}_2)^{\otimes n} $$
Take the tensor product of the highest weight vectors in each irreducible component: \newline $v = v_{m_1} \otimes ... \otimes v_{m_n}$ and consider 
$$ v(\omega_1, ..., \omega_r) := f(\omega_1)...f(\omega_r)v $$

There is a well-known theorem (see \cite{Muk}):
\begin{theorem}{(Mukhin, Tarasov, Varchenko)}

$v(\omega_1, ..., \omega_r)$ is an eigenvector of highest weight $m_1+...+m_n -2r$ if the following Bethe ansatz equations are satisfied: 
$$ \sum^n_{i=1} \frac{m_i}{\omega_l - z_i}  -  \sum^{r}_{s \neq l} \frac{2}{\omega_l - \omega_s} = 0 $$ 
for  each $l = 1,...,r$. 

Moreover, each eigenvector in the space  $(V_{\underline{\lambda}})^{sing}_{m_1+...+m_n-2r}$ can be obtained this way.
\end{theorem}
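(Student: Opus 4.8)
The plan is to prove the theorem of Mukhin--Tarasov--Varchenko by direct computation, following the standard Bethe ansatz argument. First I would apply each Gaudin Hamiltonian $H_i$ to the vector $v(\omega_1,\dots,\omega_r) = f(\omega_1)\cdots f(\omega_r)v$ and commute $H_i$ past the factors $f(\omega_l)$ toward the highest weight vector $v$. The key computational input is the commutator $[H_i, f(\omega)]$, which one computes from the $\mathfrak{sl}_2$ relations; since $f(\omega) = \sum_j f(j)/(\omega - z_j)$ is a sum of weight-lowering operators in the various tensor factors, and $H_i$ is quadratic in $e,f,h$, the commutator produces terms involving $h(k)$ and $e(k)$ acting at position $k$, weighted by rational functions of the $z_i$ and $\omega$. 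It is cleaner to package this using the generating-function operator $S(\omega) := \sum_i H_i/(\text{something})$ or, more directly, to work with the ``master'' operator $\sum_i H_i z_i$-type combinations; but for a self-contained argument I would simply track the action of $\sum_i H_i$ and of each $H_i$ on $v(\underline\omega)$ by repeated use of the single commutator identity.

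The main steps in order are: (1) establish the commutation relation $[H_i, f(\omega)] = \frac{1}{\omega - z_i}\big(h(i) f(\omega) - f(i)\, g(\omega)\big)$ (or the analogous precise identity — the exact shape is what the calculation must pin down), where the correction terms are again expressible through the operators $f(\cdot)$ and the diagonal Cartan action; (2) move all $H_i$'s to the right through $f(\omega_1)\cdots f(\omega_r)$, collecting the ``wanted'' terms (those that reproduce $v(\underline\omega)$ up to scalar, using $H_i v = 0$ or $h(i)v = m_i v$) and the ``unwanted'' terms (those in which some $f(\omega_l)$ has been replaced by $f(z_i)$ or by an operator breaking the pattern); (3) show that the sum of unwanted terms, after using the pole structure and symmetry in $\omega_1,\dots,\omega_r$, vanishes precisely when the Bethe equations $\sum_{i=1}^n \frac{m_i}{\omega_l - z_i} - \sum_{s\neq l}\frac{2}{\omega_l-\omega_s} = 0$ hold for every $l$; (4) read off the eigenvalue of $H_i$ as the coefficient surviving in the wanted terms, and note the weight is $m_1+\cdots+m_n - 2r$ since each $f(\omega_l)$ lowers the total weight by $2$.

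For the completeness assertion — that every eigenvector in $(V_{\underline\lambda})^{sing}_{m_1+\cdots+m_n-2r}$ arises this way — I would not reprove it from scratch but cite \cite{Muk}; if a sketch is wanted, the standard route is a dimension count: the number of solutions of the Bethe equations (counted with multiplicity, modulo the $S_r$-action permuting the $\omega_l$) matches $\dim (V_{\underline\lambda})^{sing}_{m_1+\cdots+m_n-2r}$, together with simplicity of the Gaudin spectrum for generic $\underline z$, which forces the Bethe vectors to span. The main obstacle is step (3): organizing the unwanted terms so that the cancellation is manifest. The bookkeeping is delicate because a given unwanted term can be produced in several ways (from commuting $H_i$ past different factors $f(\omega_l)$, and from the quadratic cross-terms in $H_i$ hitting two different factors), and one must carefully use partial fractions in the variables $\omega_l - z_i$ and $\omega_l - \omega_s$ to recognize the residue at $\omega_l = \omega_s$ or $\omega_l = z_i$ as exactly the Bethe expression; the symmetry of $v(\underline\omega)$ under permutations of the $\omega_l$ is what makes the residues at $\omega_l = \omega_s$ pair up correctly. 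I expect this residue-matching to be the only genuinely non-routine part; everything else is a careful but mechanical application of the $\mathfrak{sl}_2$ commutation relations.
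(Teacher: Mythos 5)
The paper does not prove this statement at all: it is quoted as a known theorem with a pointer to \cite{Muk}, so there is no internal argument to compare yours against. Your outline is the standard ``wanted/unwanted terms'' Bethe ansatz computation, and that is indeed the right general strategy for the first half of the claim; deferring the completeness assertion to \cite{Muk} is also exactly what the paper does.

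That said, as a proof attempt your plan has concrete gaps. First, the theorem asserts that $v(\omega_1,\dots,\omega_r)$ is a \emph{highest weight} (singular) vector of weight $m_1+\cdots+m_n-2r$, i.e.\ that the diagonal raising operator $\sum_i e(i)$ annihilates it; your steps only address the weight (trivial, since each $f(\omega_l)$ lowers it by $2$) and the $H_i$-eigenvector property. The vanishing of $e\cdot v(\underline\omega)$ is a separate computation that also requires the Bethe equations, and without it the ``Moreover'' clause about $(V_{\underline\lambda})^{sing}_{m_1+\cdots+m_n-2r}$ is not even addressed. Second, the step you yourself identify as the only non-routine one --- organizing the unwanted terms so that their sum is a linear combination of the left-hand sides of the Bethe equations --- is left entirely unexecuted, and the commutator identity in your step (1) is stated only modulo ``the exact shape is what the calculation must pin down''; since the whole content of the theorem lives in that cancellation, the argument as written cannot be verified. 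Third, your sketched alternative for completeness (matching the number of Bethe solutions to $\dim(V_{\underline\lambda})^{sing}_{m_1+\cdots+m_n-2r}$ and invoking generic simplicity of the spectrum) understates the difficulty: that count, the nonvanishing of the resulting Bethe vectors, and the passage from generic to all $\underline z$ are themselves the hard theorems of Mukhin--Tarasov--Varchenko, so this part should be presented strictly as a citation rather than as a provable sketch.
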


However, solving this system of $r$ algebraic equations of degree $n+r-2$ is not easy even for small $n$, so alternative methods for obtaining eigenvectors in $(V_{\underline{\lambda}})^{sing}_{m_1+...+m_n-2r}$ should be explored. Our subsequent work can be thought of as simplifying the Bethe ansatz system to only one algebraic equation of degree $\leq r+1$ in the case $n=3$.

\section{Coverings of the Riemann sphere in the three-point Gaudin model}  

Recall that we confined ourselves to the case $\mathfrak{g} = \mathfrak{sl}_2$. Let us also set $n=3$ from now on, so $\underline{\lambda} = (m_1,m_2,m_3)$, where each $m_i$ is a non-negative integer. By observation $\ref{obs}$ our branched coverings $C_{V_{m_1} \otimes V_{m_2} \otimes V_{m_3}} \to \overline{M_{0, 4}}(\mathbb{C})$ splits into disjoint union of coverings $C_{(V_{\underline{\lambda}})^{sing}_{\mu}} \to \overline{M_{0, 4}}(\mathbb{C})$ for each $\mu \in \mathbb{Z}_{\geq 0}$. We will denote such coverings by $C_{(V_{\underline{\lambda}})^{sing}_{\mu}}$.

The space $\overline{M_{0, 4}}(\mathbb{C})$ is nothing but a complex projective line $\mathbb{C}\mathbb{P}^1$ and by making some projective transformation we can map three of our four points-parameters into $0$, $1$, $ \infty $ leaving only one varying parameter $u$.

Let us fix notation for the following elements in $U(\mathfrak{sl_2}) \otimes U(\mathfrak{sl_2}) \otimes U(\mathfrak{sl_2})$: 

\begin{multline*}
\Omega_{12} = e \otimes f \otimes 1 + f \otimes e \otimes 1 + \frac{1}{2} h \otimes h \otimes 1, \quad  \Omega_{23} = 1 \otimes e \otimes f + 1 \otimes f \otimes e + \frac{1}{2} 1 \otimes h \otimes h \\ \Omega_{13} = e \otimes 1 \otimes f + f \otimes 1 \otimes e + \frac{1}{2} h \otimes 1 \otimes h
\end{multline*}

Let us fix $(0,1,\frac{1}{u}, \infty) \in \overline{M_{0, 4}}(\mathbb{C})$. Then the corresponding Gaudin Hamiltonians are listed below:
$$
 H_1 = - \Omega_{12}  -  u \Omega_{13}, \quad H_2 = \Omega_{12}  + \frac{ u}{u-1} \Omega_{23}, \quad  H_3 = u \Omega_{13}   +  \frac{u}{1-u} \Omega_{23}
$$

A natural question arises about the degrees of coverings $C_{(V_{\underline{\lambda}})^{sing}_{\mu}} \to \overline{M_{0, 4}}(\mathbb{C})$. To answer it we need to know dimensions of $(V_{\underline{\lambda}})^{sing}_{\mu}$. They are calculated in the section below.

\section{Degrees of coverings and admissible weights \label{degrees}}

In this section, we consider the 3-fold tensor product $V_{\underline{\lambda}}=V_{m_1}\otimes V_{m_2}\otimes V_{m_3}$ of irreducible $\mathfrak{sl}_2$ representations. The goal of this section is to determine the dimensions of subspaces of singular vectors of a given weight, which in geometric terms are degrees of Gaudin coverings. To achieve this, we decompose $V_{\underline{\lambda}}$ into sum of irreducible representations by first decomposing $V_{m_1}\otimes V_{m_2}$ as $ \oplus^{min(m_1,m_2)}_{r_1 = 0} V_{m_1+m_2-2r_1}$ and then tensor multiplying each term with $V_{m_3}$.

\begin{proposition}

Let $r_1$ be a non-negative integer such that $V_{\underline{\lambda}}$ has a singular vector of weight $m_1+m_2+m_3-2r$ coming from $V_{m_1+m_2-2r_1} \otimes V_{m_3} \subset (V_{m_1} \otimes V_{m_2}) \otimes V_{m_3}$ in the decomposition of the tensor product into direct sum of irreducible representations. The range of all non-negative integers $r_1$ is given by the following inequalities:

\begin{equation}\label{r_1}
\max(0,r-m_3)\leq r_1 \leq \min(r,\min(m_1,m_2), m_1+m_2 -r)
\end{equation}

\end{proposition}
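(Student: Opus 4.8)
The plan is to apply the Clebsch--Gordan decomposition twice and then collect the resulting numerical constraints. Recall that for non-negative integers $a,b$ one has $V_a \otimes V_b = \bigoplus_{j=0}^{\min(a,b)} V_{a+b-2j}$, and that an irreducible summand isomorphic to $V_c$ occurs in this decomposition --- equivalently, $V_a \otimes V_b$ contains a singular vector of weight $c$ spanning the highest-weight line of that summand --- if and only if $c = a+b-2j$ for some integer $j$ with $0 \le j \le \min(a,b)$.

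First I would record when the summand $V_{m_1+m_2-2r_1}$ appears in $V_{m_1}\otimes V_{m_2}$: by the above this is exactly the condition $0 \le r_1 \le \min(m_1,m_2)$, and one notes in passing that this already forces $m_1+m_2-2r_1 \ge 0$, so it is a genuine dominant weight and no further positivity hypothesis is needed. Next, apply Clebsch--Gordan to $V_{m_1+m_2-2r_1}\otimes V_{m_3}$: its irreducible constituents are the $V_{(m_1+m_2-2r_1)+m_3-2j}$ with $0 \le j \le \min(m_1+m_2-2r_1,\, m_3)$. Requiring one of these weights to equal $m_1+m_2+m_3-2r$ is equivalent to $j = r - r_1$.

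Then I would substitute $j = r - r_1$ into $0 \le j \le \min(m_1+m_2-2r_1,\, m_3)$. The inequality $j \ge 0$ gives $r_1 \le r$; the inequality $j \le m_3$ gives $r_1 \ge r - m_3$; and $j \le m_1+m_2-2r_1$ rearranges to $r_1 \le m_1+m_2-r$. Intersecting these with $0 \le r_1 \le \min(m_1,m_2)$ yields
$$ \max(0,\, r-m_3) \le r_1 \le \min\bigl(r,\, \min(m_1,m_2),\, m_1+m_2-r\bigr), $$
which is precisely (\ref{r_1}). Conversely, any integer $r_1$ in this range meets all the listed constraints, so the corresponding $j = r-r_1$ is admissible in both Clebsch--Gordan steps, and hence an actual singular vector of weight $m_1+m_2+m_3-2r$ arises from $V_{m_1+m_2-2r_1}\otimes V_{m_3}$; thus the displayed range is sharp.

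The argument is essentially bookkeeping, so I do not expect a genuine obstacle; the only points requiring a little care are (i) observing that a weight occurring in a Clebsch--Gordan decomposition is automatically non-negative, so the $\min(m_1,m_2)$ bound is the only dominance restriction that must be imposed, and (ii) tracking which of the three upper (resp.\ lower) bounds each inequality contributes to when assembling the final $\min$ (resp.\ $\max$). One might additionally remark that the bound $r_1 \le m_1+m_2-r$ becomes effective only once $r > \min(m_1,m_2)$, though this refinement is not needed for the statement.
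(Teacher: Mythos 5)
Your proposal is correct and follows essentially the same route as the paper: apply the Clebsch--Gordan decomposition twice, once to locate $V_{m_1+m_2-2r_1}$ inside $V_{m_1}\otimes V_{m_2}$ (giving $0\le r_1\le\min(m_1,m_2)$) and once to $V_{m_1+m_2-2r_1}\otimes V_{m_3}$ (giving $0\le r-r_1\le\min(m_1+m_2-2r_1,m_3)$), then rearrange the resulting inequalities into the stated $\max$/$\min$ form. The only difference is that you spell out the bookkeeping slightly more explicitly than the paper does.
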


\begin{proof}

By the Clebsch–Gordan formula, a singular vector of weight $m_1+m_2-2r_1$ exists in $V_{m_1} \otimes V_{m_2}$ if and only if $0 \leq r_1 \leq \min(m_1,m_2)$. Applying this to the case $(V_{m_1} \otimes V_{m_2}) \otimes V_{m_3}  \supset V_{m_1+m_2-2r_1} \otimes V_{m_3}$, we get the following inequalities:

\begin{equation*}
\begin{cases}

 0 \leq r_1 \leq \min(m_1,m_2)           \\
0 \leq r- r_1 \leq \min(m_1+m_2-2r_1, m_3)  
\end{cases}
\end{equation*}

These inequalities are equivalent to the system (\ref{r_1}) and can be written as:

\begin{equation} \label{sys_ineq}
\begin{cases}
\begin{aligned}

 0 \leq \;&r_1 \\
 r-m_3 \leq  \; &r_1 \\
 
 &r_1 \leq min(m_1,m_2)       \\
 &r_1\leq m_1+m_2-r \\
 &r_1 \leq r  \\

\end{aligned}
\end{cases}
\end{equation}

 To visualize this formula, consider a plane with coordinates $r$ and $r_1$. For each integer $r$, possible integers $r_1$ are lying on or above the lines $r_1=0$, $r_1=r-m_3$ and on or under the lines $r_1=r$, $r_1= \min(m_1,m_2)$, $r_1 = m_1+m_2-r$. The dimension of the subspace of singular vectors of weight $m_1+m_2+m_3-2r$ is then the number of such integers $r_1$. This is illustrated in Figures 1-4.
\end{proof}

\begin{remark}
    Compare the results of this section with the formula in theorem 1 from \cite{Scher}, which gives negative dimensions even in the non-trivial cases. 
\end{remark}

\begin{center}
\begin{minipage}{0.485\linewidth}
\includegraphics[width=\linewidth]{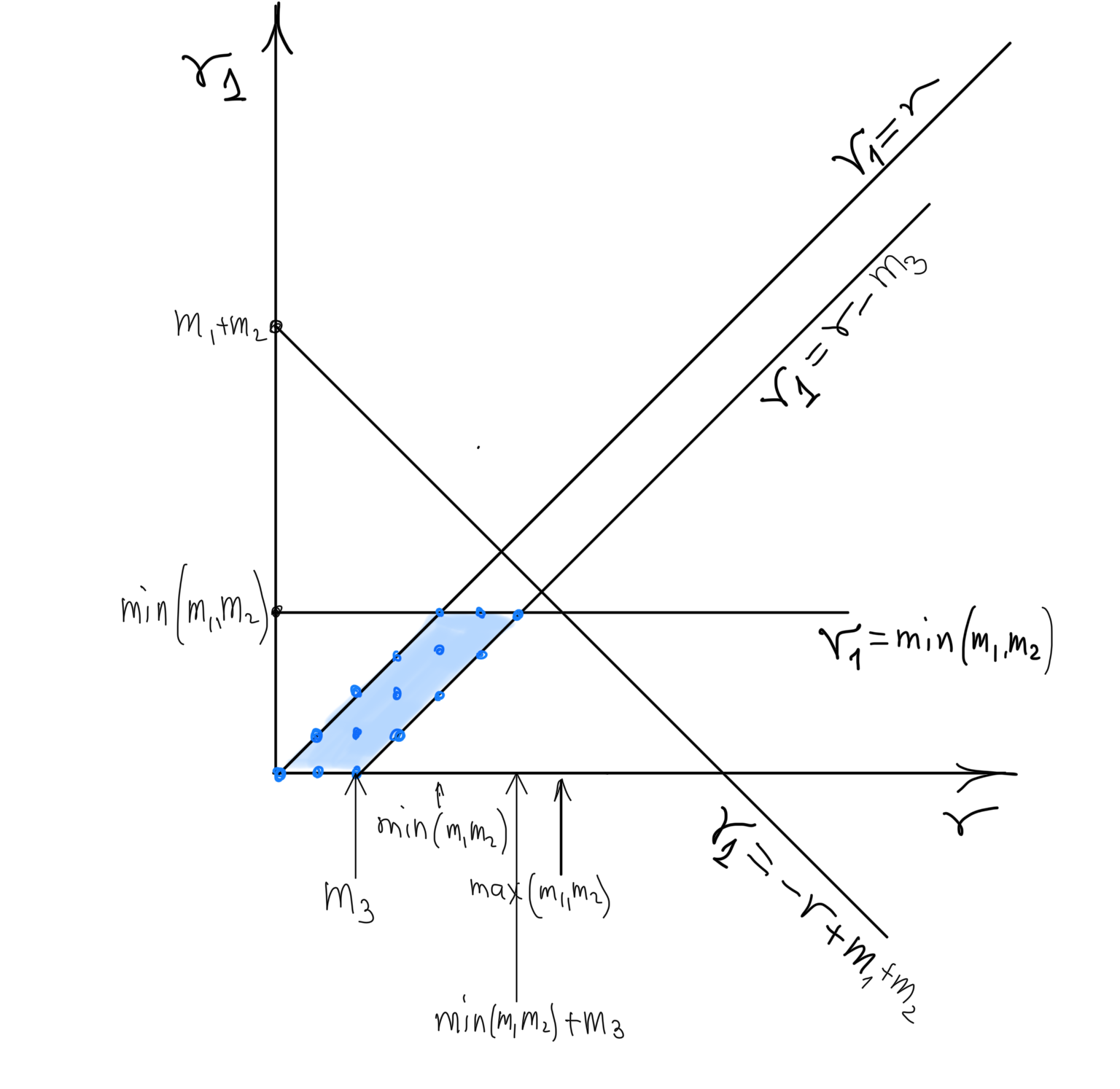}
\captionof{figure}{$\{m_1, m_2\} = \{4,7\}, \: m_3 = 2$}
\end{minipage}%
\hfill
\begin{minipage}{0.49\linewidth}
\includegraphics[width=\linewidth]{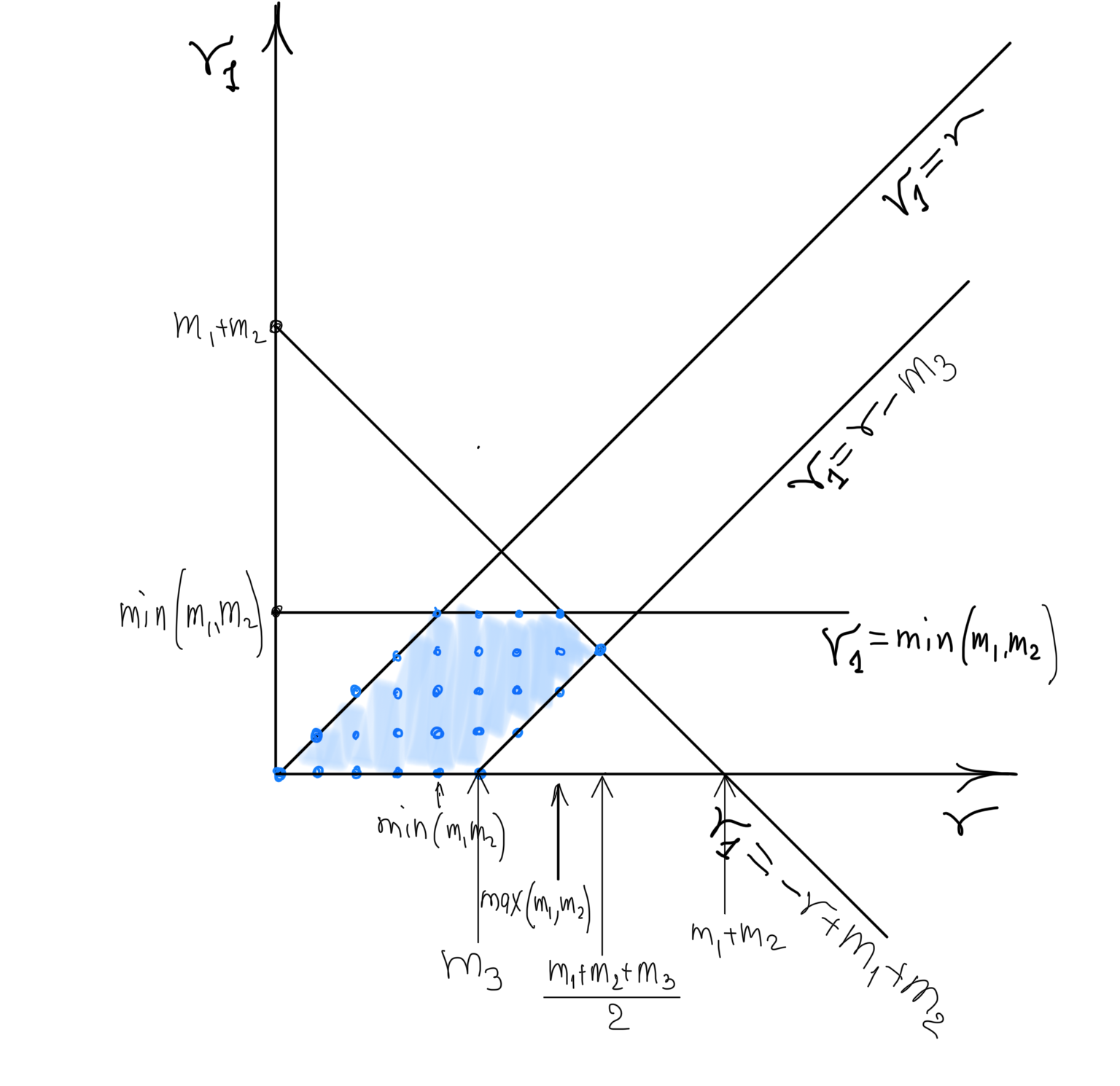}
\captionof{figure}{$\{m_1, m_2\} = \{4,7\}, \: m_3 = 5$}
\end{minipage}
\end{center}

\begin{center}
\begin{minipage}{0.40\linewidth}
\includegraphics[width=\linewidth]{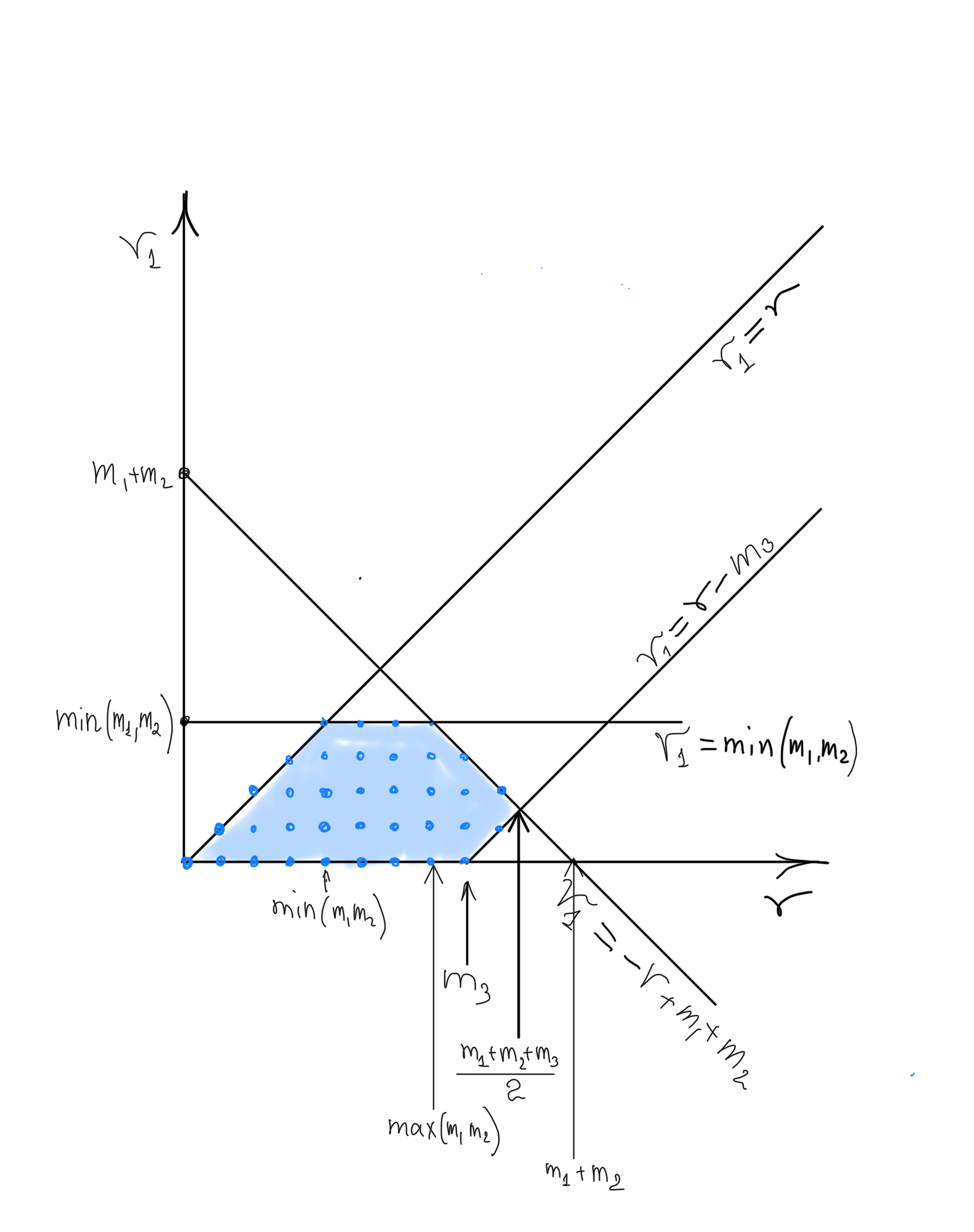}
\captionof{figure}{$\{m_1, m_2\} = \{4,7\}, \: m_3 = 8$}
\end{minipage}
\hfill
\begin{minipage}{0.5\linewidth}
\includegraphics[width=\linewidth]{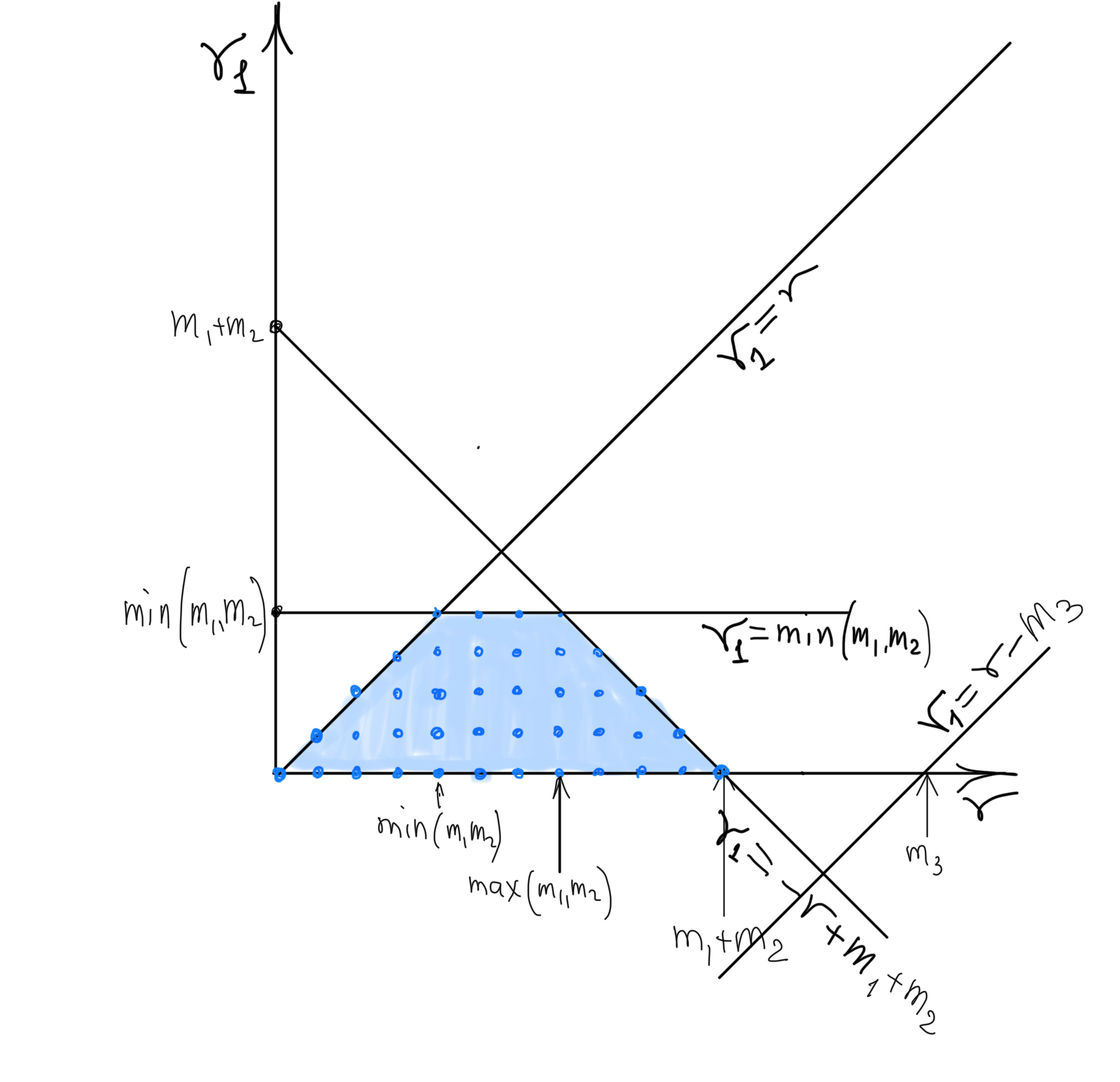}
\captionof{figure}{$\{m_1, m_2\} = \{4,7\}, \: m_3 = 17$}
\end{minipage}
\end{center}

\section{The Shapovalov form \label{Shapovalov}}  

For any non-negative integer $m$ let $V_m$ be the irreducible  $\mathfrak{sl_2}$-module of the highest weight $m$. Let $v_m \in V_m $ be a highest vector of weight $m$, i.e. a vector subject to the conditions $ ev_m = 0, \; hv_m = m v_m$.

\begin{definition} 
For a fixed $v_m$ the Shapovalov form on $V_m$ is the unique symmetric bilinear form $S_m$ such that  
$$ S_m ( v_m, v_m) = 1$$ 
and for any $x,y \in V_{m}$:
$$  S_m( ex, y) = S_m (x, fy) $$

This form is non-degenerate.
\end{definition}

\begin{lemma}\label{shap} The vectors $ f^l v_{m} , \; 0\leq l \leq m  $ are orthogonal with respect to the Shapovalov form.  Moreover, 
$$S(f^l v_{m}, f^p v_{m} )  = \delta_{l,p} \, l! \, (m)_l $$
\end{lemma}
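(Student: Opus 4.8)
The plan is to prove both statements simultaneously by induction on $p$ (say with $p \leq l$ without loss of generality, using symmetry of $S$), computing $S(f^l v_m, f^p v_m)$ via the defining adjointness property $S(ex, y) = S(x, fy)$ to move powers of $f$ across the form and turn them into powers of $e$ acting on the other side. Concretely, I would write $S(f^l v_m, f^p v_m) = S(f \cdot f^{l-1} v_m, f^p v_m)$ and, since $f = $ ... — wait, adjointness goes the other way: I would instead use $S(f^p v_m, f^l v_m) = S(f^p v_m, f \cdot f^{l-1} v_m) = S(e f^p v_m, f^{l-1} v_m)$.

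So the first key step is the standard $\mathfrak{sl}_2$ commutator identity $e f^p v_m = p(m - p + 1) f^{p-1} v_m$, which follows from $[e,f] = h$, $h v_m = m v_m$, $e v_m = 0$ and an easy induction on $p$ (or one can just cite it as well known). Substituting this gives the recursion
$$ S(f^p v_m, f^l v_m) = p(m-p+1)\, S(f^{p-1} v_m, f^{l-1} v_m). $$
The second key step is to iterate this recursion $p$ times. If $l > p$, after $p$ steps the right-hand side becomes a nonzero scalar times $S(v_m, f^{l-p} v_m)$; but $S(v_m, f^{l-p} v_m) = S(e v_m, f^{l-p-1} v_m) \cdot (\text{something})$ — more directly, $S(v_m, f^{l-p} v_m) = S(f^{l-p} v_m, v_m)$ and applying adjointness once, $= S(e f^{l-p} v_m, \cdot)$... the clean way: $S(v_m, f^{k} v_m)$ for $k \geq 1$ equals, by moving one $f$ over, $S(e v_m, f^{k-1} v_m)/(\text{nothing})$ — actually $S(v_m, f^k v_m) = S(v_m, f f^{k-1} v_m) = S(e v_m, f^{k-1}v_m) = 0$ since $e v_m = 0$. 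This kills the off-diagonal terms, giving the factor $\delta_{l,p}$, and in particular proves orthogonality. When $l = p$, the $p$-fold iteration yields
$$ S(f^p v_m, f^p v_m) = \Big(\prod_{j=1}^{p} j(m - j + 1)\Big) S(v_m, v_m) = p!\,(m)_p, $$
where $(m)_p = m(m-1)\cdots(m-p+1)$ is the falling factorial and $S(v_m,v_m) = 1$ by the normalization in the definition. This matches the claimed formula $\delta_{l,p}\, l!\, (m)_l$.

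I do not expect a serious obstacle here; the only mild subtlety is bookkeeping the direction in which adjointness is applied (it is $S(ex,y) = S(x,fy)$, so one peels $f$'s off the \emph{second} argument and they reappear as $e$'s on the \emph{first}), and being careful that $f^{m+1} v_m = 0$ so the indices stay in range $0 \leq l, p \leq m$ throughout — for $l = p = m$ the product $\prod_{j=1}^m j(m-j+1) = m!\,(m)_m = m!\cdot m!$ is still finite and the formula is consistent. One should also note $S$ is symmetric so it suffices to treat $p \leq l$. No external results beyond the definition of $S_m$ and elementary $\mathfrak{sl}_2$ representation theory are needed.
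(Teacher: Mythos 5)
Your proposal is correct and follows essentially the same route as the paper: the identity $e f^p v_m = p(m-p+1) f^{p-1} v_m$ plus the adjointness $S(x,fy)=S(ex,y)$ gives the recursion, which is iterated down to $S(v_m, f^{l-p} v_m)$; the off-diagonal case dies because $e v_m = 0$ and the diagonal case yields $p!\,(m)_p$. Despite some back-and-forth in your write-up about the direction of adjointness, the final bookkeeping is right and matches the paper's argument.
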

\begin{proof}
Let $l \leq p$ be two non-negatuve integers. Using $e f^{l} v_{m} = l (m - l +1) f^{l-1} v_{m} $ we get: 
\begin{multline*}
S(f^l v_{m},  f^p v_{m}) = S(ef^l v_{m}, f^{p-1} v_{m} ) = \\
=l (m - l +1)   S(f^{l -1} v_{m}, f^{p-1} v_{m} ) = ... =  l! \, (m)_l \, S(v_{m}, f^{p-l} v_{m})
\end{multline*}

Since $ S(v_{m}, v_{m}) = 1$, this is equal to $  l! \, (m)_l $, if $l=p$. If $l<p$ we can increase weights one more time to get that  $ S(f^l v_{m},  f^p v_{m})  = l! \, (m)_l \, S(e v_m, f^{p-l-1} v_m) = l! \, (m)_l \, S(0, f^{p-l-1} v_m) = 0 $.

\end{proof}

\begin{definition} 

For any non-negative integers $m_1, ..., m_n$ the Shapovalov form on the $n$-fold tensor product of irreducible  $\mathfrak{sl_2}$-representations $V_{m_1} \otimes ... \otimes V_{m_n}$ is defined to be $S_{m_1} \otimes ... \otimes S_{m_n} $. We denote it simply by $S$ whenever $m_1,..., m_n$ are clear from the context. Since $S_{m_1}, ...,  S_{m_n}$ are non-degenerate, this form is also non-degenerate. 

\end{definition}

It is easy to see that the Gaudin Hamiltonians $H_i$, which act on $V_{m_1} \otimes ... \otimes  V_{m_n} $ 
are self-adjoint operators with respect to the Shapovalov form:

$$ S(H_ix, y) = S(x, H_iy) $$

This explains the importance of this form in the study of the Gaudin model. In particular, in a basis, that is orthonormal with respect to the Shapovalov form, the matrices of the Gaudin Hamiltonians are symmetric. In what follows, we consider such bases, which we will call the Shapovalov bases.

\begin{lemma}  Shapovalov form is consistent with the diagonal action of $\mathfrak{sl_2}$:
 $$ S(e\cdot v, w) = S(v, f \cdot w), \; \text{where} \; v,w \in V_{m_1} \otimes ... \otimes V_{m_n} $$
\end{lemma}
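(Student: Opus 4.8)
The plan is to reduce to pure tensors and then invoke, factor by factor, the single-module identity $S_m(ex,y)=S_m(x,fy)$ that is built into the definition of the Shapovalov form on $V_m$. First I would observe that both sides of the asserted identity are bilinear in the pair $(v,w)$: the form $S=S_{m_1}\otimes\cdots\otimes S_{m_n}$ is bilinear by construction, and the diagonal action of $e$ (resp.\ $f$) is linear. Hence it suffices to check the identity for pure tensors $v=v_1\otimes\cdots\otimes v_n$ and $w=w_1\otimes\cdots\otimes w_n$ with $v_j,w_j\in V_{m_j}$.

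Next I would recall that the diagonal action of $\mathfrak{sl}_2$ on the $n$-fold tensor product is the Leibniz-rule sum coming from the standard coproduct on $U(\mathfrak{sl}_2)$, so
$$ e\cdot v=\sum_{i=1}^{n} v_1\otimes\cdots\otimes (ev_i)\otimes\cdots\otimes v_n,\qquad f\cdot w=\sum_{i=1}^{n} w_1\otimes\cdots\otimes (fw_i)\otimes\cdots\otimes w_n. $$
Substituting the first expression into the definition of the tensor-product form gives
$$ S(e\cdot v,\,w)=\sum_{i=1}^{n}\Big(\prod_{j\neq i} S_{m_j}(v_j,w_j)\Big)\,S_{m_i}(ev_i,w_i). $$
Now I apply the defining property $S_{m_i}(ex,y)=S_{m_i}(x,fy)$ of the Shapovalov form on the $i$-th factor to each summand, replacing $S_{m_i}(ev_i,w_i)$ by $S_{m_i}(v_i,fw_i)$. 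Reassembling the products, the right-hand side becomes $\sum_{i=1}^{n}\big(\prod_{j\neq i}S_{m_j}(v_j,w_j)\big)S_{m_i}(v_i,fw_i)$, which is precisely $S(v,\,f\cdot w)$ by the displayed formula for $f\cdot w$. By bilinearity this extends to all $v,w\in V_{m_1}\otimes\cdots\otimes V_{m_n}$, completing the argument.

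I do not expect a genuine obstacle here; the statement is essentially a bookkeeping exercise. The only points that deserve care are: (i) that the diagonal action really is the Leibniz-type sum written above (this is the standard iterated coproduct, and is the same action appearing in Observation~\ref{obs}), and (ii) that passing from pure tensors to arbitrary elements is legitimate, which it is because everything in sight is bilinear in $(v,w)$. One could additionally note that applying the same computation with $e$ and $f$ interchanged (using the symmetry of $S$, whence $S_{m_i}(fx,y)=S_{m_i}(x,ey)$ on each factor) yields $S(f\cdot v,w)=S(v,e\cdot w)$, but this is not needed for the stated lemma.
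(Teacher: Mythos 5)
Your argument is correct and is essentially identical to the paper's proof: reduce to decomposable tensors by bilinearity, expand the diagonal action as the Leibniz sum, and apply the defining adjointness $S_{m_i}(ex,y)=S_{m_i}(x,fy)$ factor by factor before reassembling. No gaps.
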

\begin{proof} 
It is enough to check this equality on the decomposable tensors: 
 $ S(e \cdot (v_1 \otimes ... \otimes v_n) , w_1 \otimes ... \otimes w_n) = S( e v_1 \otimes v_2 \otimes ... \otimes v_n, w_1 \otimes ... \otimes w_n ) + S( v_1 \otimes e v_2 \otimes ... \otimes v_n, w_1 \otimes ... \otimes w_n  ) + ... + S( v_1 \otimes  v_2 \otimes ... \otimes  e v_n ,  w_1 \otimes ... \otimes w_n )  =  S( e v_1 , w_1) S( v_2 , w_2) ... S(v_n, w_n ) + ... + S( v_1, w_1) S(v_2, w_2) ... S(e v_n, w_n) =  S(  v_1 , f w_1) S( v_2 , w_2) ... S(v_n, w_n ) + ... + S( v_1, w_1) S(v_2, w_2) ... S( v_n, f w_n) = S( v_1 \otimes v_2 \otimes ... \otimes v_n, f w_1 \otimes ... \otimes w_n ) + S( v_1 \otimes  v_2 \otimes ... \otimes v_n, w_1 \otimes f w_2 \otimes  ... \otimes w_n  ) + ... + S( v_1 \otimes  v_2 \otimes ... \otimes  v_n ,  w_1 \otimes ... \otimes  f w_n )  =  S(v_1 \otimes ... \otimes v_n  , f \cdot (w_1 \otimes ... \otimes w_n)) .$
\end{proof}

\begin{lemma}\label{orth} Let $V_i, V_j $ be irreducible subrepresentations of $V_{m_1} \otimes ... \otimes V_{m_n}$  of highest weights $i$ and $j$ respectively, $i \neq j$. Then $$ S(V_i , V_j ) = 0  $$
\end{lemma}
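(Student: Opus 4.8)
The plan is to use the self-adjointness of a suitable $\mathfrak{sl}_2$-element with respect to the Shapovalov form together with the fact that $V_i$ and $V_j$ sit inside different isotypic components, hence are eigenspaces for the Casimir with distinct eigenvalues. Concretely, let $C = ef + fe + \tfrac12 h^2$ act diagonally on $V_{m_1}\otimes\dots\otimes V_{m_n}$. On an irreducible subrepresentation of highest weight $k$, the Casimir acts by the scalar $c_k := \tfrac{k^2}{2}+k$ (or whatever normalization matches $S_k(ex,y)=S_k(x,fy)$; the precise constant is irrelevant, only that $k\mapsto c_k$ is injective on non-negative integers). So $C$ acts on $V_i$ by $c_i$ and on $V_j$ by $c_j$, with $c_i\neq c_j$ since $i\neq j$.

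The key step is to show $C$ is self-adjoint with respect to $S$. This follows from the previous lemma (Shapovalov form is consistent with the diagonal $\mathfrak{sl}_2$-action): for $v,w \in V_{m_1}\otimes\dots\otimes V_{m_n}$ we have $S(e\cdot v, w) = S(v, f\cdot w)$ and symmetrically $S(f\cdot v,w) = S(v,e\cdot w)$, while $S(h\cdot v, w) = S(v, h\cdot w)$ because $h$ is self-dual. Writing out $S(C\cdot v, w) = S(ef\cdot v,w) + S(fe\cdot v,w) + \tfrac12 S(h^2\cdot v,w)$ and moving one factor at a time across the form, each term becomes the corresponding term of $S(v, C\cdot w)$; note $S(ef\cdot v,w) = S(f\cdot v, f\cdot w) = S(v, ef\cdot w)$ and likewise for the other terms, so $S(C\cdot v,w) = S(v,C\cdot w)$.

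Now take $x\in V_i$ and $y\in V_j$. Then
\begin{equation*}
c_i\, S(x,y) = S(C\cdot x, y) = S(x, C\cdot y) = c_j\, S(x,y),
\end{equation*}
so $(c_i - c_j) S(x,y) = 0$, and since $c_i\neq c_j$ we conclude $S(x,y)=0$. As $x,y$ were arbitrary, $S(V_i,V_j)=0$.

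I do not expect any serious obstacle here; the only point requiring a little care is pinning down the constant in the Casimir eigenvalue and checking that $k\mapsto c_k$ is injective on $\mathbb{Z}_{\geq 0}$ — but for the standard normalization $c_k = \tfrac12 k(k+2)$ this is manifest, and in any case one only needs injectivity, which holds for any strictly increasing quadratic. One could alternatively phrase the argument without invoking the Casimir at all, by using an explicit highest-weight filtration, but the Casimir argument is cleaner and self-contained given the lemmas already established.
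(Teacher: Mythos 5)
Your proof is correct, but it follows a genuinely different route from the one in the paper. The paper argues directly on the weight bases $\{f^l v_i\}$ and $\{f^p v_j\}$ of the two subrepresentations: for $l\neq p$ it repeatedly applies the adjointness $S(e\cdot x,y)=S(x,f\cdot y)$ to climb up to the highest weight vector and land on $S(e v_i,\cdot)=0$, while for $l=p$ it descends in weight until it falls off the bottom of the shorter string, $f^{i+1}v_i=0$. You instead invoke the diagonal Casimir $C=ef+fe+\tfrac12 h^2$, show it is self-adjoint for $S$ as a corollary of the same consistency lemma, and conclude by the standard distinct-eigenvalue argument with $c_k=\tfrac12 k(k+2)$. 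Both proofs rest on the same prerequisite (consistency of $S$ with the diagonal action), but yours is shorter and more conceptual, avoids the two-case bookkeeping, and generalizes verbatim to any semisimple $\mathfrak g$; the paper's is more elementary and self-contained in that it never needs to identify the Casimir eigenvalue. One small point to tighten: the self-adjointness of $\Delta(h)$ should not be justified by ``$h$ is self-dual'' (that refers to the Killing form, not $S$); it follows either from $\Delta(h)=\Delta(e)\Delta(f)-\Delta(f)\Delta(e)$ together with the $e$--$f$ adjointness, or from the orthogonality of weight spaces, which Lemma \ref{shap} gives factorwise. With that remark your argument is complete.
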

\begin{proof}
It is enough to check this equality on the basis vectors $ \{ f^l v_i , 0 \leq l \leq i \} $ and $ \{ f^p v_j , 0 \leq p \leq j \}$ of $V_i$ and $V_j$ respectively. If $l \neq p$ (wlog $l<p$) then we gradually increase weights as in lemma \ref{shap}: 
\begin{multline*}
S(f^l v_{i},  f^p v_{j}) = S(ef^l v_{i}, f^{p-1} v_{j} ) =  l (i - l +1)   S(f^{l -1} v_{i}, f^{p-1} v_{j} ) = \\  \frac{l! \: i!}{(i-l)!}S(v_{i}, f^{p-l} v_{j}) =  \frac{l! \: i!}{(i-l)!} S (e v_i, f^{p-l-1} v_j) = \frac{l! \: i!}{(i-l)!} S (0, f^{p-l-1} v_j) = 0 
\end{multline*}

If  $l = p$ we decrease weights in the same way (wlog assuming $i < j$ ):
\begin{multline*}
S(f^l v_{i},  f^l v_{j}) = \frac{1}{(l+1)(j-l)} S( f^{l} v_{i}, e f^{l+1} v_{j} ) =  \frac{1}{(l+1)(j-l)}  S(f^{l + 1} v_{i}, f^{l+1} v_{j} ) = \\  \frac{1}{(i)_{i-l} (j-l)_{i-l}} S(f^i v_{i}, f^{i} v_{j}) =  \frac{1}{(i)_{i-l+1} (j-l)_{i-l+1}} S (f^i v_i, e f^{i+1} v_j) = \\ \frac{1}{(i)_{i-l+1} (j-l)_{i-l+1}}  S (f^{i+1} v_i, f^{i+1} v_j) = \frac{1}{(i)_{i-l+1} (j-l)_{i-l+1}}  S (0 , f^{i+1} v_j) =  0 
\end{multline*} 

\end{proof} 

Let $V_{m_1}, V_{m_2}$ be the irreducible representations of  $\mathfrak{sl_2}$ of highest weights $m_1$ and $m_2$ respectively, let us also choose highest vectors $v_{m_1}$ and $v_{m_2}$ of weights $m_1$ and $m_2$ in these spaces $V_{m_1}$ and $V_{m_2}$. Compare the following formula with the similar one (though, with a slight misprint in the numerator) in \cite{Var}.

 \begin{proposition} Fixing non-negative integer $l$ such that $l \leq min(m_1,m_2) $ we have a singular vector in $V_{m_1} \otimes V_{m_2} $ of weight $m_1 + m_2 -2l$ given by the following formula:

\begin{equation} \label{eq:first_formula}
\langle v_{m_1}, v_{m_2} \rangle_l :=  \sum_{p=0}^{l}  \frac{ (-1)^p  }  {  (m_1)_p (m_2)_{l-p} }  \frac{f^p v_{m_1}}{p! } \otimes \frac{f^{l-p} v_{m_2}}{ (l-p)!}
\end{equation}

where $(x)_n := x(x-1)...(x-n+1)$ is the Pochhammer symbol.

\end{proposition}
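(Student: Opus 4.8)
The plan is to verify the two assertions directly from the formula: that $\langle v_{m_1}, v_{m_2}\rangle_l$ is homogeneous of $h$-weight $m_1+m_2-2l$, and that it is annihilated by the diagonal action of $e$.

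The weight claim is immediate. The vector $f^p v_{m_1}$ has $h$-weight $m_1-2p$ and $f^{l-p} v_{m_2}$ has $h$-weight $m_2-2(l-p)$, so each summand of \eqref{eq:first_formula} is an $h$-eigenvector with eigenvalue $(m_1-2p)+(m_2-2l+2p)=m_1+m_2-2l$, independent of $p$; hence the sum is an eigenvector of weight $m_1+m_2-2l$. It is nonzero: since $l\le\min(m_1,m_2)$, all Pochhammer symbols $(m_1)_p$, $(m_2)_{l-p}$ for $0\le p\le l$ are nonzero, the tensors $f^pv_{m_1}\otimes f^{l-p}v_{m_2}$ are nonzero, and distinct summands are linearly independent, so already the $p=0$ term $\tfrac{1}{(m_2)_l}\,v_{m_1}\otimes\tfrac{f^l v_{m_2}}{l!}$ survives.

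For the singular-vector claim I apply $e$ using the Leibniz rule $e\cdot(x\otimes y)=ex\otimes y+x\otimes ey$ together with the standard $\mathfrak{sl_2}$ identity $e f^k v_m=k(m-k+1)f^{k-1}v_m$ (already used in Lemma \ref{shap}). Writing $a_p:=\tfrac{(-1)^p}{(m_1)_p(m_2)_{l-p}}$, the $p$-th summand of $e\cdot\langle v_{m_1},v_{m_2}\rangle_l$ equals
$$ a_p\!\left[\frac{(m_1-p+1)\,f^{p-1}v_{m_1}}{(p-1)!}\otimes\frac{f^{l-p}v_{m_2}}{(l-p)!}+\frac{f^{p}v_{m_1}}{p!}\otimes\frac{(m_2-l+p+1)\,f^{l-p-1}v_{m_2}}{(l-p-1)!}\right], $$
where the boundary contributions ($p=0$ in the first term, $p=l$ in the second) vanish because of the hidden factors $p$, respectively $l-p$, so they may be discarded before re-indexing. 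Collecting the coefficient of the tensor $\tfrac{f^q v_{m_1}}{q!}\otimes\tfrac{f^{l-1-q}v_{m_2}}{(l-1-q)!}$ (for $0\le q\le l-1$) gives $a_{q+1}(m_1-q)+a_q(m_2-l+q+1)$; using $(m_1)_{q+1}=(m_1)_q(m_1-q)$ and $(m_2)_{l-q}=(m_2)_{l-q-1}(m_2-l+q+1)$, both summands reduce to $\pm\tfrac{(-1)^q}{(m_1)_q(m_2)_{l-q-1}}$ with opposite signs, so they cancel and $e\cdot\langle v_{m_1},v_{m_2}\rangle_l=0$.

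I do not expect a genuine obstacle: the computation is a single telescoping cancellation, and the only delicate points are the bookkeeping of the two Pochhammer identities and correctly discarding the $p=0$ and $p=l$ boundary terms before re-indexing. A slicker but less self-contained alternative would be to recognize the sum as (a multiple of) the image of the lowest-weight vector under the Clebsch--Gordan intertwiner $V_{m_1+m_2-2l}\hookrightarrow V_{m_1}\otimes V_{m_2}$ and invoke its existence, but the direct check above is the shortest route and additionally pins down the precise normalization in \eqref{eq:first_formula}.
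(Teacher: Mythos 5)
Your proof is correct and follows essentially the same route as the paper's: a direct weight count followed by applying $e$ diagonally with the identity $ef^k v_m = k(m-k+1)f^{k-1}v_m$ and observing the telescoping cancellation of adjacent terms. Your explicit check that the vector is nonzero is a small addition the paper omits, but the core argument is identical.
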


\begin{proof}
    Clearly, the vector $\langle v_{m_1}, v_{m_2} \rangle_l $ has weight $(m_1 -2p) + (m_2-2(l-p)) = m_1+m_2 -2l $. To prove that it is singular we should check that it vanishes as $e$  acts diagonally on it:
\begin{multline*} e \cdot \langle v_{m_1}, v_{m_2} \rangle_l  = \sum_{p=1}^{l} \frac{ (-1)^p  }  {  (m_1)_p (m_2)_{l-p} }  \frac{e f^p v_{m_1}}{p! } \otimes \frac{f^{l-p} v_{m_2}}{ (l-p)!} + \sum_{p=0}^{l-1}  \frac{ (-1)^p  }  {  (m_1)_p (m_2)_{l-p} }  \frac{ f^p v_{m_1}}{p! } \otimes \frac{e f^{l-p} v_{m_2}}{ (l-p)!}  = \\
\sum_{p=1}^{l} \frac{ (-1)^p  }  {  (m_1)_{p-1} (m_2)_{l-p} }  \frac{f^{p-1} v_{m_1}}{(p-1)! } \otimes \frac{f^{l-p} v_{m_2}}{ (l-p)!} + \sum_{p=0}^{l-1}  \frac{ (-1)^p  }  {  (m_1)_p (m_2)_{l-p-1} }  \frac{f^p v_{m_1}}{p! } \otimes \frac{ f^{l-p-1} v_{m_2}}{ (l-p-1)!} = \\
\sum_{p=1}^{l} \frac{ (-1)^p  }  {  (m_1)_{p-1} (m_2)_{l-p} }  \frac{f^{p-1} v_{m_1}}{(p-1)! } \otimes \frac{f^{l-p} v_{m_2}}{ (l-p)!} + \sum_{p=1}^{l}  \frac{ (-1)^{p-1}  }  {  (m_1)_{p-1} (m_2)_{l-p} }  \frac{f^{p-1} v_{m_1}}{(p-1)! } \otimes \frac{ f^{l-p} v_{m_2}}{ (l-p)!} = 0
\end{multline*}
\end{proof}

\begin{lemma} The vectors $\langle v_{m_1}, v_{m_2} \rangle_l, \; 0\leq l \leq min(m_1,m_2)  $ are orthogonal with respect to the Shapovalov form.
\end{lemma}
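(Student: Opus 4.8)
The plan is to avoid any direct computation with the explicit formula and instead reduce the statement to Lemma~\ref{orth}. The key observation is that the singular vectors $\langle v_{m_1}, v_{m_2}\rangle_l$ for different $l$ have pairwise different weights, hence they generate irreducible subrepresentations with pairwise different highest weights, and such subrepresentations are mutually Shapovalov-orthogonal by Lemma~\ref{orth}.

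First I would record that, by the previous Proposition, for each admissible $l$ the vector $w_l := \langle v_{m_1}, v_{m_2}\rangle_l$ is a singular vector of weight $m_1+m_2-2l$ in $V_{m_1}\otimes V_{m_2}$. Then I would check $w_l\neq 0$: in the expansion $(\ref{eq:first_formula})$ the summands for distinct $p$ involve the linearly independent vectors $f^p v_{m_1}$, $0\le p\le l$, in the first tensor factor, so no cancellation can occur; e.g. the $p=0$ term equals $\frac{1}{(m_2)_l}\, v_{m_1}\otimes \frac{1}{l!}\, f^l v_{m_2}$, which is nonzero since $l\le m_2$ gives $f^l v_{m_2}\neq 0$ and $(m_2)_l\neq 0$.

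Next, since $V_{m_1}\otimes V_{m_2}$ is a finite-dimensional $\mathfrak{sl}_2$-module it is completely reducible, so the nonzero singular vector $w_l$ of the dominant weight $m_1+m_2-2l$ generates an irreducible submodule $W_l\cong V_{m_1+m_2-2l}$ containing $w_l$. Now if $l\neq l'$ then $m_1+m_2-2l\neq m_1+m_2-2l'$, so $W_l$ and $W_{l'}$ are irreducible subrepresentations of $V_{m_1}\otimes V_{m_2}$ with distinct highest weights; Lemma~\ref{orth} then gives $S(W_l, W_{l'})=0$, and in particular $S(w_l, w_{l'})=0$, which is the assertion.

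I do not expect any serious obstacle: the only points requiring a line of justification are the non-vanishing of $w_l$ and the appeal to complete reducibility. As an alternative route, one could expand $S(w_l, w_{l'})$ by bilinearity and evaluate each term using Lemma~\ref{shap} together with the fact that $S$ on the tensor product is the tensor product of the forms $S_{m_i}$; this also works but is more cumbersome, so I would keep the representation-theoretic argument.
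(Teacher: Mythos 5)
Your argument is correct and is essentially the paper's own proof: the authors likewise observe that for different $l$ the vectors lie in irreducible subrepresentations of different highest weights and invoke Lemma~\ref{orth}. Your additional verification that $w_l\neq 0$ and the explicit appeal to complete reducibility are just a more careful writing of the same step.
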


\begin{proof}
For different $l$ vectors $\langle v_{m_1}, v_{m_2} \rangle_l$ are lying in subrepresentations of different highest weights, so the result follows from lemma \ref{orth}.

\end{proof}

\begin{proposition}
$\norm{ \langle v_{m_1}, v_{m_2} \rangle _{l}}^2_S = \frac{\binom{m_1+m_2-l+1}{l}} {(m_1)_l (m_2)_l} $
\end{proposition}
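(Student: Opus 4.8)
The plan is to evaluate $S(\langle v_{m_1}, v_{m_2}\rangle_l, \langle v_{m_1}, v_{m_2}\rangle_l)$ directly from formula (\ref{eq:first_formula}), using that the Shapovalov form on the tensor product factors as $S = S_{m_1}\otimes S_{m_2}$ and that Lemma \ref{shap} tells us precisely how each factor pairs powers of $f$.

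First I would expand the squared norm bilinearly into a double sum over indices $p$ and $q$ ranging over $0,\dots,l$. Because $S = S_{m_1}\otimes S_{m_2}$ and, by Lemma \ref{shap}, $S_{m_i}(f^a v_{m_i}, f^b v_{m_i}) = \delta_{a,b}\,a!\,(m_i)_a$, every term with $p\neq q$ is annihilated by one of the two tensor factors, so the double sum collapses to the diagonal $p=q$. Substituting $S_{m_1}(f^p v_{m_1}, f^p v_{m_1}) = p!\,(m_1)_p$ and $S_{m_2}(f^{l-p}v_{m_2}, f^{l-p}v_{m_2}) = (l-p)!\,(m_2)_{l-p}$ and cancelling the squared denominators against these, one is left with
$$\norm{\langle v_{m_1}, v_{m_2}\rangle_l}^2_S = \frac{1}{l!}\sum_{p=0}^{l}\binom{l}{p}\frac{1}{(m_1)_p\,(m_2)_{l-p}}.$$

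The next step is purely algebraic. Since $l\le\min(m_1,m_2)$, I would rewrite $\tfrac{1}{(m_1)_p} = \tfrac{(m_1-p)!}{(m_1-l)!}\cdot\tfrac{1}{(m_1)_l}$ and $\tfrac{1}{(m_2)_{l-p}} = \tfrac{(m_2-l+p)!}{(m_2-l)!}\cdot\tfrac{1}{(m_2)_l}$, pull the constant factor $\tfrac{1}{(m_1)_l(m_2)_l}$ out of the sum, and observe that the factors $l!$ and $\binom{l}{p}$ are exactly what is needed to turn the remaining falling factorials into $\binom{m_1-p}{l-p}\binom{m_2-l+p}{p}$. So the proposition becomes equivalent to the identity
$$\sum_{p=0}^{l}\binom{m_1-p}{l-p}\binom{m_2-l+p}{p} = \binom{m_1+m_2-l+1}{l}.$$

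This identity carries the real content, and proving it is the main (though classical) obstacle. The clean route, which I would take, is via generating functions: $\binom{m_1-p}{l-p}$ is the coefficient of $w^{l-p}$ in $(1-w)^{-(m_1-l+1)}$ and $\binom{m_2-l+p}{p}$ is the coefficient of $w^{p}$ in $(1-w)^{-(m_2-l+1)}$ (both legitimate since $m_1,m_2\ge l$), hence the left-hand side is the coefficient of $w^{l}$ in $(1-w)^{-(m_1+m_2-2l+2)}$, which equals $\binom{m_1+m_2-l+1}{l}$. Alternatively one may simply invoke the Vandermonde-type identity $\sum_j\binom{a+j}{j}\binom{b-j}{c-j} = \binom{a+b+1}{c}$ after the substitution $j=l-p$. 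Combining this with the two reduction steps yields the asserted value of the norm.
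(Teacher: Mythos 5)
Your proposal is correct and follows essentially the same route as the paper: expand the norm, use orthogonality of the $f^p v_{m}$ to keep only the diagonal terms, rewrite the resulting sum as $\frac{1}{(m_1)_l(m_2)_l}\sum_p\binom{m_1-p}{l-p}\binom{m_2-l+p}{p}$, and evaluate it by a Vandermonde convolution. The only cosmetic difference is that the paper proves the final identity via upper-index negation plus Chu--Vandermonde, whereas you use the equivalent generating-function form.
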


\begin{proof}
    Since vectors of different weights in an irreducible subrepresentation are orthogonal we have 
\begin{multline*}
 \norm{ \langle v_{m_1}, v_{m_2} \rangle _{l}}^2_S =  \norm{\sum_{p=0}^{l}  \frac{ (-1)^p  }  {  (m_1)_p (m_2)_{l-p} }  \frac{f^p v_{m_1}}{p! } \otimes \frac{f^{l-p} v_{m_2}}{ (l-p)!}}^2_S =\\  \sum_{p=0}^{l}  \frac{\norm{f^p v_{m_1}}^2_S \norm{f^{l-p}{m_2}}^2_S}{((m_1)_p p! (m_2)_{l-p} (l-p)!)^2 } =  \\ 
 \sum_{p=0}^{l}  \frac{1}{(m_1)_p p! (m_2)_{l-p} (l-p)! } = \frac{1}{(m_1)_l (m_2)_l } \sum_{p=0}^{l} \frac{(m_1 - p)_{l-p} }{(l-p)!} \frac{(m_2-l+p)_p}{p!} = \\
 \frac{1 }{(m_1)_l (m_2)_l } \sum_{p=0}^{l} \binom{m_1 - p }{l-p} \binom{m_2-l+p}{p} = \\ \frac{(-1)^l}{(m_1)_l (m_2)_l } \sum_{p=0}^{l} \binom{ - m_1 + l - 1  }{l-p} \binom{ -m_2+ l - 1}{p}  = 
   (-1)^l \frac{\binom{- m_1 - m_2 + 2l- 2}{l}} {(m_1)_l (m_2)_l }  = \frac{\binom{m_1  + m_2  -l + 1 }{l}} {(m_1)_l (m_2)_l } 
\end{multline*}
where we used $ \binom{m+p-1}{p} = (-1)^p \binom{-m}{p} $ and Chu-Vandermonde identity.
\end{proof}

\begin{corollary}\label{norm}
   \begin{equation}
        \norm{ \langle \langle v_{m_1}, v_{m_2} \rangle_{r_1}, v_{m_3} \rangle_{r-r_1}}^2_S = \frac{\binom{m_1+m_2-r_1+1}{r_1}} {(m_1)_{r_1} (m_2)_{r_1}} \, \frac{\binom{m_1+m_2+m_3-r-r_1+1}{r-r_1}} {(m_1+m_2-2r_1)_{r-r_1} (m_3)_{r-r_1}}  
    \end{equation}
\end{corollary}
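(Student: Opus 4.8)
The plan is to derive this directly from the two preceding propositions by applying the pairing construction \eqref{eq:first_formula} twice, the one point requiring care being that $\langle v_{m_1},v_{m_2}\rangle_{r_1}$ is \emph{not} normalized to have Shapovalov norm $1$; tracking that normalization factor will produce exactly the first fraction on the right-hand side.

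Concretely, I would set $w_0:=\langle v_{m_1},v_{m_2}\rangle_{r_1}\in V_{m_1}\otimes V_{m_2}$. By the proposition exhibiting $\langle v_{m_1},v_{m_2}\rangle_{r_1}$ as a singular vector, $w_0$ is a highest-weight vector for the diagonal action of weight $\mu:=m_1+m_2-2r_1$, generating an irreducible subrepresentation $W\cong V_\mu$ with $W$ spanned by $\{f^k w_0\}_{0\le k\le \mu}$. By the proposition computing $\norm{\langle v_{m_1},v_{m_2}\rangle_{l}}_S^2$, we have $S(w_0,w_0)=\alpha$ with $\alpha:=\binom{m_1+m_2-r_1+1}{r_1}\big/\big((m_1)_{r_1}(m_2)_{r_1}\big)$. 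Running the computation in the proof of Lemma \ref{shap} verbatim with $w_0$ in place of $v_m$ (the only inputs are $ew_0=0$, $hw_0=\mu w_0$, and $ef^kw_0=k(\mu-k+1)f^{k-1}w_0$) gives $S(f^kw_0,f^lw_0)=\delta_{k,l}\,k!\,(\mu)_k\,\alpha$. Since the construction \eqref{eq:first_formula} only needs a highest-weight vector in its first slot together with the value of that slot's highest weight, it applies to $W\otimes V_{m_3}$ with $w_0$ and $\mu$:
$$\langle w_0, v_{m_3}\rangle_{r-r_1}=\sum_{p=0}^{r-r_1}\frac{(-1)^p}{(\mu)_p(m_3)_{r-r_1-p}}\,\frac{f^pw_0}{p!}\otimes\frac{f^{r-r_1-p}v_{m_3}}{(r-r_1-p)!},$$
which is a singular vector of weight $\mu+m_3-2(r-r_1)=m_1+m_2+m_3-2r$. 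Then I would repeat, essentially line for line, the computation in the proof of the proposition for $\norm{\langle v_{m_1},v_{m_2}\rangle_l}_S^2$: expand $S$ on $(V_{m_1}\otimes V_{m_2})\otimes V_{m_3}$ as the tensor of the factor forms, use the orthogonality relations above (namely $S(f^pw_0,f^{p'}w_0)=\delta_{p,p'}\,p!\,(\mu)_p\,\alpha$ together with Lemma \ref{shap} for $V_{m_3}$) to kill all cross terms, and collapse the resulting single sum $\sum_p \big((\mu)_p\,p!\,(m_3)_{r-r_1-p}(r-r_1-p)!\big)^{-1}$ by the same binomial manipulation and Chu--Vandermonde identity used there. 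The factor $\alpha$ simply pulls out in front, so
$$\norm{\langle w_0, v_{m_3}\rangle_{r-r_1}}_S^2=\alpha\cdot\frac{\binom{\mu+m_3-(r-r_1)+1}{r-r_1}}{(\mu)_{r-r_1}(m_3)_{r-r_1}}.$$

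Finally I would substitute $\mu=m_1+m_2-2r_1$ and the value of $\alpha$, using $\mu+m_3-(r-r_1)+1=m_1+m_2+m_3-r-r_1+1$, which reproduces the stated formula. There is essentially no new obstacle here: the only thing to get right is the normalization bookkeeping — recognizing that because $\langle v_{m_1},v_{m_2}\rangle_{r_1}$ has Shapovalov norm $\alpha\ne 1$ rather than $1$, the second application of the norm formula picks up the factor $\alpha$, and that $\alpha$ is precisely the first fraction in the statement; everything else is a rerun of computations already carried out in the preceding propositions, so no nontrivial difficulty remains.
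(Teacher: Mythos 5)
Your proof is correct and is exactly the argument the paper intends: the corollary is stated without proof as an immediate consequence of the preceding norm proposition applied twice, and the only genuine content is the normalization bookkeeping you carry out — that $S(f^k w_0, f^l w_0)=\delta_{k,l}\,k!\,(\mu)_k\,\alpha$ with $\alpha=\norm{\langle v_{m_1},v_{m_2}\rangle_{r_1}}_S^2$, so the factor $\alpha$ pulls out of the second application and becomes the first fraction. No issues.
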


\begin{proposition}\label{prop_basis}

Let fix non-negative integers $m_1,m_2,m_3,r$. For non-negative integers $r_1, i,j$ we will use the following notation: 

\begin{equation}\label{notation}
C^{\,r_1}_{i,j} := \frac{1}{(m_1)_i (m_2)_{r_1-i} \, i! (r_1-i)! } \frac{1}{(m_1+m_2-2r_1)_j (m_3)_{r-r_1-j} \, j! (r-r_1-j)! } 
\end{equation}

\begin{equation}\label{notation_norm}
N(r_1) := \norm{ \langle \langle v_{m_1}, v_{m_2} \rangle_{r_1}, v_{m_3} \rangle_{r-r_1})}_S
\end{equation}

For any non-negative integers $m_1, m_2, m_3, r$ and admissible $r_1$ the vector

\begin{equation}\label{eq:basis}
 w_{r_1}  = \frac{1}{N(r_1)} \sum_{i=0}^{r_1}
 \sum_{j=0}^{r - r_1} (-1)^{i+j} \, C^{\,r_1}_{i,j} \; f^j (f^{i} v_{m_1} \otimes f^{r_1-i} v_{m_2}) \otimes f^{r-r_1-j} v_{m_3}
\end{equation}

is a singular vector of weight  $m_1 + m_2 + m_3 - 2r$ coming from the summand $V_{m_1 +m_2-2r_1} \otimes V_{m_3} $ in the decomposition of $(V_{m_1} \otimes V_{m_2}) \otimes V_{m_3}$ into the sum of irreducible $\mathfrak{sl_2}$-representations. 

 As before, $(x)_n := x(x-1)...(x-n+1)$ is the Pochhammer symbol.

 Moreover, vectors $\{w_i,  \max(0,r-m_3) \leq i \leq  \min(r,\min(m_1,m_2), m_1+m_2 -r)$ form an orhonormal basis of the space $(V_{m_1} \otimes V_{m_2} \otimes V_{m_3})^{\text{sing}}_{m_1+m_2+m_3-2r}$, which we call (orthonormal) Shapovalov basis. 

\end{proposition}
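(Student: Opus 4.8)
The plan is to show that, up to its Shapovalov norm, $w_{r_1}$ is the \emph{iterated} singular vector produced by applying formula \eqref{eq:first_formula} twice. First I would set $v' := \langle v_{m_1}, v_{m_2} \rangle_{r_1}$; by the Proposition containing \eqref{eq:first_formula} this is a singular vector of weight $m_1+m_2-2r_1$, and since its $p=0$ summand is a nonzero multiple of $v_{m_1}\otimes f^{r_1}v_{m_2}$ (nonzero because $r_1\le m_2$), we have $v'\neq 0$; as the summand $V_{m_1+m_2-2r_1}$ occurs with multiplicity one in $V_{m_1}\otimes V_{m_2}$ by Clebsch--Gordan, $v'$ is a highest weight vector of that canonical summand. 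The proof of that Proposition uses only the $\mathfrak{sl}_2$-module structure, so it applies verbatim to the pair $(v', v_{m_3})$ with first highest weight $m_1+m_2-2r_1$ and parameter $l=r-r_1$ (admissible since $r-r_1\le\min(m_1+m_2-2r_1,m_3)$ by \eqref{r_1}), giving a singular vector $\langle v', v_{m_3}\rangle_{r-r_1}\in V_{m_1+m_2-2r_1}\otimes V_{m_3}\subseteq (V_{m_1}\otimes V_{m_2})\otimes V_{m_3}$ of weight $(m_1+m_2-2r_1)+m_3-2(r-r_1)=m_1+m_2+m_3-2r$. Then, expanding $f^j v'$ by the diagonal action and collecting coefficients, I would verify monomial by monomial that $\langle v', v_{m_3}\rangle_{r-r_1}=\sum_{i,j}(-1)^{i+j}C^{\,r_1}_{i,j}\,f^j(f^iv_{m_1}\otimes f^{r_1-i}v_{m_2})\otimes f^{r-r_1-j}v_{m_3}$ with $C^{\,r_1}_{i,j}$ exactly as in \eqref{notation}; hence $w_{r_1}=N(r_1)^{-1}\langle \langle v_{m_1}, v_{m_2} \rangle_{r_1}, v_{m_3} \rangle_{r-r_1}$, which is singular of weight $m_1+m_2+m_3-2r$ and lies in --- i.e.\ ``comes from'' --- the summand $V_{m_1+m_2-2r_1}\otimes V_{m_3}$.

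Next I would check that $N(r_1)\neq 0$, so that the division is legitimate. All coefficients $(-1)^{i+j}C^{\,r_1}_{i,j}$ are real (rational), and in the $S$-orthogonal basis of $V_{m_1}\otimes V_{m_2}\otimes V_{m_3}$ consisting of the vectors $f^av_{m_1}\otimes f^bv_{m_2}\otimes f^cv_{m_3}$, each of which has strictly positive squared $S$-norm by Lemma \ref{shap}, the iterated vector is a nonzero real combination: the monomial $v_{m_1}\otimes f^{r_1}v_{m_2}\otimes f^{r-r_1}v_{m_3}$ occurs with coefficient $C^{\,r_1}_{0,0}\neq 0$ and is produced by no other term. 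Hence $N(r_1)^2=\norm{\langle \langle v_{m_1}, v_{m_2} \rangle_{r_1}, v_{m_3} \rangle_{r-r_1}}^2_S>0$, its explicit value being the one recorded in Corollary \ref{norm}, and $\norm{w_{r_1}}_S=1$ by construction.

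For the basis statement I would argue as follows. For admissible $i\neq j$ the vectors $w_i$ and $w_j$ lie in $V_{m_1+m_2-2i}\otimes V_{m_3}$ and $V_{m_1+m_2-2j}\otimes V_{m_3}$ respectively; since $V_{m_1+m_2-2i}\perp V_{m_1+m_2-2j}$ with respect to $S_{m_1}\otimes S_{m_2}$ by Lemma \ref{orth}, and $S=(S_{m_1}\otimes S_{m_2})\otimes S_{m_3}$, these two subspaces are $S$-orthogonal, so $S(w_i,w_j)=0$. Thus $\{w_i\}$, indexed by precisely the integers satisfying \eqref{r_1}, is an orthonormal --- hence linearly independent --- family of singular vectors of weight $m_1+m_2+m_3-2r$. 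By the Proposition of Section \ref{degrees}, the number of such $i$ equals $\dim(V_{m_1}\otimes V_{m_2}\otimes V_{m_3})^{\text{sing}}_{m_1+m_2+m_3-2r}$, so for dimension reasons the family spans and is therefore an orthonormal basis.

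The one genuinely fiddly step will be the monomial-by-monomial identification of $w_{r_1}$ with $N(r_1)^{-1}\langle \langle v_{m_1}, v_{m_2} \rangle_{r_1}, v_{m_3} \rangle_{r-r_1}$: one must keep the inner factor $f^j(f^iv_{m_1}\otimes f^{r_1-i}v_{m_2})$ unexpanded, as in \eqref{eq:basis}, and confirm that applying \eqref{eq:first_formula} to $v'$ with its highest weight $m_1+m_2-2r_1$ produces exactly the Pochhammer symbol $(m_1+m_2-2r_1)_j$ occurring in \eqref{notation}, and that Corollary \ref{norm} indeed records $N(r_1)^2$ for this (non $S$-normalized) vector $v'$. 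Everything else is routine bookkeeping.
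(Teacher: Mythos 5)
Your proof is correct and takes essentially the same route as the paper: the paper's own proof simply states that \eqref{eq:basis} is an application of \eqref{eq:first_formula} twice and derives pairwise orthogonality of the $w_{r_1}$ from Lemma \ref{orth}. You additionally spell out the nonvanishing of $N(r_1)$ and the dimension count from Section \ref{degrees} needed for the spanning claim, details the paper leaves implicit.
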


\begin{proof}[Proof]
The formula \eqref{eq:basis} is just an application of \eqref{eq:first_formula} twice. 

Let us prove orthogonality of vectors $w_{r_1}$. Shapovalov form on tensor product is defined as multiplication of Shapovalov forms on each of the tensor factors. So, since vectors $w_{r_1}$  are of the form $ \sum_i v_i\otimes w_i$ where $v_i \in V_{m_1+m_2-2r_1}, \; w_i \in  V_{m_3}$, we see that for different $r_1$ the first factors belong to subrepresentations of different highest weights, thus these vectors orthogonal by lemma \ref{orth}.

\end{proof}

\section{The action of the Gaudin Hamiltonians on the \newline  Shapovalov bases \label{Shbases} } 
\begin{proposition}\label{dact}
For every non-negative integer $l$ such that $l \leq min(m_1,m_2) $ singular vector $\langle v_{m_1}, v_{m_2} \rangle_l  \in V_{m_1} \otimes V_{m_2}$ is an eigenvector of  $\Omega_{12}$: 
$$\Omega_{12} \langle v_{m_1}, v_{m_2} \rangle_l = \left( l(l-m_1-m_2-1) +\frac{m_1m_2}{2} \right)  \langle v_{m_1}, v_{m_2} \rangle_l $$
\end{proposition}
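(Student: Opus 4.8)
The plan is to compute $\Omega_{12}\langle v_{m_1},v_{m_2}\rangle_l$ directly by using the identity $2\Omega_{12}=C_{12}-C_1\otimes 1-1\otimes C_2$, where $C_{12}$ is the Casimir of $\mathfrak{sl}_2$ acting diagonally on the first two tensor factors and $C_1,C_2$ are the Casimirs acting on $V_{m_1}$ and $V_{m_2}$ respectively. Indeed $\Omega_{12}=e\otimes f+f\otimes e+\tfrac12 h\otimes h$ is exactly the ``cross term'' of $\tfrac12(ef+fe+\tfrac12 h^2)$ applied to the coproduct, so this is a short algebraic identity in $U(\mathfrak{sl}_2)^{\otimes 2}$.

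Next I would use the fact, established earlier, that $\langle v_{m_1},v_{m_2}\rangle_l$ is a singular vector of weight $m_1+m_2-2l$: hence it generates an irreducible subrepresentation $V_{m_1+m_2-2l}$, and the diagonal Casimir $C_{12}$ acts on it by the scalar associated with highest weight $m_1+m_2-2l$. With the normalization where $C=ef+fe+\tfrac12 h^2$ acts on $V_m$ by the scalar $\tfrac12 m(m+2)=\tfrac12(m^2+2m)$ (which matches $C_i=e(i)f(i)+f(i)e(i)+\tfrac12 h(i)h(i)$ as written in the paper), one gets
\begin{align*}
2\,\Omega_{12}\langle v_{m_1},v_{m_2}\rangle_l
&=\Bigl(\tfrac12(m_1+m_2-2l)(m_1+m_2-2l+2)-\tfrac12 m_1(m_1+2)-\tfrac12 m_2(m_2+2)\Bigr)\langle v_{m_1},v_{m_2}\rangle_l .
\end{align*}
Then I would simplify the scalar: expanding, $(m_1+m_2-2l)(m_1+m_2-2l+2)-m_1(m_1+2)-m_2(m_2+2)$ collapses to $2m_1m_2-4l(m_1+m_2)-4l+4l^2=2m_1m_2+4l(l-m_1-m_2-1)$, so dividing by $4$ yields exactly $\tfrac{m_1m_2}{2}+l(l-m_1-m_2-1)$, which is the claimed eigenvalue.

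The only real thing to pin down is the normalization of the Casimir and its eigenvalue on $V_m$; I would verify it by acting on the highest weight vector $v_m$, using $ev_m=0$, $hv_m=mv_m$, and $fe=ef-h$, so $(ef+fe+\tfrac12h^2)v_m=(2ef-h+\tfrac12h^2)v_m=(0-m+\tfrac12 m^2)v_m=\tfrac12 m(m-2)v_m$ if one uses $C=ef+fe+\tfrac12h^2$ literally — so I must be careful about whether the paper's $C_i$ gives $\tfrac12 m(m-2)$ or $\tfrac12 m(m+2)$ and adjust accordingly; an equivalent and cleaner route is to avoid the Casimir entirely and instead apply $\Omega_{12}$ termwise to the explicit sum \eqref{eq:first_formula}, using $e f^p v_m=p(m-p+1)f^{p-1}v_m$, $f\cdot f^pv_m=f^{p+1}v_m$, $h f^pv_m=(m-2p)f^pv_m$, and collecting coefficients of each basis tensor $f^pv_{m_1}\otimes f^{l-p}v_{m_2}$. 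The main obstacle is purely bookkeeping: making sure the shifted Pochhammer coefficients $C^{\,r_1}_{i,j}$-style factors recombine correctly after the index shifts coming from $e\otimes f$ and $f\otimes e$, and that the cross terms telescope to leave only the diagonal $\tfrac12 h\otimes h$ contribution plus a correction — but once the Casimir identity is invoked this reduces to a one-line scalar computation, so I would present the Casimir argument as the main proof and relegate the termwise check to a remark if needed.
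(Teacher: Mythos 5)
Your Casimir argument is correct and takes a genuinely different route from the paper, which proves this proposition by applying $\Omega_{12}$ term by term to the explicit sum \eqref{eq:first_formula}, shifting indices in the $e\otimes f$ and $f\otimes e$ contributions, and checking that the coefficient of each basis tensor $f^p v_{m_1}\otimes f^{l-p}v_{m_2}$ reproduces the claimed scalar. Your identity $2\Omega_{12}=\Delta(C)-C\otimes 1-1\otimes C$ (with $C=ef+fe+\tfrac12 h^2$) is the standard coproduct computation and, combined with the already-established fact that $\langle v_{m_1},v_{m_2}\rangle_l$ is singular of weight $m_1+m_2-2l$, reduces everything to scalars; your expansion of $(m_1+m_2-2l)(m_1+m_2-2l+2)-m_1(m_1+2)-m_2(m_2+2)$ is right and gives exactly the stated eigenvalue. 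The one thing you left dangling — the normalization of the Casimir eigenvalue — you should settle rather than hedge: your side computation contains a slip, since $fe=ef-h$ gives $(ef+fe+\tfrac12h^2)v_m=(2ef-h+\tfrac12h^2)v_m$ and then $efv_m=(fe+h)v_m=mv_m\neq 0$ (it is $ev_m$, not $efv_m$, that vanishes), so the eigenvalue is $\tfrac12 m(m+2)$, exactly the value your main computation already uses. With that fixed, your proof is complete and arguably cleaner; the trade-off is that it leans on the singularity of $\langle v_{m_1},v_{m_2}\rangle_l$ and on centrality of the Casimir, whereas the paper's termwise computation is self-contained bookkeeping in the chosen basis and in fact reproves singularity along the way.
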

\begin{proof}
\begin{multline}
\Omega_{12} \langle v_{m_1}, v_{m_2} \rangle_l = \sum_{p=0}^{l}  \frac{ (-1)^p  }   {  (m_1)_p (m_2)_{l-p} } \Omega_{12} (\frac{f^p v_{m_1}}{p! } \otimes \frac{f^{l-p} v_{m_2}}{ (l-p)!}) \\ = \sum_{p=1}^{l} \frac{ (-1)^p  }  {  (m_1)_{p-1} (m_2)_{l-p} } \frac{f^{p-1} v_{m_1}}{(p-1)! } \otimes \frac{f^{l-p+1} v_{m_2}}{ (l-p)!} + \\ \sum_{p=0}^{l-1} \frac{ (-1)^p  }  {  (m_1)_{p} (m_2)_{l-p-1} } \frac{f^{p+1} v_{m_1}}{p! } \otimes \frac{f^{l-p-1} v_{m_2}}{ (l-p-1)!} + \frac{(m_1-2p)(m_2-2l+2p)}{2} \langle v_{m_1}, v_{m_2} \rangle_l = \\
\sum_{p_1 = 0 }^{l-1} \frac{ (-1)^{p_1+1}  }  {  (m_1)_{p_1} (m_2)_{l-p_1-1} } \frac{f^{p_1} v_{m_1}}{p_1! } \otimes \frac{f^{l-p_1} v_{m_2}}{ (l-p_1-1)!} + \sum_{p_2=1}^{l} \frac{ (-1)^{p_2-1}  }  {  (m_1)_{p_2-1} (m_2)_{l-p_2} } \frac{f^{p_2} v_{m_1}}{(p_2-1)! } \otimes \frac{f^{l-p_2} v_{m_2}}{ (l-p_2)!} + \\  \sum_{p=0}^{l}  \frac{ (-1)^p  }   {  (m_1)_p (m_2)_{l-p} } \frac{(m_1-2p)(m_2-2l+2p)}{2} \frac{f^p v_{m_1}}{p! } \otimes \frac{f^{l-p} v_{m_2}}{ (l-p)!}   = \\
 \sum_{p=0}^{l}  \frac{ (-1)^p  }{(m_1)_p (m_2)_{l-p}} (-(l-p)(m_2-l+p+1) - p(m_1-p+1) + \\ \frac{(m_1-2p)(m_2-2l+2p)}{2})\frac{f^p v_{m_1}}{p! } \otimes \frac{f^{l-p} v_{m_2}}{ (l-p)!} = (l(l-m_1-m_2-1) +\frac{m_1m_2}{2}) \langle v_{m_1}, v_{m_2} \rangle_l
\end{multline}
\end{proof}

\begin{corollary}\label{dact_1}
 In the orthonormal Shapovalov basis the matrix of the operator $\Omega_{12}$ restricted to the invariant subspace of singular vectors of weight $m_1+m_2+m_3 -2r$ is the following diagonal matrix.
\begin{equation}\label{dactom}
\Omega_{12} = diag[d_{r_1}]_{r_1} := diag[r_1(r_1-m_1-m_2-1)]_{r_1} + \frac{m_1m_2}{2}
\end{equation} 
where $\max(0,r-m_3)\leq r_1 \leq \min(r,\min(m_1,m_2), m_1+m_2 -r)$.
\end{corollary}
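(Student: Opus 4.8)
The plan is to reduce the statement to Proposition \ref{dact} by first recognizing the Shapovalov basis vector $w_{r_1}$ of \eqref{eq:basis} as an iterated bracket. The coefficient $C^{\,r_1}_{i,j}$ factors as a product of an $i$-part and a $j$-part, and $(-1)^{i+j}=(-1)^i(-1)^j$, so performing the inner summation over $i$ turns $\sum_i(-1)^i C^{\,r_1}_{i,j}f^i v_{m_1}\otimes f^{r_1-i}v_{m_2}$ into $\langle v_{m_1},v_{m_2}\rangle_{r_1}$ by \eqref{eq:first_formula}. The remaining summation over $j$ is then precisely \eqref{eq:first_formula} applied to the pair $(\langle v_{m_1},v_{m_2}\rangle_{r_1},v_{m_3})$ with parameter $r-r_1$, the first factor being the irreducible summand $V_{m_1+m_2-2r_1}\subset V_{m_1}\otimes V_{m_2}$ with highest vector $\langle v_{m_1},v_{m_2}\rangle_{r_1}$ of highest weight $m_1+m_2-2r_1$ (which is exactly the weight appearing in the Pochhammer symbol $(m_1+m_2-2r_1)_j$). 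Hence $w_{r_1}=N(r_1)^{-1}\langle\langle v_{m_1},v_{m_2}\rangle_{r_1},v_{m_3}\rangle_{r-r_1}$, consistent with \eqref{notation_norm}.

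Next I would use that $\Omega_{12}$ is the split Casimir on the first two tensor legs, hence commutes with the diagonal action of $\mathfrak{sl}_2$ on $V_{m_1}\otimes V_{m_2}$; in particular it commutes with $f\otimes 1+1\otimes f$, and it acts as the identity on the third leg. Writing $w_{r_1}=N(r_1)^{-1}\sum_j(-1)^j\gamma_j\, f^j\langle v_{m_1},v_{m_2}\rangle_{r_1}\otimes f^{r-r_1-j}v_{m_3}$ for the coefficients $\gamma_j$ read off above, we obtain
$$\Omega_{12}w_{r_1}=N(r_1)^{-1}\sum_j(-1)^j\gamma_j\, f^j\bigl(\Omega_{12}\langle v_{m_1},v_{m_2}\rangle_{r_1}\bigr)\otimes f^{r-r_1-j}v_{m_3}.$$
By Proposition \ref{dact}, $\Omega_{12}\langle v_{m_1},v_{m_2}\rangle_{r_1}=d_{r_1}\langle v_{m_1},v_{m_2}\rangle_{r_1}$ with $d_{r_1}=r_1(r_1-m_1-m_2-1)+\tfrac{m_1m_2}{2}$, so the right-hand side equals $d_{r_1}w_{r_1}$; that is, each $w_{r_1}$ is an eigenvector of $\Omega_{12}$ with eigenvalue $d_{r_1}$.

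Finally, since by Proposition \ref{prop_basis} the family $\{w_{r_1}\}$ with $\max(0,r-m_3)\le r_1\le\min(r,\min(m_1,m_2),m_1+m_2-r)$ is an orthonormal basis of $(V_{m_1}\otimes V_{m_2}\otimes V_{m_3})^{\mathrm{sing}}_{m_1+m_2+m_3-2r}$, the matrix of $\Omega_{12}$ in it is $\mathrm{diag}[d_{r_1}]_{r_1}$, which is exactly \eqref{dactom}. I do not expect a real obstacle here; the only point requiring care is the bookkeeping in the first step — correctly identifying $w_{r_1}$ with the iterated bracket and matching the intermediate highest weight $m_1+m_2-2r_1$ so that Proposition \ref{dact} applies verbatim. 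As an independent check of the eigenvalue, one may instead use $\Omega_{12}=\tfrac12(\Delta(C)-C\otimes 1-1\otimes C)$ on $V_{m_1}\otimes V_{m_2}$, whose scalar on $V_{m_1+m_2-2r_1}$ is $\tfrac14\bigl[(m_1+m_2-2r_1)(m_1+m_2-2r_1+2)-m_1(m_1+2)-m_2(m_2+2)\bigr]$, and this simplifies to $d_{r_1}$.
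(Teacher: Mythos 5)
Your proof is correct and follows exactly the route the paper intends: the paper states this as an immediate corollary of Proposition \ref{dact}, relying on the identification $w_{r_1}=N(r_1)^{-1}\langle\langle v_{m_1},v_{m_2}\rangle_{r_1},v_{m_3}\rangle_{r-r_1}$ from Proposition \ref{prop_basis} and the fact that $\Omega_{12}$ acts by the scalar $d_{r_1}$ on the whole summand $V_{m_1+m_2-2r_1}\otimes V_{m_3}$. You have simply written out the details (including the commutation of $\Omega_{12}$ with the diagonal $f$-action on the first two legs, and a Casimir cross-check of the eigenvalue) that the paper leaves implicit.
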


\begin{proposition}

$\Omega_{23}$ and $\Omega_{13}$ act as symmetric tridiagonal matrices in the Shapovalov basis. 
    
\end{proposition}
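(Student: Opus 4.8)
\emph{Symmetry.} With respect to the Shapovalov form $S=S_{m_1}\otimes S_{m_2}\otimes S_{m_3}$ each $\Omega_{ij}$ is self-adjoint: on a single tensor factor the $S$-adjoint of $e$ is $f$, the adjoint of $f$ is $e$, and the adjoint of $h$ is $h$ (the last because $S(hx,y)=S(efx,y)-S(fex,y)=S(fx,fy)-S(ex,ey)=S(x,hy)$), so the adjoint of $e(i)f(j)+f(i)e(j)+\frac{1}{2}h(i)h(j)$ is itself. Since $\{w_{r_1}\}$ is $S$-orthonormal by Proposition \ref{prop_basis}, the matrices of $\Omega_{13}$ and $\Omega_{23}$ in this basis are symmetric, and it remains to show they are tridiagonal.

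\emph{Reduction to a single operator.} Let $\mathfrak{sl}_2^{(12)}$ denote the copy of $\mathfrak{sl}_2$ acting diagonally on the first two factors, with generators $E=e(1)+e(2)$, $F=f(1)+f(2)$, $H=h(1)+h(2)$, and recall the multiplicity-free decomposition $V_{m_1}\otimes V_{m_2}=\bigoplus_k V_{M_k}$ with $M_k=m_1+m_2-2k$. By Proposition \ref{prop_basis}, $w_{r_1}$ is, up to the scalar $N(r_1)$, the vector $\langle u_{r_1},v_{m_3}\rangle_{r-r_1}$ obtained by applying formula \eqref{eq:first_formula} to the irreducible $\mathfrak{sl}_2^{(12)}$-submodule $V_{M_{r_1}}\subset V_{m_1}\otimes V_{m_2}$ with highest vector $u_{r_1}=\langle v_{m_1},v_{m_2}\rangle_{r_1}$ and to $V_{m_3}$; in particular $w_{r_1}\in V_{M_{r_1}}\otimes V_{m_3}$. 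Now $\Omega_{13}+\Omega_{23}=Ef(3)+Fe(3)+\frac{1}{2}Hh(3)$ is precisely the element "$\Omega_{12}$" of the two-factor situation for the pair of modules $(V_{m_1}\otimes V_{m_2},\,V_{m_3})$, so Proposition \ref{dact}, applied to $V_{M_{r_1}}$ and $V_{m_3}$, shows that $w_{r_1}$ is an eigenvector of $\Omega_{13}+\Omega_{23}$ with eigenvalue $(r-r_1)\bigl(r-r_1-M_{r_1}-m_3-1\bigr)+\frac{1}{2}M_{r_1}m_3$. Thus $\Omega_{13}+\Omega_{23}$ is diagonal in the Shapovalov basis, and it suffices to prove that $\Omega_{23}$ is tridiagonal, since $\Omega_{13}=(\Omega_{13}+\Omega_{23})-\Omega_{23}$ is then the difference of a diagonal and a tridiagonal matrix.

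\emph{Tridiagonality of $\Omega_{23}$.} For $x\in\mathfrak{sl}_2$ one has $[\,y(1)+y(2),\,x(2)\,]=[y,x](2)$, so $W:=\mathrm{span}\{e(2),f(2),h(2)\}\subset\mathrm{End}(V_{m_1}\otimes V_{m_2})$ is a copy of the adjoint (spin one) representation of $\mathfrak{sl}_2^{(12)}$. The evaluation map $W\otimes V_{M_k}\to V_{m_1}\otimes V_{m_2}$, $T\otimes v\mapsto Tv$, is $\mathfrak{sl}_2^{(12)}$-equivariant, and the tensor product of $V_{M_k}$ with the spin one module decomposes into irreducibles of highest weights among $M_k-2,M_k,M_k+2$; by multiplicity-freeness of $V_{m_1}\otimes V_{m_2}$ this forces $e(2),f(2),h(2)$ to map $V_{M_k}$ into $V_{M_{k+1}}\oplus V_{M_k}\oplus V_{M_{k-1}}$ (summands absent from $V_{m_1}\otimes V_{m_2}$ being omitted). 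Writing $\Omega_{23}=\sum_a x_a(2)\,x^a(3)$ and applying it to $w_{r_1}\in V_{M_{r_1}}\otimes V_{m_3}$, the factor-two operators move the first tensor leg into $V_{M_{r_1+1}}\oplus V_{M_{r_1}}\oplus V_{M_{r_1-1}}$ while the factor-three operators keep the second leg in $V_{m_3}$, hence $\Omega_{23}w_{r_1}\in\bigoplus_{k=r_1-1}^{r_1+1}\bigl(V_{M_k}\otimes V_{m_3}\bigr)$. On the other hand $\Omega_{23}$ commutes with the total diagonal $\mathfrak{sl}_2$-action, so $\Omega_{23}w_{r_1}$ is a singular vector of weight $m_1+m_2+m_3-2r$; and in $\bigoplus_{k=r_1-1}^{r_1+1}V_{M_k}\otimes V_{m_3}$ the singular vectors of that weight span $\{w_{r_1-1},w_{r_1},w_{r_1+1}\}$, because the decomposition $V_{m_1}\otimes V_{m_2}=\bigoplus_k V_{M_k}$ is multiplicity free, so by Lemma \ref{orth} and Proposition \ref{prop_basis} each summand $V_{M_k}\otimes V_{m_3}$ contributes exactly the one-dimensional space $\mathbb{C}\,w_k$ when $k$ is admissible. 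Therefore $\Omega_{23}w_{r_1}=a_{r_1}w_{r_1-1}+b_{r_1}w_{r_1}+c_{r_1}w_{r_1+1}$; together with the symmetry already established (whence $c_{r_1}=a_{r_1+1}$) this proves the proposition.

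\emph{On the main obstacle.} The argument above is essentially free of computation; the points to watch are the book-keeping of the boundary cases, where $r_1\pm1$ falls outside the admissible range so that $V_{M_{r_1\pm1}}$ does not occur in $V_{m_1}\otimes V_{m_2}$, and, if one wants the entries $a_{r_1},b_{r_1}$ written out explicitly as needed in Section \ref{Tridiagonal}, the extraction of $a_{r_1}=S(w_{r_1-1},\Omega_{23}w_{r_1})$ and $b_{r_1}=S(w_{r_1},\Omega_{23}w_{r_1})$ from the explicit formula \eqref{eq:basis} and Corollary \ref{norm}, via Pochhammer and Chu-Vandermonde manipulations of the kind already used in this section. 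A more structural alternative route to tridiagonality would be to observe that $\Omega_{12}$ and $\Omega_{23}$ generate a representation of the rank-one Racah algebra and so form a Leonard pair, but the computation-free argument above is self-contained.
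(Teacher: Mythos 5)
Your proof is correct, but it takes a genuinely different route from the paper's. The paper argues by a degree count read off from the explicit formula \eqref{eq:basis}: applying $\Omega_{23}$ or $\Omega_{13}$ can change the power of $f$ on the third tensor factor by at most one, which yields only the one-sided (Hessenberg-type) bound that $\Omega_{23}(w_{r_1})$ lies in the span of $w_{r_1-1},w_{r_1},\dots,w_{N}$, and then invokes self-adjointness with respect to $S$ to upgrade this to tridiagonality. You instead get the two-sided bound directly and structurally: $e(2),f(2),h(2)$ span a copy of the adjoint representation under the diagonal $\mathfrak{sl}_2$-action on the first two factors, so equivariance of the evaluation map together with multiplicity-freeness of $V_{m_1}\otimes V_{m_2}=\bigoplus_k V_{M_k}$ forces $x(2)V_{M_k}\subseteq V_{M_{k-1}}\oplus V_{M_k}\oplus V_{M_{k+1}}$, whence $\Omega_{23}w_{r_1}$ is a singular vector of the correct weight inside $\bigoplus_{k=r_1-1}^{r_1+1}V_{M_k}\otimes V_{m_3}$ and hence a combination of $w_{r_1-1},w_{r_1},w_{r_1+1}$ alone; symmetry is then needed only to relate the two off-diagonal entries, not to establish tridiagonality. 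Your reduction of $\Omega_{13}$ via the observation that $\Omega_{13}+\Omega_{23}$ is the two-block operator for $(V_{m_1}\otimes V_{m_2})\otimes V_{m_3}$ and therefore diagonal on the $w_{r_1}$ is also a shortcut the paper does not use here, though it is consistent with Corollary \ref{dact_1} and Corollary \ref{c_123}. The paper's route buys brevity and uses nothing beyond the explicit formula; yours buys a conceptual explanation of the bandwidth (it is governed by the highest weight of the adjoint representation, so the argument generalizes), a proof of tridiagonality that does not lean on orthonormality, and the eigenvalue of $\Omega_{13}+\Omega_{23}$ for free. Two small points to tidy: the bookkeeping when $r_1\pm 1$ leaves the admissible range (which you already flag), and a notational clash, since your $a_{r_1},b_{r_1},c_{r_1}$ for the entries of $\Omega_{23}$ conflict with the paper's use of $a_{r_1}$ for the diagonal of $\Omega_{13}$.
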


\begin{proof}

From the formula \eqref{eq:basis} we see that vector $w_{r_1}$ is a linear combination of standard basis vectors of the form $f^{i} v_{m_1} \otimes f^{j}  v_{m_2} \otimes f^{k} v_{m_3}$, where $k = r_1, \, r_1+1, ..., \min(r,\min(m_1,m_2), m_1+m_2 -r)$. Thus, $\Omega_{23} (w_{r_1})$ is a linear combination of  $ w_{r_1-1}, w_{r_1}, ...,  w_{\min(r,\min(m_1,m_2), m_1+m_2 -r)} $, since the third tensor component can get only one degree lower in $f$ by applying $\Omega_{23}$ or $\Omega_{13}$. In the orthonormal Shapovalov basis $\Omega_{23}$ and $\Omega_{13}$ are given by symmetric matrices since these operators are self-adjoint with respect to the Shapovalov form. It follows that the both vectors $\Omega_{23}(w_{r_1})$ $\Omega_{13}(w_{r_1})$ are linear combinations of vectors $ w_{r_1-1}, w_{r_1}, w_{r_1+1} $. 
\end{proof}

\begin{corollary}
    Gaudin Hamiltonians $H_1, H_2, H_3$ are symmetric tridiagonal matrices in the Shapovalov basis. 
\end{corollary}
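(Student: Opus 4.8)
The plan is to simply assemble the corollary from the two preceding results together with the explicit presentation of the Gaudin Hamiltonians at the point $(0,1,\tfrac{1}{u},\infty)$. First I would recall that, by the self-adjointness of each $H_i$ with respect to the Shapovalov form $S$ and the fact that the Shapovalov basis $\{w_{r_1}\}$ is orthonormal with respect to $S$, the matrix of every $H_i$ in this basis is automatically symmetric; this part requires nothing new beyond the remark made just after the definition of the Shapovalov form on tensor products. It remains only to establish the tridiagonal shape.

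For the tridiagonality I would invoke the formulas
$$H_1 = -\,\Omega_{12} - u\,\Omega_{13}, \qquad H_2 = \Omega_{12} + \frac{u}{u-1}\,\Omega_{23}, \qquad H_3 = u\,\Omega_{13} + \frac{u}{1-u}\,\Omega_{23},$$
which express each Hamiltonian as a $\mathbb{C}$-linear combination of the operators $\Omega_{12},\Omega_{13},\Omega_{23}$ restricted to the invariant subspace $(V_{m_1}\otimes V_{m_2}\otimes V_{m_3})^{\mathrm{sing}}_{m_1+m_2+m_3-2r}$. By Corollary \ref{dact_1} the matrix of $\Omega_{12}$ in the Shapovalov basis is diagonal, in particular tridiagonal, and by the preceding proposition the matrices of $\Omega_{13}$ and $\Omega_{23}$ are symmetric tridiagonal. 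Since, for a fixed size and a fixed ordered basis, the symmetric tridiagonal matrices form a linear subspace of the space of all matrices, any linear combination of $\Omega_{12},\Omega_{13},\Omega_{23}$ is again symmetric tridiagonal; applying this to the three displayed combinations gives the claim for $H_1,H_2,H_3$ at once.

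There is essentially no obstacle: the substance of the corollary is entirely carried by Corollary \ref{dact_1} and the proposition on $\Omega_{13},\Omega_{23}$, and the remaining step is the trivial observation that symmetric tridiagonal matrices form a linear space. The only point deserving a word of care is the choice of basis ordering: one must use the same (say increasing) linear order of the admissible indices $r_1$ in all three matrices, so that passing one step lower or higher in the $f$-degree of the third tensor factor corresponds to the sub- and super-diagonal for each $\Omega_{ij}$ simultaneously; this is precisely the ordering already fixed in Proposition \ref{prop_basis}. One may also remark that at the degenerate parameter values $u\in\{0,1,\infty\}$ some coefficients above blow up, but on the open locus $M_{0,4}(\mathbb{C})$, and after the evident rescaling of $H_i$ at the boundary of $\overline{M_{0,4}}(\mathbb{C})$, the same argument applies verbatim.
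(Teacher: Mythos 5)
Your proposal is correct and matches the paper's (implicit) reasoning: the corollary is stated there without proof precisely because, as you observe, each $H_i$ is a $\mathbb{C}$-linear combination of $\Omega_{12}$ (diagonal by Corollary \ref{dact_1}) and $\Omega_{13},\Omega_{23}$ (symmetric tridiagonal by the preceding proposition), and symmetric tridiagonal matrices in a fixed ordered basis form a linear subspace. Your extra remarks on basis ordering and on the degenerate parameter values are sensible but not needed beyond what the paper already assumes.
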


\begin{theorem}
In the orthonormal Shapovalov basis matrices of the operators $\Omega_{2,3},\Omega_{1,3}$ restricted to the invariant subspace of singular vectors of weight $m_1+m_2+m_3 -2r$ are the following symmetric irreducible tridiagonal matrices.

\begin{equation}\label{eq:matrix}
\Omega_{23} 
= \begin{pmatrix}
 \ddots &  \ddots \\
 \ddots &  b_{r_1} &    c_{r_1} \\
&  c_{r_1} & \ddots  &  \ddots  \\
& & \ddots   & \ddots   
\end{pmatrix} \quad
\Omega_{13} 
= \begin{pmatrix}
 \ddots &  \ddots \\
 \ddots &  a_{r_1} &  - c_{r_1} \\
&  - c_{r_1} & \ddots  &  \ddots  \\
& & \ddots   & \ddots   
\end{pmatrix} 
\end{equation}

where $\max(0,r-m_3)\leq r_1 \leq \min(r,\min(m_1,m_2), m_1+m_2 -r)$, and:

\begin{equation}\label{eq:mid13}
\begin{gathered}
a_{r_1} = \frac{m_1(m_3-2r)}{2} + r_1(m_1-m_3+2r-2r_1) + \\
 \frac{r_1(m_1-r_1+1)(r-r_1+1)(m_3-r+r_1)}{m_1+m_2-2r_1+2}
 - \frac{(r_1+1)(m_1-r_1)(r-r_1)(m_3-r+r_1+1)}{m_1+m_2-2r_1} 
\end{gathered}
\end{equation}

\begin{equation}\label{eq:mid23}
\begin{gathered}
b_{r_1} = \frac{m_2(m_3-2r)}{2} + r_1(m_2-m_3+2r-2r_1) + \\
 \frac{r_1(m_2-r_1+1)(r-r_1+1)(m_3-r+r_1)}{m_1+m_2-2r_1+2}
 - \frac{(r_1+1)(m_2-r_1)(r-r_1)(m_3-r+r_1+1)}{m_1+m_2-2r_1} 
\end{gathered}
\end{equation}

\begin{equation}\label{eq:lateralom}
\begin{gathered}
c_{r_1}^2 = \frac{(r_1+1)(r-r_1)(m_1-r_1)(m_2-r_1)(m_1+m_2-r_1+1)(m_3-r+r_1+1)}{(  m_1+m_2-2r_1 -1)(m_1+m_2-2r_1)^2} \times \\ \times \frac{(m_1+m_2-r-r_1)(m_1+m_2+m_3-r-r_1+1)}{(m_1+m_2-2r_1+1)}
\end{gathered}
\end{equation}

\end{theorem}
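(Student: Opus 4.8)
The plan is to compute the action of $\Omega_{13}$ and $\Omega_{23}$ on the Shapovalov basis vectors $w_{r_1}$ directly and read off the matrix entries. The key structural input is already in hand: by the previous proposition $\Omega_{23}(w_{r_1})$ and $\Omega_{13}(w_{r_1})$ are linear combinations of $w_{r_1-1}, w_{r_1}, w_{r_1+1}$ only, and in the orthonormal basis the matrices are symmetric, so it suffices to identify the diagonal entry and one off-diagonal entry in each case. A convenient reduction is that $\Omega_{13}+\Omega_{23}+\Omega_{12}$ is (up to scalar) the Casimir of the diagonal $\mathfrak{sl}_2$-action on the subrepresentation $V_{m_1+m_2-2r_1}\otimes V_{m_3}$-component, hence acts as the scalar $\tfrac12 (m_1+m_2+m_3-2r)(m_1+m_2+m_3-2r+2)$ minus the contributions of the individual Casimirs $C_i$; combined with Corollary \ref{dact_1} giving $\Omega_{12}=\mathrm{diag}[d_{r_1}]$, this determines $a_{r_1}+b_{r_1}$ and shows $\Omega_{13},\Omega_{23}$ share the \emph{same} off-diagonal entries up to sign, which explains the $c_{r_1}$ versus $-c_{r_1}$ pattern once one off-diagonal term is computed. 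So the genuine work is (i) computing $b_{r_1}$ (say), and (ii) computing $c_{r_1}$.

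For the diagonal entry $b_{r_1}=S(\Omega_{23}w_{r_1},w_{r_1})/N(r_1)^2$ I would apply $\Omega_{23}=1\otimes e\otimes f + 1\otimes f\otimes e + \tfrac12\, 1\otimes h\otimes h$ termwise to the explicit double sum \eqref{eq:basis} for $w_{r_1}$, then extract the coefficient of the diagonal basis monomials and re-sum. It is cleaner to first rewrite $w_{r_1}$ in the "expanded" standard basis $f^i v_{m_1}\otimes f^j v_{m_2}\otimes f^k v_{m_3}$ (the coefficients being $C^{r_1}_{i,j}$-type expressions with a binomial from expanding $f^{r-r_1-j}$ through the first two factors), apply $\Omega_{23}$ — which only touches the last two factors — and use $ef^k v = k(m-k+1)f^{k-1}v$, $hf^k v = (m-2k)f^k v$. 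The resulting diagonal contribution is a hypergeometric-type sum in the internal index that telescopes or collapses via Chu–Vandermonde, exactly as in the norm computation of Proposition 7; one expects the answer to organize into the three displayed pieces of \eqref{eq:mid23}: a "leading" term $\tfrac{m_2(m_3-2r)}{2}$, a "linear in $r_1$" term, and two rational correction terms with denominators $m_1+m_2-2r_1\pm 2$ and $m_1+m_2-2r_1$ coming from the weights of the intermediate module $V_{m_1+m_2-2r_1}$.

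For the off-diagonal entry $c_{r_1}=S(\Omega_{23}w_{r_1},w_{r_1+1})/(N(r_1)N(r_1+1))$ I would track which terms of $\Omega_{23}w_{r_1}$ land in the $k=r_1+1$ stratum of the third factor — these are precisely the contributions from the $1\otimes e\otimes f$ piece, possibly together with a reorganization of $1\otimes f\otimes e$ — and then recognize the result as a multiple of $w_{r_1+1}$. Since we already know abstractly (from the previous proposition) that this projection \emph{is} a multiple of $w_{r_1+1}$, it suffices to compare a single coefficient, e.g. the coefficient of the top monomial $v_{m_1}\otimes v_{m_2}\otimes f^{r+1}v_{m_3}$ or of $f^{r_1+1}v_{m_1}\otimes v_{m_2}\otimes\cdots$, which avoids resumming the whole thing; the normalization factors $N(r_1),N(r_1+1)$ are supplied by Corollary \ref{norm}. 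Squaring gives \eqref{eq:lateralom}; the appearance of $(m_1+m_2-2r_1)^2$ and $(m_1+m_2-2r_1\pm1)$ in the denominator, and of the various Pochhammer-like linear factors in the numerator, is forced by the weights and the Shapovalov norms of $\langle v_{m_1},v_{m_2}\rangle_{r_1}$ and $\langle\langle\cdots\rangle\rangle$.

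The main obstacle is the bookkeeping in step (i)–(ii): the double (indeed, after expansion, triple) summation with Pochhammer coefficients is error-prone, and getting the correction terms in \eqref{eq:mid23}/\eqref{eq:mid13} and the full numerator of \eqref{eq:lateralom} requires a careful Chu–Vandermonde collapse with the indices shifted just so. A cleaner route, which I would try first to minimize calculation, is to avoid expanding into the standard basis entirely: use Proposition \ref{dact} to handle $\Omega_{12}$, and for $\Omega_{13}$ exploit the symmetry $m_1\leftrightarrow m_2$ exchanging $\Omega_{13}\leftrightarrow\Omega_{23}$ (which is manifest in the formulas, $a_{r_1}$ being $b_{r_1}$ with $m_1\leftrightarrow m_2$) together with the Casimir relation $a_{r_1}+b_{r_1}+d_{r_1}=\text{(scalar depending only on }r, r_1\text{ via the intermediate module)}$; this reduces everything to the single computation of $b_{r_1}$ and $c_{r_1}$ and makes the consistency of the two matrices in \eqref{eq:matrix} automatic. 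Verifying irreducibility of the tridiagonal matrices — i.e. $c_{r_1}\neq 0$ for all admissible $r_1$ in the open range — is then a short check that every factor in the numerator of \eqref{eq:lateralom} is strictly positive on the admissible range \eqref{r_1}, using the strict inequalities that hold for non-endpoint $r_1$.
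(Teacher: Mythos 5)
Your plan is essentially sound, and its computational core for the off-diagonal entry --- project $\Omega_{23}(w_{r_1})$ onto a single, carefully chosen standard-basis monomial, compare with the corresponding coefficient of $w_{r_1\pm1}$, and feed in the norms from Corollary \ref{norm} --- is exactly what the paper does (it uses the monomial $v_{m_1}\otimes f^{r_1-1}v_{m_2}\otimes f^{r-r_1+1}v_{m_3}$, to which only the $1\otimes e\otimes f$ piece contributes). The divergences are in how you handle the diagonal entries and $\Omega_{13}$. For $b_{r_1}$ you propose evaluating the full quadratic form $S(\Omega_{23}w_{r_1},w_{r_1})/N(r_1)^2$ and collapsing the resulting sum by Chu--Vandermonde; this should work but is the heaviest part of your plan, and the paper sidesteps it entirely by reusing the single-coefficient trick: since $w_{r_1+1}$ has no component on $v_{m_1}\otimes f^{r_1}v_{m_2}\otimes f^{r-r_1}v_{m_3}$ (its total $f$-degree in the first two factors is at least $r_1+1$), comparing that one coefficient gives a single linear relation determining $b_{r_1,r_1}$ in terms of the already-computed $b_{r_1,r_1-1}$, with no resummation. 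For $\Omega_{13}$, your route via the Casimir identity $\Omega_{12}+\Omega_{13}+\Omega_{23}=\tfrac12(\Delta C-C_1-C_2-C_3)=\text{scalar}$ (which can be established independently of the theorem, so there is no circularity with Corollary \ref{c_123}) plus Corollary \ref{dact_1} is a genuinely nicer argument: it forces $\Omega_{13}+\Omega_{23}$ to be diagonal, hence the $-c_{r_1}$ sign pattern and the relation $a_{r_1}=\text{scalar}-d_{r_1}-b_{r_1}$ come for free, whereas the paper recomputes all the $\Omega_{13}$ entries from scratch and only observes $a_{r_1,r_1-1}=-b_{r_1,r_1-1}$ a posteriori. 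One caveat if you instead invoke the $m_1\leftrightarrow m_2$ swap: the flip map sends $\langle v_{m_1},v_{m_2}\rangle_{r_1}$ to $(-1)^{r_1}\langle v_{m_2},v_{m_1}\rangle_{r_1}$, so the Shapovalov basis vectors change by alternating signs; this is harmless for diagonal entries but flips the sign of the off-diagonal ones, so the sign bookkeeping must be done (or, better, delegated to the Casimir argument as you suggest). With these points in order, your plan yields the stated formulas and the irreducibility check at the end is as routine as you say.
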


\begin{remark} 
It is easy to see that $c_{r_1}^2$ is always positive and none of the denominators in the formulas (\ref{eq:mid13}), (\ref{eq:mid23}), (\ref{eq:lateralom}) vanish under the existing restrictions (\ref{r_1}) on $r_1$. 
\end{remark}

\begin{proof} 
Starting from $\Omega_{23}$ let us now find the exact entries of these matrices.

To find  the coefficient  $b_{r_1, \: r_1 -1} $ of the vector $\Omega_{23} (w_{r_1})$ in front of $w_{r_1 -1}$ we examine the component of the vector   $ v_{m_1} \otimes f^{r_1-1} v_{m_2} \otimes f^{r-r_1+1} v_{m_3}$ in the decomposition of $\Omega_{23}(w_{r_1})$ in the standard basis and we compare it with the component of the same vector in the decomposition $b_{r_1, \: r_1 -1} \: w_{r_1-1}$ in the same basis:

On the one hand, when $\Omega_{23}$ acts on $w_{r_1}$ only  $  v_{m_1} \otimes f^{r_1} v_{m_2} \otimes f^{r-r_1} v_{m_3}$ contributes to this component, on which $\Omega_{23}$ acts as $ 1\otimes e \otimes f$. 

We find the coefficient in front of  $  v_{m_1} \otimes f^{r_1} v_{m_2} \otimes f^{r-r_1} v_{m_3}$ in the decomposition of $w_{r_1}$ using the formula \eqref{eq:basis}. Then, considering that $e f^{r_1} v_{m_2} = r_1 (m_2 - r_1 +1) f^{r_1-1} v_{m_2} $, we multiply it by  $ r_1 (m_2 - r_1 +1) $ and we result in:

\begin{equation}\label{eq:om23up} r_1 (m_2 - r_1 +1) \: \frac{ C^{\,r_1}_{0,0} }{ N(r_1)}
\end{equation}

From the other hand, the component $  v_{m_1} \otimes f^{r_1-1} v_{m_2} \otimes f^{r-r_1+1} v_{m_3}$ in the decomposition of $b_{r_1, \: r_1 -1} \: w_{r_1-1}$ by \eqref{eq:basis} is equal to: 

\begin{equation}\label{eq:om23up_1}  b_{r_1, \: r_1 -1}  \: \frac{ C^{\,r_1-1}_{0,0} }{N(r_1-1)}
\end{equation}

Thus, comparing \eqref{eq:om23up} and \eqref{eq:om23up_1} what we get is: 

\begin{multline}\label{eq:om23_upper_2}
b_{r_1, \: r_1 -1} \:=   r_1 (m_2 - r_1 +1) \frac{C^{\,r_1}_{0,0}}{C^{\,r_1-1}_{0,0}}  \: \frac{N(r_1-1)}{N(r_1)}  = (r-r_1+1) (m_3-r+r_1)     \: \frac{N(r_1-1)}{N(r_1)}  = \\   \sqrt{\frac{(r-r_1+1) (m_3-r+r_1)  (m_1+m_2-r-r_1+1) (m_1+m_2+m_3-r-r_1+2) }{(m_1+m_2-2r_1+1)(m_1+m_2-2r_1+2) }} \times \\ \times \frac{\norm{  \langle v_{m_1}, v_{m_2} \rangle_{r_1-1}}_S}{\norm{\langle v_{m_1}, v_{m_2} \rangle_{r_1}}_S} = \\
\sqrt{\frac{(r-r_1+1) (m_3-r+r_1)  (m_1+m_2-r-r_1+1) (m_1+m_2+m_3-r-r_1+2)}{(m_1+m_2-2r_1+1)(m_1+m_2-2r_1+2)} } \times \\ \times \sqrt{\frac{r_1(m_1-r_1+1)(m_2-r_1+1)(m_1+m_2-r_1+2) }{(m_1+m_2-2r_1+2) (m_1+m_2-2r_1+3)}}   
\end{multline}

Now, to find the coefficient $b_{r_1, \: r_1}$ of the vector $\Omega_{23}(w_{r_1})$ in front of $w_{r_1}$ we examine the component of the vector $  v_{m_1} \otimes f^{r_1} v_{m_2} \otimes f^{r-r_1} v_{m_3}$ in the decomposition of $\Omega_{23}(w_{r_1})$ in the standard basis and we compare it with the component of the same vector in the decomposition of  $b_{r_1, \: r_1-1} \: w_{r_1 -1} + b_{r_1, \: r_1} \: w_{r_1} $  in the same basis (from formula \eqref{eq:basis} it is clear that  $w_{r_1+1}$ does not contribute to this component).

On the one hand, when $\Omega_{23}$ acts on $w_{r_1}$ the contribution to this component comes from $  v_{m_1} \otimes f^{r_1} v_{m_2} \otimes f^{r-r_1} v_{m_3}$, on which $\Omega_{23}$ acts as $  1 \otimes h \otimes h   $, and from $  v_{m_1} \otimes f^{r_1+1} v_{m_2} \otimes f^{r-r_1-1} v_{m_3}$, on which $\Omega_{23}$ acts as $   1\otimes e \otimes f$.  

Putting this all together using formula \eqref{eq:basis} we get that the component of the vector \newline $ v_{m_1} \otimes f^{r_1} v_{m_2} \otimes f^{r-r_1} v_{m_3}$ in the decomposition of $\Omega_{23}(w_{r_1})$ in the standard basis is equal to:

\begin{equation}\label{eq:23c}
\begin{gathered}
\frac{(m_2-2r_1)(m_3-2r+2r_1)}{2} \frac{\,C^{r_1}_{0,0}}{N(r_1)} - (r_1+1)(m_2-r_1) \frac{\,C^{r_1}_{0,1}}{N(r_1)}
\end{gathered}
\end{equation}

From the other hand, the component $ f^{r_1} v_{m_1} \otimes v_{m_2} \otimes f^{r-r_1} v_{m_3}$ in the decomposition of $b_{r_1, \: r_1-1} \: w_{r_1 -1} + b_{r_1, \: r_1} \: w_{r_1} $ in the standard basis is equal to:

\begin{equation}\label{eq:23d}
\begin{gathered}
 b_{r_1, \: r_1}  \frac{C^{\,r_1}_{0,0}}{N(r_1)}  -  b_{r_1, \: r_1 -1} \frac{C^{\,r_1-1}_{0,1}}{N(r_1-1)}
\end{gathered}
\end{equation}

 Thus, comparing  \eqref{eq:23c} and \eqref{eq:23d}, taking  \eqref{eq:om23_upper_2} into account we get:

\begin{equation}\label{eq:23middle}
\begin{gathered}
b_{r_1, \: r_1}  =  \frac{(m_2-2r_1)(m_3-2r+2r_1)}{2} + b_{r_1, \: r_1 -1} \frac{N(r_1)}{N(r_1-1)} \frac{C^{\,r_1-1}_{0,1}}{C^{\,r_1}_{0,0}}   - (r_1+1)(m_2-r_1) \frac{C^{\,r_1}_{0,1}}{C^{\,r_1}_{0,0}}    = 
\\
\frac{(m_2-2r_1)(m_3-2r+2r_1)}{2} + r_1(m_2-r_1+1) \frac{C^{\,r_1-1}_{0,1}}{C^{\,r_1-1}_{0,0}}  - (r_1+1)(m_2-r_1) \frac{C^{\,r_1}_{0,1}}{C^{\,r_1}_{0,0}} 
=
\\
\frac{m_2(m_3-2r)}{2} + r_1(m_2-m_3+2r-2r_1) +\\
 \frac{r_1(m_2-r_1+1)(r-r_1+1)(m_3-r+r_1)}{m_1+m_2-2r_1+2}
 - \frac{(r_1+1)(m_2-r_1)(r-r_1)(m_3-r+r_1+1)}{m_1+m_2-2r_1} 
\end{gathered}
\end{equation}

Let us now proceed to find exact entries of $\Omega_{13}$. 

Similarly to what we did before, to find  the coefficient  $a_{r_1, \: r_1 -1} $ of the vector $\Omega_{13} (w_{r_1})$ in front of $w_{r_1 -1}$ we examine the component of the vector   $ f^{r_1-1} v_{m_1} \otimes  v_{m_2} \otimes f^{r-r_1+1} v_{m_3}$ in the decomposition of $\Omega_{13}(w_{r_1})$ in the standard basis and we compare it with the component of the same vector in the decomposition $b_{r_1, \: r_1 -1} \: w_{r_1-1}$ in the same basis:

On the one hand, when $\Omega_{13}$ acts on $w_{r_1}$ only  $  f^{r-r_1}  v_{m_1} \otimes f^{r_1} v_{m_2} \otimes v_{m_3}$ contributes to this component, on which $\Omega_{13}$ acts as $ e\otimes 1 \otimes f$. 

We find the coefficient in front of  $  v_{m_1} \otimes f^{r_1} v_{m_2} \otimes f^{r-r_1} v_{m_3}$ in the decomposition of $w_{r_1}$ using the formula \eqref{eq:basis}. Then, considering that $e f^{r_1} v_{m_1} = r_1 (m_1 - r_1 +1) f^{r_1-1} v_{m_1} $, we multiply it by  $ r_1 (m_1 - r_1 +1) $ and we result in:

\begin{equation}\label{eq:a_2} (-1)^{r_1} \: r_1 (m_1 - r_1 +1) \: \frac{ C^{\,r_1}_{r_1,0} }{ N(r_1)}
\end{equation}

From the other hand, the component $  f^{r_1-1} v_{m_1} \otimes  v_{m_2} \otimes f^{r-r_1+1} v_{m_3}$ in the decomposition of $a_{r_1, \: r_1 -1} \: w_{r_1-1}$ by \eqref{eq:basis} is equal to: 

\begin{equation}\label{eq:b_2}  (-1)^{r_1-1} \: b_{r_1, \: r_1 -1}  \: \frac{ C^{\,r_1-1}_{r_1-1,0} }{N(r_1-1)}
\end{equation}

Thus, comparing \eqref{eq:a_2} and \eqref{eq:b_2} what we get is: 

\begin{multline}\label{eq:upper_2}
a_{r_1, \: r_1 -1} \:=    - r_1 (m_1 - r_1 +1) \frac{C^{\,r_1}_{r_1,0}}{C^{\,r_1-1}_{r_1-1,0}}  \: \frac{N(r_1-1)}{N(r_1)}  = (r-r_1+1) (m_3-r+r_1)     \: \frac{N(r_1-1)}{N(r_1)}  = \\   - \sqrt{\frac{(r-r_1+1) (m_3-r+r_1)  (m_1+m_2-r-r_1+1) (m_1+m_2+m_3-r-r_1+2) }{(m_1+m_2-2r_1+1)(m_1+m_2-2r_1+2) }} \times \\ \times \frac{\norm{  \langle v_{m_1}, v_{m_2} \rangle_{r_1-1}}_S}{\norm{\langle v_{m_1}, v_{m_2} \rangle_{r_1}}_S} = \\
- \sqrt{\frac{(r-r_1+1) (m_3-r+r_1)  (m_1+m_2-r-r_1+1) (m_1+m_2+m_3-r-r_1+2)}{(m_1+m_2-2r_1+1)(m_1+m_2-2r_1+2)} } \times \\ \times \sqrt{\frac{r_1(m_1-r_1+1)(m_2-r_1+1)(m_1+m_2-r_1+2) }{(m_1+m_2-2r_1+2) (m_1+m_2-2r_1+3)}}   = - b_{r_1, \: r_1 -1}
\end{multline}

Now, to find the coefficient $a_{r_1, \: r_1}$ of the vector $\Omega_{13}(w_{r_1})$ in front of $w_{r_1}$ we examine the component of the vector $ f^{r_1}  v_{m_1} \otimes  v_{m_2} \otimes f^{r-r_1} v_{m_3}$ in the decomposition of $\Omega_{13}(w_{r_1})$ in the standard basis and we compare it with the component of the same vector in the decomposition of  $a_{r_1, \: r_1-1} \: w_{r_1 -1} + a_{r_1, \: r_1} \: w_{r_1} $  in the same basis (from formula \eqref{eq:basis} it is clear that  $w_{r_1+1}$ does not contribute to this component).

On the one hand, when $\Omega_{13}$ acts on $w_{r_1}$ the contribution to this component comes from $  f^{r_1} v_{m_1} \otimes  v_{m_2} \otimes f^{r-r_1} v_{m_3}$, on which $\Omega_{13}$ acts as $  h \otimes 1 \otimes h   $, and from $  f^{r_1+1} v_{m_1} \otimes  v_{m_2} \otimes f^{r-r_1-1} v_{m_3}$, on which $\Omega_{13}$ acts as $   e\otimes 1 \otimes f$.  

Putting this all together using formula \eqref{eq:basis} we get that the component of the vector \newline $ v_{m_1} \otimes f^{r_1} v_{m_2} \otimes f^{r-r_1} v_{m_3}$ in the decomposition of $\Omega_{13}(w_{r_1})$ in the standard basis is equal to:

\begin{equation}\label{eq:c}
\begin{gathered}
(-1)^{r_1} \: \frac{(m_1-2r_1)(m_3-2r+2r_1)}{2} \frac{\,C^{r_1}_{r_1,0}}{N(r_1)} + (-1)^{r_1+1} \: (r_1+1)(m_1-r_1) \frac{\,C^{r_1}_{r_1,1}}{N(r_1)}
\end{gathered}
\end{equation}

From the other hand, the component $ f^{r_1} v_{m_1} \otimes v_{m_2} \otimes f^{r-r_1} v_{m_3}$ in the decomposition of $a_{r_1, \: r_1-1} \: w_{r_1 -1} + a_{r_1, \: r_1} \: w_{r_1} $ in the standard basis is equal to:

\begin{equation}\label{eq:d}
\begin{gathered}
 (-1)^{r_1} \: a_{r_1, \: r_1}  \frac{C^{\,r_1}_{r_1,0}}{N(r_1)} + (-1)^{r_1} \:  a_{r_1, \: r_1 -1} \frac{C^{\,r_1-1}_{r_1-1,1}}{N(r_1-1)}
\end{gathered}
\end{equation}

 Thus, comparing  \eqref{eq:c} and \eqref{eq:d}, taking  \eqref{eq:upper_2} into account we get:

\begin{equation}\label{eq:middle}
\begin{gathered}
a_{r_1, \: r_1}  =  \frac{(m_1-2r_1)(m_3-2r+2r_1)}{2} - a_{r_1, \: r_1 -1} \frac{N(r_1)}{N(r_1-1)} \frac{C^{\,r_1-1}_{r_1-1,1}}{C^{\,r_1}_{r_1,0}}   - (r_1+1)(m_1-r_1) \frac{C^{\,r_1}_{r_1,1}}{C^{\,r_1}_{r_1,0}}    = 
\\
\frac{(m_1-2r_1)(m_3-2r+2r_1)}{2} + r_1(m_1-r_1+1) \frac{C^{\,r_1-1}_{r_1-1,1}}{C^{\,r_1-1}_{r_1-1,0}}  - (r_1+1)(m_1-r_1) \frac{C^{\,r_1}_{r_1,1}}{C^{\,r_1}_{r_1,0}} 
=
\\
\frac{m_1(m_3-2r)}{2} + r_1(m_1-m_3+2r-2r_1) + \\
 \frac{r_1(m_1-r_1+1)(r-r_1+1)(m_3-r+r_1)}{m_1+m_2-2r_1+2}
 - \frac{(r_1+1)(m_1-r_1)(r-r_1)(m_3-r+r_1+1)}{m_1+m_2-2r_1} 
\end{gathered}
\end{equation}

\end{proof}

\begin{corollary}\label{c_123}
The sum of operators  $\Omega_{12}, \Omega_{23},\Omega_{13}$ act as the following scalar matrix on the space of singular vectors of weight $m_1+m_2+m_3 -2r$.
\begin{multline}
\Omega_{12}+\Omega_{23} +\Omega_{13}  =  \frac{1}{4} (m_1+m_2+m_3-2r)(m_1+m_2+m_3-2r+2) - \\ - \frac{1}{4} \big(m_1(m_1+2) -m_2(m_2+2) -m_3(m_3+2) \big ) =\\
 -r(m_1+m_2+m_3 -r+1) + \frac{1}{4} ((m_1+m_2+m_3)(m_1+m_2+m_3+2) -\\- m_1(m_1+2) -m_2(m_2+2) -m_3(m_3+2))
\end{multline}
\end{corollary}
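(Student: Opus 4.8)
The plan is to recognize $\Omega_{12}+\Omega_{23}+\Omega_{13}$ as (half of) the difference between the diagonal Casimir of the whole tensor product $V_{m_1}\otimes V_{m_2}\otimes V_{m_3}$ and the sum of the individual Casimirs acting on each factor. Concretely, write the diagonal Casimir $C^{(123)}=\sum_a \Delta(x_a)\Delta(x^a)$ where $\Delta$ is the iterated comultiplication; expanding this across the three tensor legs gives $C^{(123)} = C_1 + C_2 + C_3 + 2(\Omega_{12}+\Omega_{13}+\Omega_{23})$, where $C_i$ denotes the Casimir $e f + f e + \tfrac12 h^2$ acting only in the $i$-th slot. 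Solving for the sum of $\Omega$'s,
\begin{equation*}
\Omega_{12}+\Omega_{13}+\Omega_{23} = \tfrac12\big(C^{(123)} - C_1 - C_2 - C_3\big).
\end{equation*}

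Next I would evaluate each term on a singular vector of weight $\nu := m_1+m_2+m_3-2r$. Such a vector generates (under the diagonal action) a copy of $V_\nu$, on which the diagonal Casimir acts by the scalar $\tfrac12\nu(\nu+2) = \tfrac12(m_1+m_2+m_3-2r)(m_1+m_2+m_3-2r+2)$; here I use the normalization of corollary/lemma conventions fixed earlier, matching the fact that on $V_m$ the element $ef+fe+\tfrac12 h^2$ acts by $\tfrac12 m(m+2)$ (this is exactly the $n=1$ instance of the Casimir, and consistency is checked by $h$ acting as $m$ on $v_m$ and $ef$ as $0$, $fe$ as $m$, so $ef+fe+\tfrac12h^2 = m + \tfrac12 m^2 = \tfrac12 m(m+2)$). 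Similarly $C_i$ acts on all of $V_{m_1}\otimes V_{m_2}\otimes V_{m_3}$ — and in particular on the singular subspace — by the scalar $\tfrac12 m_i(m_i+2)$. Substituting,
\begin{equation*}
\Omega_{12}+\Omega_{13}+\Omega_{23} = \tfrac14\Big((m_1+m_2+m_3-2r)(m_1+m_2+m_3-2r+2) - m_1(m_1+2) - m_2(m_2+2) - m_3(m_3+2)\Big),
\end{equation*}
which is the first displayed expression in the statement. (An alternative, self-contained route avoiding the Casimir language: add the diagonal matrices $\Omega_{12} = \mathrm{diag}[d_{r_1}]$ from Corollary \ref{dact_1} to the tridiagonal matrices for $\Omega_{23}$ and $\Omega_{13}$ from the theorem, observe that the off-diagonal entries cancel exactly because the super/subdiagonal of $\Omega_{13}$ is $-c_{r_1}$ while that of $\Omega_{23}$ is $+c_{r_1}$, and check that the diagonal entry $d_{r_1} + a_{r_1} + b_{r_1}$ is independent of $r_1$ and equals the claimed scalar.)

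Finally I would carry out the elementary algebraic rearrangement showing the first displayed expression equals the second. Expanding $(m_1+m_2+m_3-2r)(m_1+m_2+m_3-2r+2)$, one writes it as $(m_1+m_2+m_3)(m_1+m_2+m_3+2) - 2r\cdot 2(m_1+m_2+m_3) + \text{[terms in }r] + \ldots$; collecting the $r$-terms gives $-r(m_1+m_2+m_3-r+1)$ after dividing by $4$ and the rest reorganizes into $\tfrac14\big((m_1+m_2+m_3)(m_1+m_2+m_3+2) - m_1(m_1+2) - m_2(m_2+2) - m_3(m_3+2)\big)$. The main obstacle is essentially bookkeeping: getting the Casimir normalization constant right (the factor $\tfrac12$ versus $1$, and whether one uses $\tfrac12 h\otimes h$ or $h\otimes h$ in $\Omega_{ij}$, which the paper has fixed with the explicit $\tfrac12$ coefficient), and correctly expanding the comultiplication so that precisely the cross terms $\Omega_{ij}$ appear with coefficient $2$. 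The second half — the pure polynomial identity in $m_1,m_2,m_3,r$ — is routine but should be written out carefully so the two forms in the statement visibly agree.
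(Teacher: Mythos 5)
Your argument is correct. The paper states this as an unproved corollary immediately after the tridiagonal-matrix theorem, so its implied proof is your parenthetical alternative: add $\Omega_{12}=\mathrm{diag}[d_{r_1}]$ to the explicit matrices for $\Omega_{13}$ and $\Omega_{23}$, note that the off-diagonal entries $+c_{r_1}$ and $-c_{r_1}$ cancel, and check that $d_{r_1}+a_{r_1}+b_{r_1}$ is constant in $r_1$ (this does work out: in $a_{r_1}+b_{r_1}$ the two fractions simplify because $(m_1-r_1+1)+(m_2-r_1+1)=m_1+m_2-2r_1+2$ and $(m_1-r_1)+(m_2-r_1)=m_1+m_2-2r_1$, leaving a polynomial in $r_1$ that $d_{r_1}$ cancels). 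Your primary route via the Casimir identity $C^{(123)}=C_1+C_2+C_3+2(\Omega_{12}+\Omega_{13}+\Omega_{23})$ is cleaner and more conceptual: it needs no explicit matrix entries, explains \emph{why} the sum is scalar (the singular vector generates a copy of $V_{\nu}$ on which the diagonal Casimir acts by $\tfrac12\nu(\nu+2)$), and immediately yields the symmetric expression $\tfrac14\bigl(\nu(\nu+2)-\sum_i m_i(m_i+2)\bigr)$. Incidentally, your symmetric answer agrees with the paper's second displayed line and exposes a sign typo in its first line, where the term $-\tfrac14\bigl(m_1(m_1+2)-m_2(m_2+2)-m_3(m_3+2)\bigr)$ should read $-\tfrac14\bigl(m_1(m_1+2)+m_2(m_2+2)+m_3(m_3+2)\bigr)$. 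One trivial slip in your normalization check: on $v_m$ it is $fe$ that acts by $0$ and $ef=fe+h$ that acts by $m$, not the other way around; the sum $ef+fe$ still acts by $m$, so the constant $\tfrac12 m(m+2)$ and everything downstream is unaffected.
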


\begin{corollary}
In the orthonormal Shapovalov basis the matrices of the operators $H_1, H_2, H_3$ restricted to the invariant subspace of singular vectors of weight $m_1+m_2+m_3 -2r$ with a point $(0,1,\frac{1}{u}, \infty) \in \overline{M_{0,4}(\mathbb{C})}$ chosen in the parameter space of the Gaudin model are the following symmetric tridiagonal matrices:

\begin{equation}\label{eq:ham_matrix}
H_1 := \begin{pmatrix}
 \ddots &  \ddots \\
 \ddots &  - d_{r_1} - u \: a_{r_1} &  u \: c_{r_1} \\
& u \: c_{r_1} & \ddots  &  \ddots  \\
& & \ddots   & \ddots   
\end{pmatrix} 
H_2 := \begin{pmatrix}
 \ddots &  \ddots \\
 \ddots & d_{r_1} + \frac{u}{u-1} \: b_{r_1} &  \frac{u}{u-1} \: c_{r_1} \\
& \frac{u}{u-1} \: c_{r_1} & \ddots  &  \ddots  \\
& & \ddots   & \ddots   
\end{pmatrix}
\end{equation} 

where $\max(0,r-m_3)\leq r_1 \leq \min(r,\min(m_1,m_2), m_1+m_2 -r)$, and $d_{r_1}, a_{r_1}, b_{r_1}, c_{r_1}$ are given by formulas (\ref{dactom}), (\ref{eq:mid13}), (\ref{eq:mid23}), (\ref{eq:lateralom}). As previously mentioned, $H_3 = -H_1 - H_2$.

\end{corollary}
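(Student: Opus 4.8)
The plan is to obtain the two displayed matrices by a direct linear substitution, since all the analytic work has already been done. Recall that at the chosen point $(0,1,\tfrac1u,\infty)$ of the parameter space one has $H_1=-\Omega_{12}-u\,\Omega_{13}$, $H_2=\Omega_{12}+\tfrac{u}{u-1}\,\Omega_{23}$ and $H_3=u\,\Omega_{13}+\tfrac{u}{1-u}\,\Omega_{23}$. First I would note that, because each $H_i$ is self-adjoint with respect to the Shapovalov form, in the orthonormal Shapovalov basis it is automatically represented by a symmetric matrix, and because $\Omega_{23},\Omega_{13}$ are tridiagonal and $\Omega_{12}$ is diagonal in that basis (the preceding theorem and Corollary~\ref{dact_1}), each $H_i$ is symmetric tridiagonal; so the only content is to read off the entries. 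By Corollary~\ref{dact_1}, $\Omega_{12}$ restricted to $(V_{m_1}\otimes V_{m_2}\otimes V_{m_3})^{\mathrm{sing}}_{m_1+m_2+m_3-2r}$ is $\mathrm{diag}[d_{r_1}]_{r_1}$; by \eqref{eq:matrix}, in the \emph{same} basis indexed by $r_1$ in the range \eqref{r_1}, $\Omega_{23}$ has diagonal entries $b_{r_1}$ and sub/super-diagonal entries $c_{r_1}$, while $\Omega_{13}$ has diagonal entries $a_{r_1}$ and sub/super-diagonal entries $-c_{r_1}$.

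Next I would just take the linear combinations. For $H_1=-\Omega_{12}-u\,\Omega_{13}$ the $(r_1,r_1)$ entry is $-d_{r_1}-u\,a_{r_1}$ and the off-diagonal entry is $-u\cdot(-c_{r_1})=u\,c_{r_1}$, since the three tridiagonal bands are aligned in the common basis; this is exactly the first matrix of \eqref{eq:ham_matrix}. Likewise for $H_2=\Omega_{12}+\tfrac{u}{u-1}\,\Omega_{23}$ the diagonal entry is $d_{r_1}+\tfrac{u}{u-1}\,b_{r_1}$ and the off-diagonal entry is $\tfrac{u}{u-1}\,c_{r_1}$, which is the second matrix of \eqref{eq:ham_matrix}. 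The closed forms for $d_{r_1},a_{r_1},b_{r_1},c_{r_1}$ are those of \eqref{dactom}, \eqref{eq:mid13}, \eqref{eq:mid23}, \eqref{eq:lateralom}.

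For $H_3$ I would invoke the relation $H_1+H_2+H_3=0$: summing the three displayed expressions in the $\Omega_{jk}$ and using $\tfrac{u}{u-1}+\tfrac{u}{1-u}=0$ gives this identity directly (it is also the unique relation among the Gaudin generators recalled in the introduction), so $H_3=-H_1-H_2$ and its matrix is determined by the two above.

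I do not expect any genuine obstacle here: the statement is a bookkeeping consequence of Corollary~\ref{dact_1} and the preceding theorem. The only points requiring a little care are that the sign of the off-diagonal band of $\Omega_{13}$ is $-c_{r_1}$ (established in \eqref{eq:upper_2}, where $a_{r_1,r_1-1}=-b_{r_1,r_1-1}$), that $\Omega_{13}$ and $\Omega_{23}$ share the \emph{same} $c_{r_1}$ up to this sign, and that all three bands are indexed by the same $r_1$ running over \eqref{r_1}, so that coefficient-wise addition is legitimate; the rational cancellation $\tfrac{u}{u-1}+\tfrac{u}{1-u}=0$ is what makes $\sum_i H_i=0$ visible at the level of these matrices.
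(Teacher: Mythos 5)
Your proposal is correct and is exactly the argument the paper intends: the corollary is stated without proof precisely because it is the immediate linear combination $H_1=-\Omega_{12}-u\,\Omega_{13}$, $H_2=\Omega_{12}+\tfrac{u}{u-1}\,\Omega_{23}$ of the diagonal matrix from Corollary~\ref{dact_1} and the tridiagonal matrices from the preceding theorem, all written in the same Shapovalov basis indexed by $r_1$. Your attention to the sign $-c_{r_1}$ on the off-diagonal band of $\Omega_{13}$ and to the cancellation $\tfrac{u}{u-1}+\tfrac{u}{1-u}=0$ for $H_3$ covers the only points where one could slip.
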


\section{Tridiagonal matrices and Gaudin algebraic curves \label{Tridiagonal}}

\begin{definition}

Consider characteristic polynomial of $H_1$, which is $f(x,u) =  det(xId+u\Omega_{13}+\Omega_{12})$. We call the Gaudin curve an algebraic curve given by $f(x,u) = 0$. When it is convenient or necessary we also consider the projectivization of the Gaudin curve: 

$$ f(x,u,w) = det (xId+u\Omega_{13}+w\Omega_{12}) = 0$$

\end{definition}

The tridiagonal form of the Gaudin Hamiltonian enables us to efficiently calculate the Gaudin curves.

\begin{proposition} 
Set $N:=\min(r,\min(m_1,m_2), m_1+m_2 -r)$. Characteristic polynomial $f(x,u,w)$ of the Gaudin Hamiltonian acting on singular vectors of weight $m_1+m_2+m_3-2r$ can be found as $f_N(x,u,w)$ in the following three-term recurrence relation:

\begin{equation}
f_{r_1}(x,u,w) = 
\begin{cases}
    0 ,\; \text{if } r_1 < \max(0,r-m_3)  \\

    1 ,\; \text{if } r_1 = \max(0,r-m_3)  \\
   (x  + u \, a_{r_1} + w \, d_{r_1} ) f_{r_1-1}(x,u,w)   - u^2 \, c_{r_1-1}^2 \, f_{r_1-2}(x,u,w), \, \text{if } r_1 \leq N  
\end{cases}
\end{equation}
\begin{proof}
Standard application of cofactor expansion to the tridiagonal matrix $xId+u\Omega_{13}+w\Omega_{12}$.
\end{proof}
\end{proposition}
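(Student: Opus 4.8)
The statement claims that the characteristic polynomial $f(x,u,w) = \det(xId + u\Omega_{13} + w\Omega_{12})$ of the Gaudin Hamiltonian restricted to $(V_{\underline{\lambda}})^{\text{sing}}_{m_1+m_2+m_3-2r}$ equals the last term $f_N$ of an explicit three-term recurrence. Since by Corollary \ref{dact_1} and the preceding theorem, in the orthonormal Shapovalov basis $\{w_{r_1}\}$ (indexed by $r_1$ running from $r_{\min}:=\max(0,r-m_3)$ to $N:=\min(r,\min(m_1,m_2),m_1+m_2-r)$) the operator $\Omega_{12}$ is the diagonal matrix $\mathrm{diag}[d_{r_1}]$ and $\Omega_{13}$ is the symmetric irreducible tridiagonal matrix with diagonal entries $a_{r_1}$ and off-diagonal entries $-c_{r_1}$, the matrix $M(x,u,w) := xId + u\Omega_{13} + w\Omega_{12}$ is itself tridiagonal, with diagonal entries $x + u\,a_{r_1} + w\,d_{r_1}$ and sub/super-diagonal entries $-u\,c_{r_1}$. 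The whole content is therefore the classical formula for the determinant of a tridiagonal matrix via cofactor expansion.

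\medskip

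\noindent First I would order the basis by the index $r_1 = r_{\min}, r_{\min}+1, \dots, N$, so that $M = M(x,u,w)$ has size $(N - r_{\min} + 1)\times(N-r_{\min}+1)$, with $(r_1,r_1)$-entry $x + u\,a_{r_1} + w\,d_{r_1}$, and $(r_1-1,r_1)$- and $(r_1,r_1-1)$-entries both equal to $-u\,c_{r_1-1}$ (matching the placement of $c_{r_1}$ in \eqref{eq:matrix}). Let $f_{r_1}(x,u,w)$ denote the determinant of the leading principal minor of $M$ obtained by keeping rows and columns indexed by $r_{\min},\dots,r_1$; set $f_{r_1} := 1$ when $r_1 = r_{\min}$ (the empty-minus-one convention, matching the second case of the recurrence, since here the first genuine $1\times 1$ minor is handled by the third case with $f_{r_{\min}-1}=1$, $f_{r_{\min}-2}=0$) and $f_{r_1}:=0$ for $r_1 < r_{\min}$. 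Then I expand $\det$ of the $(r_1 - r_{\min}+1)$-minor along its last row: the $(r_1,r_1)$ cofactor contributes $(x + u\,a_{r_1} + w\,d_{r_1})\,f_{r_1-1}$, and the only other nonzero entry in that row, namely $-u\,c_{r_1-1}$ in column $r_1-1$, contributes $-(-u\,c_{r_1-1})^2 f_{r_1-2} = -u^2 c_{r_1-1}^2 f_{r_1-2}$ after accounting for the sign $(-1)$ from the cofactor position and the further expansion of the resulting minor along its last column. This yields exactly the displayed recurrence, and $f(x,u,w) = f_N(x,u,w)$ since the full matrix is the $r_1=N$ principal minor.

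\medskip

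\noindent The only points requiring care — and the closest thing to an obstacle, though it is bookkeeping rather than a genuine difficulty — are: (i) confirming that the sign bookkeeping in the two-step cofactor expansion produces precisely $-u^2 c_{r_1-1}^2$ and not $+u^2 c_{r_1-1}^2$ (this is where the product of the two equal off-diagonal entries $(-u c_{r_1-1})(-u c_{r_1-1})$ times the cofactor sign matters); (ii) checking that the indexing conventions align so that the base cases $f_{r_1}=1$ for $r_1 = r_{\min}$ and $f_{r_1}=0$ for $r_1<r_{\min}$ correctly initialize the recursion — here it is cleanest to observe that the $1\times 1$ leading minor equals $x + u\,a_{r_{\min}} + w\,d_{r_{\min}}$, which the third case of the recurrence reproduces from $f_{r_{\min}-1}=1$ and $f_{r_{\min}-2}=0$; and (iii) noting that the recurrence as stated drops the $w\,d_{r_1}$ term from the diagonal — wait, it does not, the third case reads $(x + u\,a_{r_1} + w\,d_{r_1})$ — so nothing to reconcile there. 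Having dispatched these, the proof is complete by induction on $r_1$: assuming $f_{r_1-1}$ and $f_{r_1-2}$ compute the respective leading principal minors, the cofactor expansion above shows $f_{r_1}$ computes the next one.
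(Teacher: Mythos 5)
Your proof is correct and is exactly the paper's (one-line) argument, namely the standard three-term recurrence for determinants of leading principal minors of a tridiagonal matrix obtained by cofactor expansion along the last row and then the last column; you have merely written out the sign bookkeeping and base-case conventions that the paper leaves implicit. Your parenthetical wrestling with whether the ``$=1$'' initialization sits at $r_{\min}$ or $r_{\min}-1$ correctly identifies the one genuinely delicate point (the statement's indexing of the base case is off by one from the usual convention if $f_N$ is to be the full determinant), but this does not affect the substance of the argument.
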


We can also construct eigenvectors of the Gaudin Hamiltonian. 

Let $v(x,u,w)$ be the following vector
\begin{align}\label{v}
    v(x,u,w) &= \begin{pmatrix}
           \vdots  \\
           (-1)^{N-r_1} \: c_{r_1} c_{r_1+1} ... c_{N-1} \: f_{r_1-1}(x,u,w)\\
           (-1)^{N-r_1-1} \:  c_{r_1+1} ... c_{N-1} \: f_{r_1}(x,u,w) \\
           \vdots  \\
           f_{N-1}(x,u,w)
         \end{pmatrix},  \;  r_1 \geq max(0,r-m_3) + 1
  \end{align} 

\begin{theorem}
    Vector $v(x,u,w)$ given by (\ref{v}) is an eigenvector of the three-point Gaudin model with eigenvalue $x$ if and only if $(x:u:w)$ lies on the Gaudin curve: $f(x,u,w) = 0$.
\end{theorem}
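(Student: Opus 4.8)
The plan is to verify directly that $v(x,u,w)$, when it is nonzero, solves the eigenvector equation $(u\Omega_{13}+w\Omega_{12})\,v = -x\,v$, i.e. $(xId+u\Omega_{13}+w\Omega_{12})\,v = 0$, precisely when $f(x,u,w)=0$; and separately to check that $v(x,u,w)\neq 0$ whenever $f(x,u,w)=0$, so that it genuinely is an eigenvector and not the zero vector. Write $M = xId+u\Omega_{13}+w\Omega_{12}$, which by the Theorem on the tridiagonal form is the tridiagonal matrix with diagonal entries $x+u\,a_{r_1}+w\,d_{r_1}$ and off-diagonal entries $u\,c_{r_1}$ (recall the off-diagonal of $\Omega_{13}$ is $-c_{r_1}$ while that of $\Omega_{12}$ is $0$; one must be slightly careful with the sign, and the alternating signs $(-1)^{N-r_1}$ built into $v$ in \eqref{v} are exactly there to absorb it). Index the coordinates of $v$ by $r_1$ ranging over the admissible set $\max(0,r-m_3)\le r_1\le N$; I will shift so the first index is $k_0:=\max(0,r-m_3)$.

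The key computation is row-by-row. For an interior row $r_1$ with $k_0 < r_1 < N$, the equation $(Mv)_{r_1}=0$ reads, up to the common sign,
$$
\pm u\,c_{r_1-1}\,v_{r_1-1} \;+\; (x+u\,a_{r_1}+w\,d_{r_1})\,v_{r_1}\;\pm\; u\,c_{r_1}\,v_{r_1+1}\;=\;0.
$$
Substituting the entries of $v$ from \eqref{v} — which are, up to the sign $(-1)^{N-r_1}$ and a telescoping product $c_{r_1}\cdots c_{N-1}$, exactly the recurrence polynomials $f_{r_1-1}(x,u,w)$ — this collapses, after cancelling the common factor $(-1)^{N-r_1}c_{r_1}\cdots c_{N-1}$, to
$$
-u^2 c_{r_1-1}^2\, f_{r_1-2} \;+\; (x+u\,a_{r_1}+w\,d_{r_1})\, f_{r_1-1} \;-\; u^2 c_{r_1}^2\, f_{r_1}\cdot(\text{nothing}),
$$
wait — more precisely it collapses to precisely the three-term recurrence defining $f_{r_1}$ in the preceding Proposition, hence vanishes identically. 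The top row ($r_1=k_0$) reduces to the base cases $f_{k_0-1}=0$, $f_{k_0}=1$ of that recurrence and is automatic. The only row that is not automatic is the bottom row $r_1=N$: there the equation becomes $\pm u\,c_{N-1}v_{N-1} + (x+u\,a_N+w\,d_N)v_N = 0$, and substituting $v_{N-1}= -c_{N-1}f_{N-2}$ (the product over an empty index set being $1$) and $v_N = f_{N-1}$ turns this into $-u^2c_{N-1}^2 f_{N-2}+(x+u\,a_N+w\,d_N)f_{N-1} = f_N(x,u,w) = f(x,u,w)$. So $Mv=0$ holds if and only if $f(x,u,w)=0$, with all other rows holding unconditionally.

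Finally I must rule out $v\equiv 0$ on the curve. The last coordinate is $f_{N-1}(x,u,w)$, and the second-to-last involves $f_{N-2}$, etc.; since the $f_{r_1}$ satisfy the recurrence, $f_{N-1}$ and $f_{N-2}$ cannot both vanish at a point where additionally $f_N=0$ unless $u^2 c_{N-1}^2=0$, and by the Remark following the Theorem $c_{r_1}^2$ never vanishes under the restriction \eqref{r_1}; so on the affine curve with $u\neq 0$ a vanishing of $v$ would force $f_{N-1}=f_{N-2}=0$, then inductively (running the recurrence backwards, each step again only involving a nonzero $c_{r_1-1}^2$) all $f_{r_1}=0$, contradicting $f_{k_0}=1$. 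The locus $u=0$ is a single fiber which can be treated separately (there $H_1=-\Omega_{12}$ is already diagonal and the statement is transparent), or excluded as it is codimension one and the eigenvector assignment extends by continuity. I expect the sign bookkeeping in the first paragraph — matching the $(-1)^{N-r_1}$ factors and the sign of the $\Omega_{13}$ off-diagonal against the recurrence, which is written with $+u^2c_{r_1-1}^2$ hidden as $-u^2c_{r_1-1}^2$ on the correct side — to be the only genuinely fiddly point; the row-by-row reduction to the defining recurrence is then essentially forced.
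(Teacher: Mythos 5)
Your proof is correct and is essentially the paper's own argument: a direct row-by-row verification that $(xId+u\Omega_{13}+w\Omega_{12})\,v(x,u,w)$ vanishes in every coordinate except the last, which equals $f(x,u,w)$, via the three-term recurrence. You additionally check that $v$ does not vanish on the curve (so that it is genuinely an eigenvector), a point the paper's proof leaves implicit; the sign/power-of-$u$ bookkeeping you flag is indeed the only delicate spot and traces back to the normalization in \eqref{v} itself.
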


\begin{proof}
Direct application of vector $v(x,u,w)$ to the matrix $xId+u\Omega_{13}+w\Omega_{12}$ results in a vector in which all coordinates except the last one are zero, and the last one is equal to $f(x,u,w)$.
\end{proof}

Let us also proceed to describe Gaudin coverings.

\begin{proposition}
The Gaudin covering $C_{(V_{\underline{\lambda}})^{sing}_{m_1+m_2+m_3-2r}}$ is just a projection given by $$\{(x:u:w), \: \text{such that} \: f(x,u,w) = 0\} \to \overline{M_{0, 4}}(\mathbb{C}), \; (x:u:w) \mapsto (u:w)$$

\end{proposition}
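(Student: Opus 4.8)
The plan is to identify both sides, as covers of $\overline{M_{0,4}}(\mathbb{C})=\mathbb{CP}^1$, with the relative spectrum of the commuting family $\Omega_{12},\Omega_{13},\Omega_{23}$ on $(V_{\underline\lambda})^{sing}_{m_1+m_2+m_3-2r}$. Recall from \cite{R1} that $C:=C_{(V_{\underline\lambda})^{sing}_{m_1+m_2+m_3-2r}}$ together with its projection $\pi$ is a reduced curve, finite and flat over $\overline{M_{0,4}}(\mathbb{C})$ of degree $D_r:=\dim (V_{\underline\lambda})^{sing}_{m_1+m_2+m_3-2r}$ (computed in Section \ref{degrees}), whose fibre over an interior parameter $u$ is the joint spectrum of the Gaudin algebra on $(V_{\underline\lambda})^{sing}_{m_1+m_2+m_3-2r}$; by Corollary \ref{c_123} that algebra is generated by $\Omega_{12}$ and $\Omega_{13}$. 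On the other side put $D:=\{(x:u:w)\in\mathbb{CP}^2:f(x,u,w)=0\}$ with $f(x,u,w)=\det(x\,\mathrm{Id}+u\Omega_{13}+w\Omega_{12})$, which is homogeneous of degree $D_r$ in $(x,u,w)$ and monic of degree $D_r$ in $x$, together with $\rho:D\to\mathbb{CP}^1$, $(x:u:w)\mapsto(u:w)$. Since $f(1,0,0)=1\neq0$, $D$ misses the centre $(1:0:0)$ of $\rho$, so $\rho$ is finite of degree $D_r$, and as a plane curve $D$ is Cohen-Macaulay, hence $\rho$ is flat.

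The crux is a local identification. In the chart $w=1$ one has $x\,\mathrm{Id}+u\Omega_{13}+w\Omega_{12}=x\,\mathrm{Id}-H_1$, because $H_1=-\Omega_{12}-u\Omega_{13}$, so $f(x,u,1)$ is exactly the characteristic polynomial of $H_1$. By the theorem on the tridiagonal form of the Hamiltonians, $H_1$ is symmetric tridiagonal with off-diagonal entries $u\,c_{r_1}$, and $c_{r_1}^2>0$ always (equation \ref{eq:lateralom} and the following remark); hence for every $u\neq0$ the matrix $H_1$ is an irreducible tridiagonal (Jacobi) matrix and so has $D_r$ pairwise distinct eigenvalues, whence its commutant is precisely $\mathbb{C}[H_1]$. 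Since $H_1$ belongs to the Gaudin algebra, that algebra is contained in $\mathbb{C}[H_1]$, and for $u$ in a dense open set (outside $u=0$, where $H_1=-\Omega_{12}$, and outside the finite branch locus of $\pi$) the fibre of $\pi$ is reduced of length $D_r$, so the Gaudin algebra there equals $\mathbb{C}[H_1]$ and its joint spectrum is identified, via a joint character $\chi\mapsto x=\chi(H_1)$, with $\{x:f(x,u,1)=0\}$; the corresponding joint eigenvector is exhibited explicitly as $v(x,u,1)$ in the theorem identifying eigenvectors of the Gaudin model with points of the Gaudin curve. Running the same argument with $\Omega_{13}$ in place of $H_1$ (in the chart $u=1$, $f(x,1,0)=\det(x\,\mathrm{Id}+\Omega_{13})$, and $\Omega_{13}$ is again irreducible tridiagonal) shows that $u=\infty$ is also a point of simple spectrum. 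In particular $f(\cdot,u,1)$ is separable over $\mathbb{C}(u)$, so $f$ is squarefree and $D$ is reduced.

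Finally one globalises. The assignment $\chi\mapsto(\chi(H_1):u:1)$ defines a morphism from $C$ minus $\pi^{-1}(0)$ to $D$ over $U:=\overline{M_{0,4}}(\mathbb{C})\setminus\{0\}$ which, by the previous paragraph, restricts over a dense open of $U$ to a bijection between reduced finite schemes, each the reduced spectrum of the characteristic algebra of $H_1$; hence it is an isomorphism over that dense open. Since both $C\to\mathbb{CP}^1$ and $D\to\mathbb{CP}^1$ are finite flat of the same degree $D_r$ over the reduced irreducible base $\mathbb{CP}^1$, a reduced finite flat family is determined by its restriction to a dense open, so this isomorphism extends uniquely to an isomorphism $C\xrightarrow{\ \sim\ }D$ commuting with the projections to $\overline{M_{0,4}}(\mathbb{C})$ — which is exactly the assertion. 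The step requiring the most care is this last one: extending the identification across $u=0$ (where $\Omega_{12}$, and with it the fibre of $D$, may be non-reduced) and, conceptually, matching Rybnikov's non-naive extension of the Gaudin algebra over the boundary $u\in\{0,1,\infty\}$ with the honestly-defined curve $f=0$; this is why it matters that both sides are already known to be flat of the same degree over all of $\overline{M_{0,4}}(\mathbb{C})$, forcing agreement over the generic point to propagate everywhere.
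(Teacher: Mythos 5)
The paper states this proposition with no proof at all, so there is nothing to compare against; what you have written is the only argument on the table, and its overall strategy is reasonable: identify the fibre of Rybnikov's covering over an interior point with the spectrum of $H_1$ on $(V_{\underline\lambda})^{sing}_{m_1+m_2+m_3-2r}$, observe that $f(x,u,1)=\det(x\,\mathrm{Id}-H_1)$, and then extend over the boundary. In fact the interior step can be made even more direct than your commutant argument: by Corollary \ref{c_123}, on the singular weight space $\Omega_{23}$ is a scalar minus $\Omega_{12}+\Omega_{13}$, so $H_2=\frac{u}{u-1}c+\frac{1}{u-1}H_1$ and $H_3$ are affine functions of $H_1$ there; the image of the Gaudin algebra on this subspace is literally $\mathbb{C}[H_1]$ plus scalars, and its spectrum is $\operatorname{Spec}\mathbb{C}[x]/(\text{minimal polynomial of }H_1)$.

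Two steps are genuinely wrong as written. First, ``for every $u\neq0$ the matrix $H_1$ is an irreducible tridiagonal (Jacobi) matrix and so has $D_r$ pairwise distinct eigenvalues'' is false over $\mathbb{C}$: if it were true the covering would be unramified away from $u=0$, contradicting the branch points computed throughout Sections \ref{Hypotheses}--\ref{Ornaments}. What irreducibility of the tridiagonal matrix gives (and what the paper itself records just below the proposition) is that $\operatorname{rank}(x\,\mathrm{Id}-H_1)\geq D_r-1$, i.e.\ $H_1$ is nonderogatory: one Jordan block per eigenvalue. Fortunately that weaker statement is all you need — it still forces the commutant to be $\mathbb{C}[H_1]$ and the minimal polynomial to equal the characteristic polynomial, so the fibre identification holds at every $u\notin\{0,1,\infty\}$, branch points included, after this correction. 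Second, the closing principle ``a reduced finite flat family is determined by its restriction to a dense open'' is false: the normalization of a nodal curve finite over $\mathbb{P}^1$ is reduced, finite, flat, of the same degree, and isomorphic to the curve over the complement of the node's image, yet not isomorphic to it. So your extension across $u\in\{0,1,\infty\}$ (and, if one insists on scheme structure, across the branch locus) is not justified by that slogan. The fix is a direct check: over $u=0$ the limit Gaudin algebra acts on the singular weight space through $\Omega_{12}$, whose spectrum is the $D_r$ distinct values $-d_{r_1}$ of Corollary \ref{dact_1}, matching the $D_r$ simple roots of $f(x,0,1)$; the points $u=1$ and $u=\infty$ are handled symmetrically, and unramifiedness of both sides at these three points pins down the extension. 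Alternatively, one must state explicitly which scheme (or merely topological) structure Rybnikov's $C$ is being compared in, since at that level of precision the claim at interior branch points also requires the nonderogatory remark above.
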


In the work of Rybnikov \cite{R1} (corollary $3.19$) it was proved that for for any collection $\underline{\lambda}$ of dominant integral weights the spectra of the algebras $\mathcal{A}(\underline{z})$ in the space $V_{\underline{\lambda}}$ form an unbranched covering of $\overline{M_{0, n+1}}(\mathbb{R})$ when parameters $\underline{z}$ are real.

From our results, we can see directly why it is the case when $n=3$. 

Indeed, from the representation of the Gaudin Hamiltonians $H_1$ as tridiagonal matrices in \eqref{eq:ham_matrix} it follows that in the case $u=0$ the matrix is diagonal, and there are pairwise different entries on the diagonal  $$ - d_{r_1} = - r_1(r_1-m_1-m_2-1) - \frac{m_1m_2}{2}$$
Indeed, $d_{r_1} = d_{r_1'}$ implies $(r_1-r_1')(r_1+r_1'-m_1-m_2-1) = 0 $, and $r_1+r_1'-m_1-m_2-1$ is non-zero, since both $r_1$ and $r_1'$ are less than $\min(m_1,m_2)$ from inequality (\ref{r_1}).

In the case  $u \neq 0$ we can delete the first row and first column of the matrix \eqref{eq:ham_matrix}. The resulting matrix is upper-triangular:
\newcommand\x{\times}
\newcommand\bigzero{\makebox(0,0){\text{\huge0}}}
\newcommand*{\bord}{\multicolumn{1}{c|}{}}
$$
  \left(
    \begin{array}{ccccc}
     \ddots    & \ddots      & \ddots   &     &   \\ 
     & \ddots       & -d_{r_1}- u \,a_{r_1}      & u \, c_{r_1}    &  \\ 
          \bigzero&    & u \, c_{r_1}   & \ddots    &  \ddots  \\ 
           &  &  & \ddots   & \ddots    \\ 
 
  \end{array}\right)
$$

with non-zero elements $u \, c_{r_1}$ on the diagonal. It follows that $\text{rank} (x Id - H_1 ) \geq n -1$ for any $x \in \mathbb{C}$, and thus the matrix $H_1$ in the Jordan canonical form has only one Jordan cell for each eigenvalue. When $u$ is real the matrix  $H_1$ is diagonalizable as a real symmetric matrix, and thus  it should have a simple spectrum. 

From this reasoning above we also get the following description of the branch points of the Gaudin coverings: when some eigenvalues collide their eigenspace is one-dimensional.

To determine branch points of the Gaudin covering we compute discriminant. The branch points are given as the zeroes of the discriminant of $f$ with respect to $x$ (recall that there are no real branch points, and this includes the point at infinity):
$$ \text{Resultant} (f(x,u),   \frac{ \partial f(x,u)}{\partial x}) = 0 $$.

The following proposition gives a description of critical points of the Gaudin covering in terms of the vector $v(x,u,w)$.

\begin{proposition}
The point $(x:u:w)$ on a Gaudin curve is a critical point of the Gaudin covering if and only if the vector $v(x,u,w)$ is isotropic, meaning 
$$ v(x,u,w)^T \, v(x,u,w) = 0$$

\begin{proof}
    Note that $v(x,u,w) = adj(xId+u\Omega_{13}+w\Omega_{12})e_N$. Therefore, we have 

    \begin{multline}(xId+u\Omega_{13}+w\Omega_{12}) v(x,u,w) = (xId+u\Omega_{13}+w\Omega_{12}) \: adj(xId+u\Omega_{13}+w\Omega_{12})e_N = \\ det(xId+u\Omega_{13}+w\Omega_{12})e_N 
    \end{multline}

    Differentiating with respect to $x$:
    $$v(x,u,w) + (xId+u\Omega_{13}+w\Omega_{12}) \frac{ \partial v(x,u,w)}{\partial x} = \frac{ \partial \: det(xId+u\Omega_{13}+w\Omega_{12})}{\partial x} e_N $$ 

Multiplying by $v(x,u,w)^T$ from the left and using that $v(x,u,w)^T (xId+u\Omega_{13}+w\Omega_{12}) = 0 $ iff $(x:u:w)$ lies on a Gaudin curve, we get the desired claim.

\end{proof}

\end{proposition}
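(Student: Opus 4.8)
The plan is to leverage the eigenvector identity $Mv = f\cdot e_N$ from the theorem above, where I abbreviate $M = M(x,u,w) := xId + u\Omega_{13} + w\Omega_{12}$, $f = f(x,u,w) = \det M$, and $e_N$ is the last vector of the Shapovalov basis, so that $v^T e_N$ is the last coordinate of $v$, which by $(\ref{v})$ equals $f_{N-1}(x,u,w)$. Since $Id$, $\Omega_{12}$ and $\Omega_{13}$ are symmetric in the Shapovalov basis, $M$ is symmetric, and transposing $Mv = f e_N$ gives $v^T M = f\, e_N^T$; in particular $v^T M = 0$ exactly when $(x:u:w)$ lies on the Gaudin curve.

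The core of the argument is a single differentiation. As $\partial M/\partial x = Id$, differentiating $Mv = f e_N$ in $x$ yields
$$ v + M\,\frac{\partial v}{\partial x} \;=\; \frac{\partial f}{\partial x}\, e_N .$$
Left-multiplying by $v^T$ and specializing to a point of the Gaudin curve, where $v^T M = 0$, the middle term vanishes and we obtain
$$ v^T v \;=\; \frac{\partial f}{\partial x}\,(v^T e_N) \;=\; \frac{\partial f}{\partial x}\; f_{N-1}(x,u,w) .$$
By definition, a point $(x:u:w)$ of the curve is a critical point of the projection $(x:u:w)\mapsto(u:w)$ precisely when $x$ is a multiple root of the polynomial $f(x,u,w)$ in $x$, i.e. when $\partial f/\partial x = 0$ holds alongside $f = 0$ --- this is the locus cut out by the discriminant used above to locate branch points. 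Hence the implication ``critical $\Rightarrow v^T v = 0$'' is immediate from the displayed identity.

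For the converse I must rule out the possibility that $v^T v = 0$ occurs while $\partial f/\partial x \neq 0$, i.e. I must show that $f_{N-1}(x,u,w)$ does not vanish on the Gaudin curve. This is where the tridiagonal structure is essential: if $f_N = f_{N-1} = 0$ at a point with $u \neq 0$, then the three-term recurrence $f_N = (x + u\,a_N + w\,d_N)\,f_{N-1} - u^2 c_{N-1}^2\,f_{N-2}$ forces $f_{N-2} = 0$, because $u \neq 0$ and $c_{N-1}^2 > 0$ by the remark after the theorem; iterating down the recurrence successively forces $f_{N-2} = f_{N-3} = \dots = f_{\max(0,r-m_3)} = 0$, contradicting $f_{\max(0,r-m_3)} \equiv 1$. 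So $f_{N-1} \neq 0$ on the curve whenever $u \neq 0$, and then $v^T v = 0$ forces $\partial f/\partial x = 0$, as required.

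I expect the only real difficulty to be the degenerate fibre $u = 0$: there $M = xId + w\Omega_{12}$ is diagonal rather than irreducibly tridiagonal, the descent above breaks down (the $u^2 c_{N-1}^2 f_{N-2}$ term disappears), and indeed $f_{N-1}$ may then vanish on the curve without $x$ being a multiple root, so this one fibre must be inspected separately (or excluded from the statement). This is harmless for the purpose of locating branch points, since $u = 0$ is a real point of $\overline{M_{0,4}}(\mathbb{C})$ and, as observed earlier, $H_1 = -\Omega_{12}$ there has pairwise distinct eigenvalues $-d_{r_1}$, so the covering is unramified above it. At the remaining boundary points $u = 1$ and $u = \infty$ (that is, $w = 0$) no special treatment is needed, since $M$ stays irreducible tridiagonal and the recurrence argument applies verbatim.
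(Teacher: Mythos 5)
Your proof follows essentially the same route as the paper: the adjugate identity $Mv=\det(M)\,e_N$, differentiation in $x$, and left-multiplication by $v^T$ on the curve to get $v^Tv=\frac{\partial f}{\partial x}\,(v^Te_N)$. You go further than the paper's own argument in one useful respect: the paper stops at ``we get the desired claim,'' whereas the equivalence actually requires knowing that $v^Te_N=f_{N-1}$ does not vanish on the curve, which you establish via the three-term recurrence and $c_{r_1}^2>0$; your caveat about the reducible fibre $u=0$ (where that descent fails but the covering is anyway unramified) is likewise a legitimate point that the paper's proof passes over silently.
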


\section{Hypotheses and observations \label{Hypotheses}}

\subsection{Algebraic curves of the Gaudin coverings are all smooth}

\begin{hypothesis}

For each triple of non-negative integers $ \underline{\lambda}  = (m_1, m_2, m_3)$ and for each admissible non-negative integer weight $\mu$ algebraic curves of the Gaudin covering by the joint spectrum on $(V_{\underline{\lambda}}  ) _{\mu}^{\text{sing}}$ is non-singular. This hypothesis is verified in numerous examples (see section 10).  

\end{hypothesis}

\subsection{All branch points of the Gaudin coverings are simple}

\begin{hypothesis}

For each triple of non-negative integers $ \underline{\lambda}  = (m_1, m_2, m_3)$ and for each admissible non-negative integer weight $\mu$ the Gaudin covering of $(V_{\underline{\lambda}}) _{\mu}^{\text{sing}}$ has only simple branch points. In other words, the system $f(x,u) =   \frac{ \partial f(x,u)}{\partial x} =  \frac{ \partial^2 f(x,u)}{\partial x^2} = 0 $ is never solvable. This hypothesis is verified in numerous examples (see section 10).  

\end{hypothesis}

\subsection{Asymptotic behavior of ornaments of the Gaudin covering}

\begin{proposition}

For a triple of positive integers $ \underline{\lambda}  = (m_1, m_2, m_3)$  algebraic curve of the 2-fold Gaudin covering of $(V_{\underline{\lambda}}) _{m_1+m_2+m_3-2}^{\text{sing}}$ is given by the following equation 
 \begin{multline}\label{eq:2-fold}
x^2 + (((m_1 - 1)m_3 - m_1)u + (m_1 - 1)m_2 - m_1)x + ((\frac{1}{4}m_1^2 - \frac{1}{2}m_1)m_3^2 - \frac{1}{2}m_1^2m_3)u^2 + \\ (((\frac{1}{2}m_1^2 - m_1)m_2 - \frac{1}{2}m_1^2 + m_1)m_3 + (-\frac{1}{2}m_1^2 + m_1)m_2 + m_1^2)u + (\frac{1}{4}m_1^2 - \frac{1}{2}m_1)m_2^2 - \frac{1}{2}m_1^2m_2  = 0
\end{multline}

\begin{proof}
    Follows immediately from Theorem 1 by substituting $r=1$ in formulas (\ref{dactom}), (\ref{eq:mid23}), (\ref{eq:lateralom}).
\end{proof}
\end{proposition}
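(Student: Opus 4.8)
The plan is to specialize the general tridiagonal description of the Gaudin Hamiltonian obtained in Theorem 1 (together with Corollary \ref{dact_1} for the diagonal part coming from $\Omega_{12}$) to the case $r=1$, where the space of singular vectors of weight $m_1+m_2+m_3-2$ is exactly two-dimensional, and then simply write out the characteristic polynomial of a $2\times 2$ symmetric matrix. First I would determine the admissible range of $r_1$ from inequality \eqref{r_1}: with $r=1$ and all $m_i$ positive, $\max(0,1-m_3)=0$ and $\min(1,\min(m_1,m_2),m_1+m_2-1)=1$, so $r_1 \in \{0,1\}$ and the matrix is genuinely $2\times 2$ with rows/columns indexed by $r_1=0$ and $r_1=1$.

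Next I would assemble the entries. The Gaudin curve is $f(x,u,w)=\det(xId+u\Omega_{13}+w\Omega_{12})$, and setting $w=1$ (affine chart) gives $f(x,u)=\det(xId+u\Omega_{13}+\Omega_{12})$. The diagonal of $\Omega_{12}$ is $d_{r_1}=r_1(r_1-m_1-m_2-1)+\tfrac{m_1m_2}{2}$, so $d_0=\tfrac{m_1m_2}{2}$ and $d_1=(1-m_1-m_2)+\tfrac{m_1m_2}{2}$. For $\Omega_{13}$ I would substitute $r=1$ into \eqref{eq:mid13} to get $a_0$ (use $r_1=0$: the two fractional terms with $r_1$ in the numerator vanish at $r_1=0$, leaving $a_0=\tfrac{m_1(m_3-2)}{2}-\tfrac{m_1(m_3)\cdot 0 \cdots}{\cdots}$ — more precisely only the last fraction survives with $r_1=0,r-r_1=1$) and $a_1$ (use $r_1=1$, $r-r_1=0$, so the last fraction vanishes). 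Then $c_0^2$ comes from \eqref{eq:lateralom} at $r_1=0$: the factor $(r-r_1)=1$ keeps it nonzero, while $c_1^2=0$ since $(r-r_1)=0$, consistent with the matrix being $2\times 2$ (only one off-diagonal entry $c_0$). The resulting matrix is
\[
\begin{pmatrix} -d_0-u\,a_0 & -u\,c_0 \\ -u\,c_0 & -d_1-u\,a_1 \end{pmatrix}
\]
up to the sign conventions of \eqref{eq:matrix}; note $u^2c_0^2$ will appear in $f$, and since $c_0^2$ is a polynomial in the $m_i$ the square root never enters.

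Finally, $f(x,u) = x^2 + (\text{tr})x + \det$ where $\mathrm{tr} = (d_0+d_1) + u(a_0+a_1)$ and $\det = (d_0+u a_0)(d_1+u a_1) - u^2 c_0^2$. I would expand these as polynomials in $u$, collect the coefficients of $u^2$, $u^1$, $u^0$ in both the linear-in-$x$ term and the constant term, and check they match the stated coefficients in \eqref{eq:2-fold}. The main obstacle — really the only one — is the bookkeeping in this expansion: one must carefully evaluate the rational expressions \eqref{eq:mid13} and \eqref{eq:lateralom} at $r=1,r_1\in\{0,1\}$ (watching the denominators $m_1+m_2-2r_1\pm 2$, which do not vanish for positive $m_i$), clear denominators, and verify the somewhat unwieldy quadratic-in-$u$ coefficients such as $((\tfrac12 m_1^2-m_1)m_2-\tfrac12 m_1^2+m_1)m_3+\cdots$ agree. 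This is a finite, mechanical verification with no conceptual content beyond the already-established Theorem 1, so I would organize it as a single computation and, if desired, confirm it with a computer algebra check rather than typesetting every intermediate step.
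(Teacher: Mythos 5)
Your proposal is correct and is essentially the paper's own proof: substitute $r=1$ into the formulas of Theorem 1, read off the $2\times 2$ symmetric tridiagonal matrix with rows indexed by $r_1\in\{0,1\}$ (so $c_1=0$ and only $c_0$ survives off the diagonal), and expand its characteristic polynomial in $x$ and $u$. If anything, you cite the correct ingredient: since the Gaudin curve is $\det(x\,Id+u\Omega_{13}+w\Omega_{12})$, the diagonal contribution must come from $a_{r_1}$ in \eqref{eq:mid13} together with $d_{r_1}$ and $c_{r_1}^2$, and indeed $a_0+a_1=(m_1-1)m_3-m_1$ matches the stated coefficient of $ux$, whereas the paper's proof cites \eqref{eq:mid23} (the formula for $b_{r_1}$), which appears to be a typo.
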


\begin{proposition}

For a triple of positive integers $ \underline{\lambda}  = (m_1, m_2, m_3)$  branch points of the 2-fold Gaudin covering on $(V_{\underline{\lambda}}) _{m_1+m_2+m_3-2}^{\text{sing}}$ are given by the following equation 
$$
(m_1+m_3)^2u^2 - (  (m_1+m_3)^2  - (m_2+m_3)^2 +  (m_1+m_2)^2)u + (m_1+m_2)^2=0 $$ 

\begin{proof}
    Follows from a direct computation of discriminant of second-degree equation \eqref{eq:2-fold}
\end{proof}

\end{proposition}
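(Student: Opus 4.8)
The plan is to start from the explicit quadratic Gaudin curve equation \eqref{eq:2-fold}, viewed as $x^2 + P(u)x + Q(u) = 0$ with $P$ of degree $1$ and $Q$ of degree $2$ in $u$, and observe that the branch points of a $2$-fold covering are exactly the points of the base where the two sheets collide, i.e.\ where the discriminant in $x$ vanishes. So the whole statement reduces to computing $\Delta(u) = P(u)^2 - 4Q(u)$ and identifying it (up to a nonzero constant factor) with the claimed quadratic $(m_1+m_3)^2u^2 - ((m_1+m_3)^2 - (m_2+m_3)^2 + (m_1+m_2)^2)u + (m_1+m_2)^2$.

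First I would read off from \eqref{eq:2-fold} that $P(u) = ((m_1-1)m_3 - m_1)u + (m_1-1)m_2 - m_1$ and that $Q(u)$ is the degree-$2$ polynomial given by the remaining terms. Then I would compute $P(u)^2$, compute $4Q(u)$, and subtract. The leading coefficient of $\Delta$ in $u^2$ is $((m_1-1)m_3-m_1)^2 - 4((\tfrac14 m_1^2 - \tfrac12 m_1)m_3^2 - \tfrac12 m_1^2 m_3)$; expanding, $((m_1-1)m_3 - m_1)^2 = (m_1-1)^2 m_3^2 - 2m_1(m_1-1)m_3 + m_1^2$ and $4Q$ contributes $(m_1^2 - 2m_1)m_3^2 - 2m_1^2 m_3$, and the difference simplifies to $m_3^2 + 2m_1 m_3 + m_1^2 = (m_1+m_3)^2$. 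A symmetric computation (exchanging the roles of $m_2$ and $m_3$, which swaps $u \leftrightarrow$ constant term) gives the constant term $(m_1+m_2)^2$. The middle coefficient then follows either from a direct expansion of the cross terms or, more cheaply, by evaluating $\Delta$ at a convenient value such as $u=1$ or $u=-1$ and solving for the one remaining unknown coefficient.

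A cleaner way to organize the last step, avoiding brute expansion of the middle coefficient, is to note that the sought answer is a symmetric-looking quadratic and to pin it down by three evaluations: the $u^2$-coefficient (computed above as $(m_1+m_3)^2$), the value at $u=0$ (which is $P(0)^2 - 4Q(0) = ((m_1-1)m_2-m_1)^2 - 4((\tfrac14 m_1^2 - \tfrac12 m_1)m_2^2 - \tfrac12 m_1^2 m_2) = (m_1+m_2)^2$ by the same algebra as before), and one more evaluation, say at $u=1$. At $u=1$ one gets $P(1) = (m_1-1)(m_2+m_3) - 2m_1$ and $Q(1)$ is the sum of all the $u$-free, $u$-linear, and $u$-quadratic coefficients; after simplification $\Delta(1)$ should equal $2(m_2+m_3)^2 - $ (twice the products), matching the value of the claimed polynomial at $u=1$. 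Comparing the two quadratics at three points forces equality.

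The main obstacle is purely bookkeeping: equation \eqref{eq:2-fold} has many terms with fractional and quadratic coefficients in the $m_i$, so expanding $P^2 - 4Q$ by hand is error-prone. I expect the cleanest route is in fact to \emph{rederive} \eqref{eq:2-fold} and hence $\Delta$ from Theorem~1 with $r=1$: when $r=1$ the admissible range of $r_1$ is $0 \le r_1 \le 1$ (assuming all $m_i > 0$), so the matrix is $2\times 2$ with diagonal entries built from $d_0, d_1, a_0, a_1$ and off-diagonal $c_0$ from \eqref{dactom}, \eqref{eq:mid13}, \eqref{eq:lateralom}; the discriminant of $\det(xId + u\Omega_{13} + \Omega_{12})$ in $x$ is then $(\text{trace difference})^2 + 4u^2 c_0^2$, and each of these pieces is a manageable product. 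In particular $4u^2 c_0^2$ supplies a clean factor, and the trace-difference term $(-d_0 - u a_0) - (-d_1 - u a_1) = (d_1 - d_0) + u(a_1 - a_0)$ is linear in $u$ with coefficients that factor nicely in terms of $m_1, m_2, m_3, r=1$; recombining gives the stated $(m_1+m_3)^2 u^2 - (\dots)u + (m_1+m_2)^2$. Either way the proof is a finite verification, and I would present it as ``a direct computation of the discriminant,'' recording only the three coefficient values and the sanity check that they agree with the claimed polynomial.
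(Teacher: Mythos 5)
Your proposal is correct and coincides with the paper's proof, which is exactly the direct computation of the discriminant in $x$ of the quadratic \eqref{eq:2-fold}; your coefficient checks (leading term $(m_1+m_3)^2$, constant term $(m_1+m_2)^2$) match, and the discriminant in fact equals the stated polynomial on the nose, not merely up to a constant.
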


\begin{hypothesis}

Fix a positive integer $r$. Suppose that all three positive integers $m_1,m_2,m_3$ tend to infinity in such a way that limits of pairwise ratios between them exist, and by abuse of notation, let $ \lim m_1 : m_2 : m_3 = M_1 : M_2 : M_3$. Then the branch points with positive (resp. negative) imaginary part of the Gaudin covering of $(V_{\underline{\lambda}}) _{m_1+m_2+m_3-2r}^{\text{sing}}$ go to the root with positive (resp. negative) imaginary part of the equation (see figures 5-6)
$$
(M_1+M_3)^2u^2 - (  (M_1+M_3)^2  - (M_2+M_3)^2   + (M_1+M_2)^2)u + (M_1+M_2)^2=0 $$

\end{hypothesis}

\begin{center}
\begin{minipage}{0.3\linewidth}
\includegraphics[width=\linewidth]{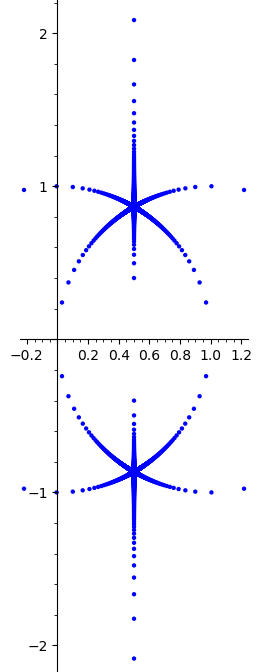}
\captionof{figure}{$M_i=1, r=3$}
\end{minipage}%
\hfill
\begin{minipage}{0.56\linewidth}
\includegraphics[width=\linewidth]{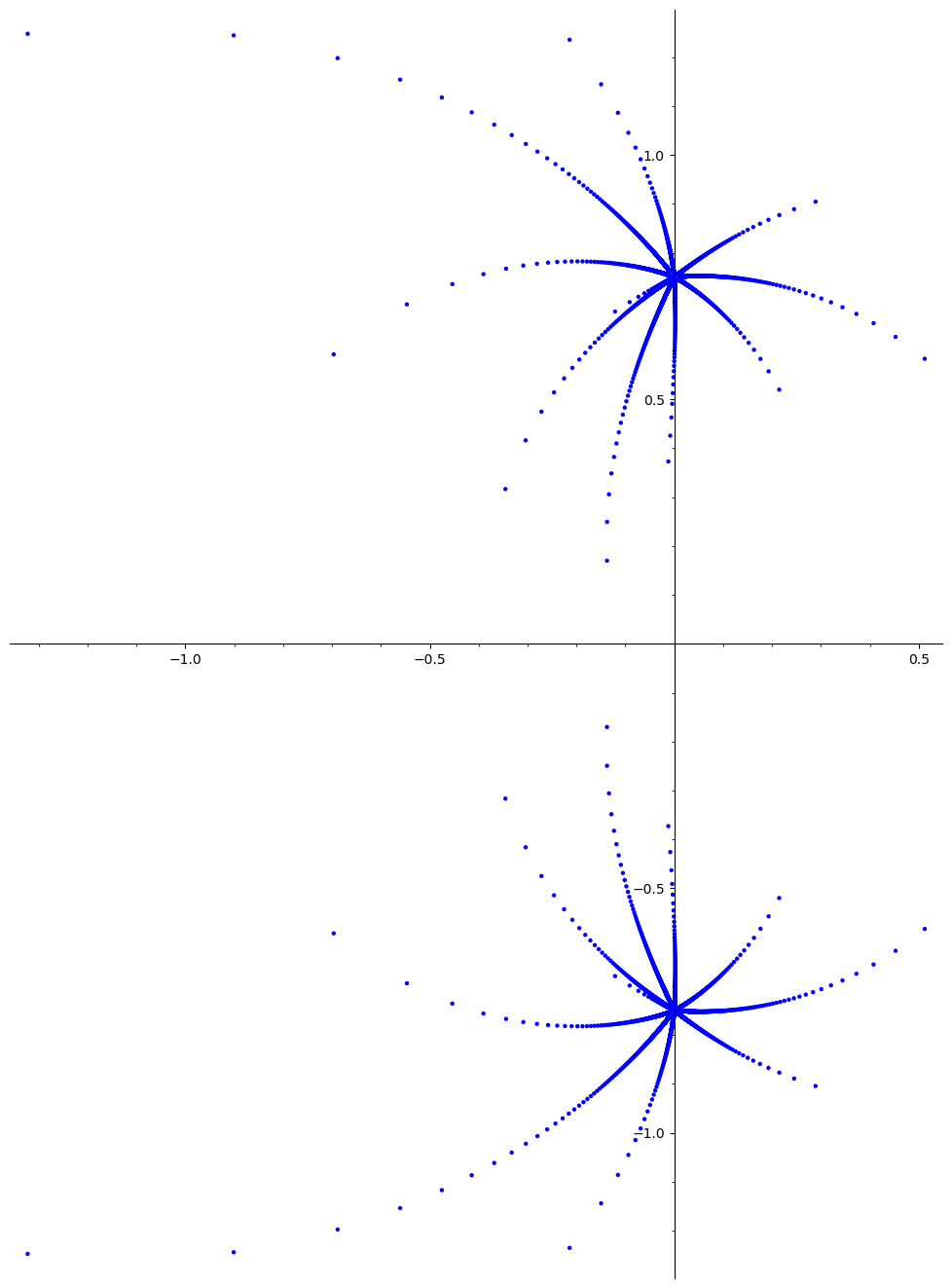}
\captionof{figure}{$M_1=1,M_2=2,M_3=3,r=4$}
\end{minipage}
\end{center}

\subsection{Ornaments for  $m_1, m_2, m_3$ much larger than $r$}


When $m_1, m_2, m_3$ are large compared to $r$, the branch points stack in rows and form a curvilinear triangle. In the context of branch points of algebraic curves similar patterns were observed in the work of Boris Shapiro and Milos Tater (see \cite{SHAPIRO} and \cite{SHAPIRO2}).

\begin{center}
    \includegraphics[height=12cm]{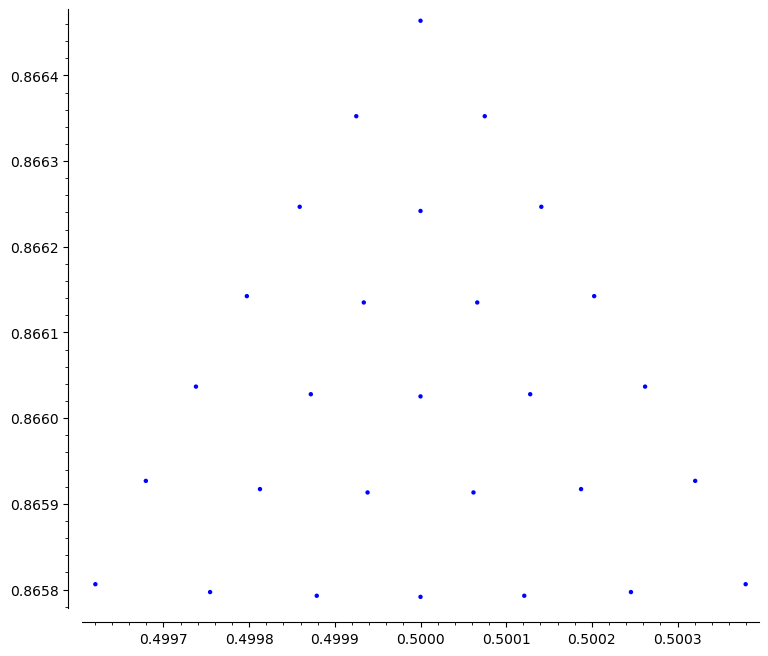}
\captionof{figure}{$m_i=10^6, r=7$}
\end{center}

\subsection{Description of monodromy of the Gaudin coverings}

When $u$ is real the spectrum is real and simple, so we can enumerate eigenvalues in the increasing order. Consider the following family of paths with superscripted permutations in the image of the monodromy action:

\begin{center}
    \includegraphics[height=13cm]{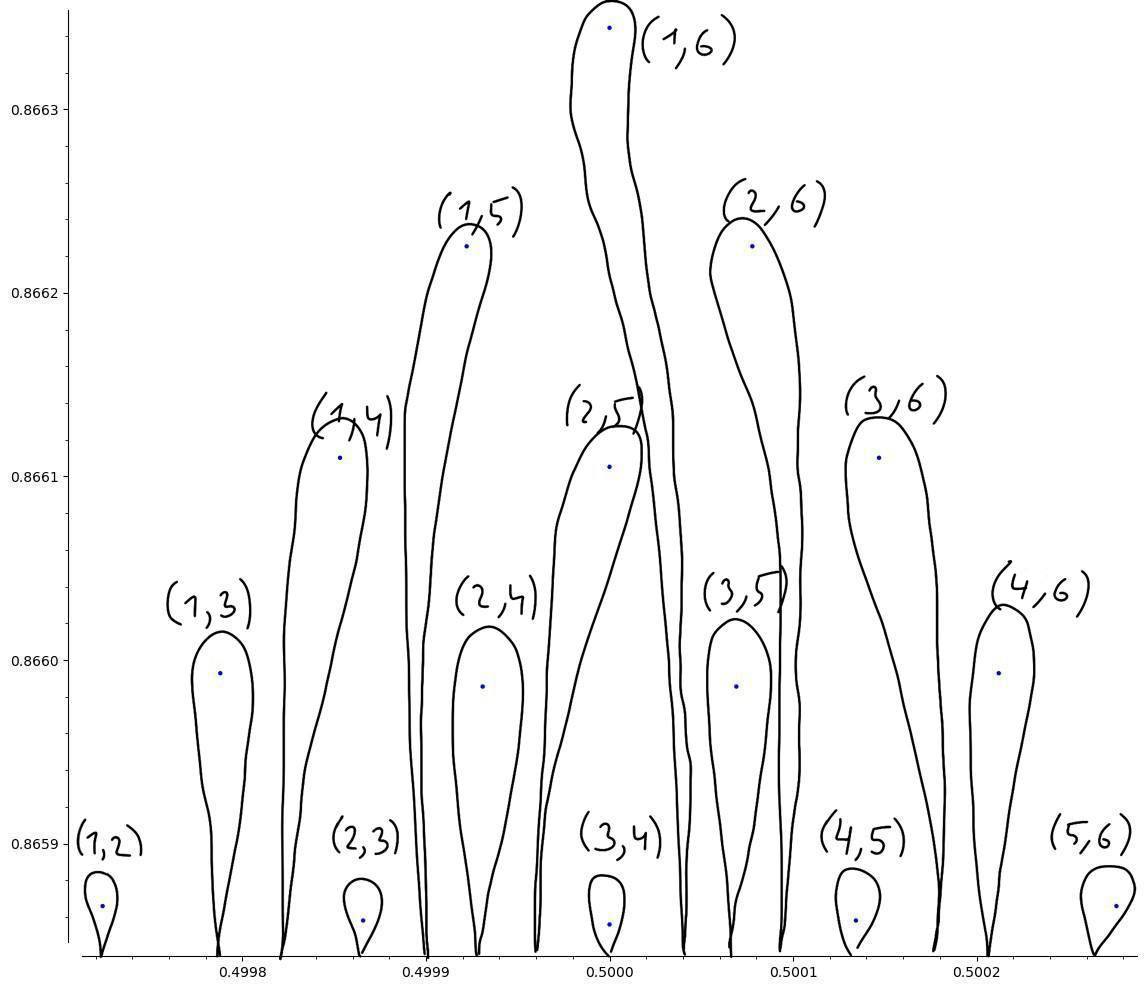}
\captionof{figure}{Monodromy with specific paths chosen}
\end{center}

Similar patterns arise in the work of Boris Shapiro and Milos Tater (see \cite{SHAPIRO} and \cite{SHAPIRO2}).

\section{Ornaments of branch points \label{Ornaments}}

In this section, we provide the concrete examples calculated by the computer program, which we wrote based on formulas (\ref{dactom}), (\ref{eq:mid13}), (\ref{eq:mid23}), (\ref{eq:lateralom}). 

\subsection{$m_1 = 3, m_2 = 4, m_3 = 4, r =4$}
Consider the Gaudin covering corresponding to the space $V_{3} \otimes V_{4} \otimes V_{4}$ and parameters $(0,1, \frac{1}{u},\infty )$.

The spectral curve of the subspace of singular vectors of weight $3$ (so that $r=4$) is given by the following equation \newline
$x^4 + (-10u - 10)x^3 + (-27u^2 + 162u - 27)x^2 + (360u^3 - 432u^2 - 432u + 360)x - 324u^4 - 864u^3 + 2376u^2 - 864u - 324$

It is smooth, its genus is equal to $3$. The branch points are the zeroes of the polynomial

 $u^{12} - 11364/1225 u^{11} + 56991/1225 u^{10} - 36208/245 u^9 + 9784071/30625 u^8 - 15151356/30625 u^7 + 3489114/6125 u^6 - 15151356/30625 u^5 + 9784071/30625 u^4 - 36208/245 u^3 + 56991/1225 u^2 - 11364/1225 u + 1$

They are simple. The monodromy group of the covering is $S_4$.

Here are the branch points in the picture: 

\begin{center}
\includegraphics[height=12cm]{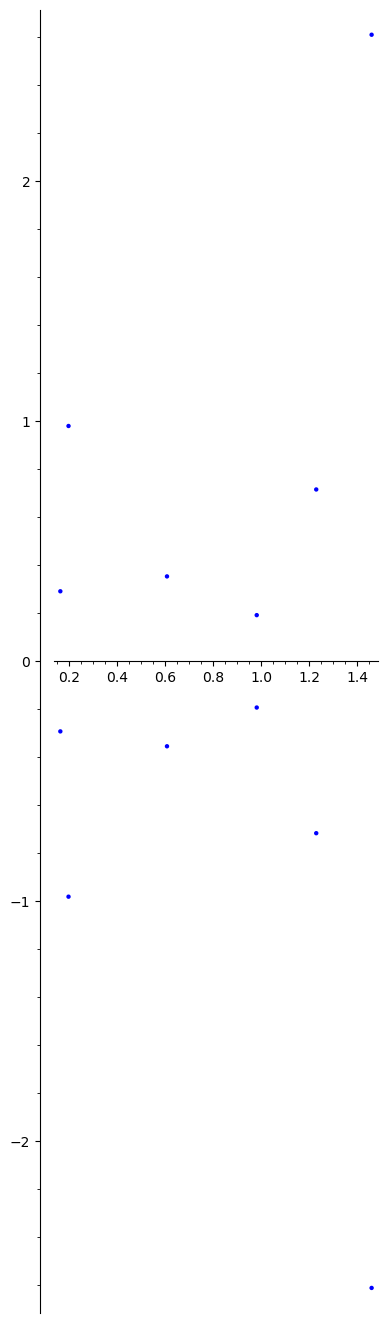}
\captionof{figure}{$m_1=3,m_2=m_3=r=4$}

\end{center}

\subsection{$m_1 = 10, m_2 = 10, m_3 =10$, $r = 6$}
Consider the Gaudin covering corresponding to the space $V_{10} \otimes V_{10} \otimes V_{10}$ and parameters $(0,1, \frac{1}{u},\infty )$.

The spectal curve of the subspace of singular vectors of weight $18$ (so that $r=6$) is given by the equation \newline $x^7 + (-3192 u^2 + 3192 u - 3192) x^5 + (18944 u^3 - 28416 u^2 - 28416 u + 18944) x^4 + (2455440 u^4 - 4910880 u^3 + 7366320 u^2 - 4910880 u + 2455440) x^3 + (-18777600 u^5 + 46944000 u^4 - 18777600 u^3 - 18777600 u^2 + 46944000 u - 18777600) x^2 + (-353376000 u^6 + 1060128000 u^5 - 2289600000 u^4 + 2812320000 u^3 - 2289600000 u^2 + 1060128000 u - 353376000) x + 1555200000 u^7 - 5443200000 u^6 + 6998400000 u^5 - 3888000000 u^4 - 3888000000 u^3 + 6998400000 u^2 - 5443200000 u + 1555200000$.

It is smooth, its genus is equal to $15$.

They are simple. The monodromy group of the covering is $S_7$.

Here are the branch points in the picture: 

\begin{center}
\includegraphics[height=12cm]{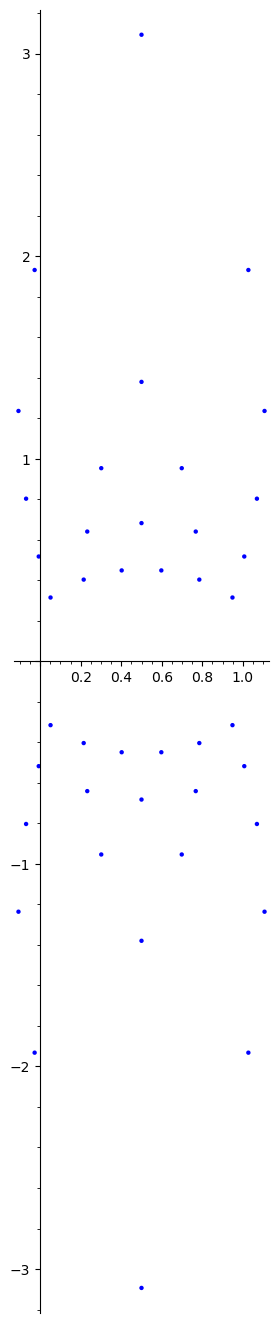}
\captionof{figure}{$m_1=m_2=m_3=10, r=6$}
\end{center}

\subsection{ $m_1 = 30, m_2 = 30, m_3 = 30, r = 20$}

Consider the Gaudin covering corresponding to the space $V_{30} \otimes V_{30} \otimes V_{30}$ and parameters $(0,1, \frac{1}{u},\infty )$.

The spectral curve of the subspace of singular vectors of weight $50$ ($r=20$) is smooth, its genus is $190$. All branch points are simple, and the monodromy group is $S_{21}$. 

Here are the branch points in the picture:

\begin{center}
\includegraphics[height=20cm]{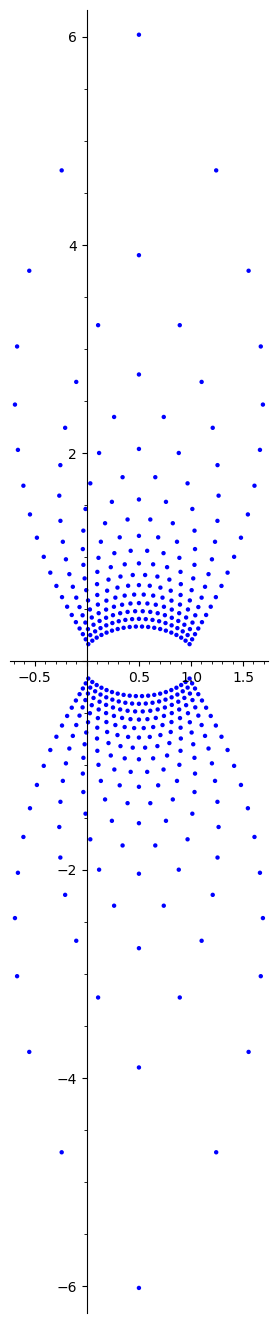}
\captionof{figure}{$m_1=m_2=m_3=30, r=20$}
\end{center}

\subsection{$m_1 = 31, m_2 = 32, m_3 = 33, r = 23$}

The Gaudin covering corresponding to the space $V_{31} \otimes V_{32} \otimes V_{33}$ and parameters $(0,1, \frac{1}{u},\infty )$:

The curve of the covering on the subspace of singular vectors of weight $49$ ($r=23$) is smooth, its genus is $253$. All branch points are simple, and the monodromy group is $S_{24}$. 
Here are the branch points in the picture:
\begin{center}
\includegraphics[height=22cm]{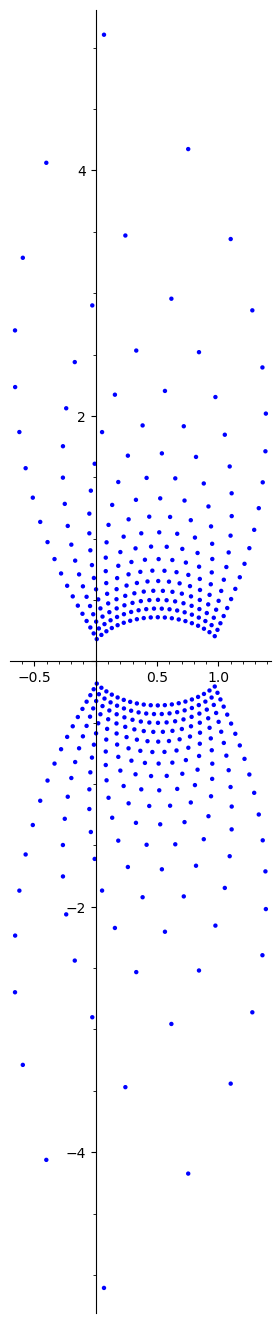}
\captionof{figure}{$m_1=31, m_2= 32, m_3=33, r=23$}

\end{center}

\subsection{Other ornaments of branch points}

\begin{center}
\begin{minipage}{0.49\linewidth}
\includegraphics[width=\linewidth]{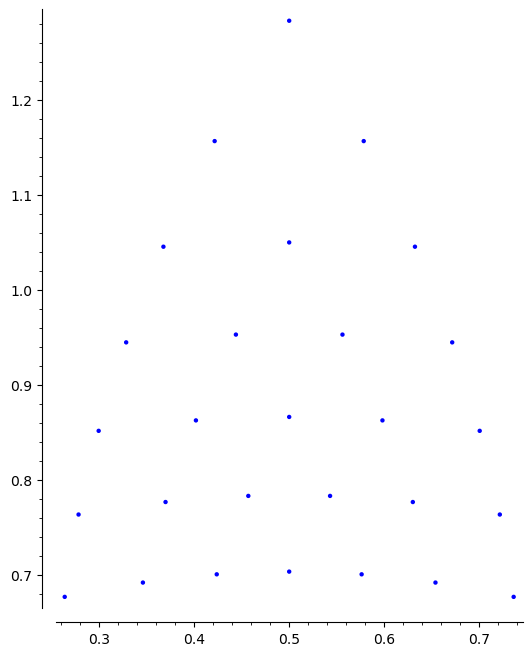}
\captionof{figure}{$m_i=50, r=7$}
\end{minipage}%
\hfill
\begin{minipage}{0.49\linewidth}
\includegraphics[width=\linewidth]{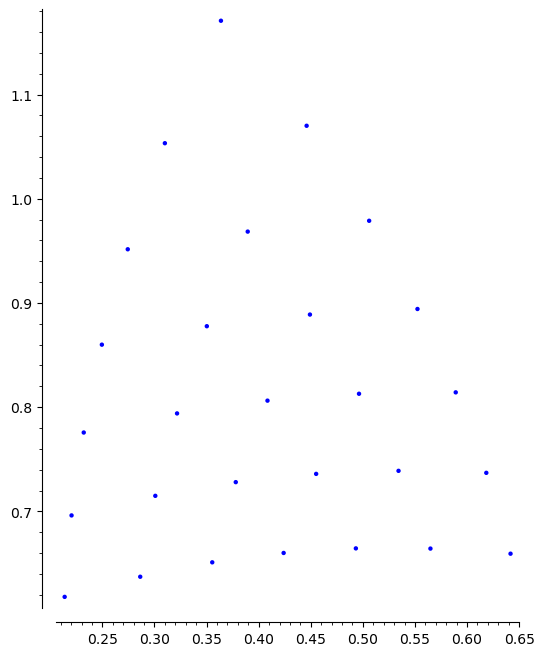}
\captionof{figure}{$m_1 = m_2 = 50, m_3=60,r=7$}
\end{minipage}
\end{center}

\begin{center}
\begin{minipage}{0.459\linewidth}
\includegraphics[width=\linewidth]{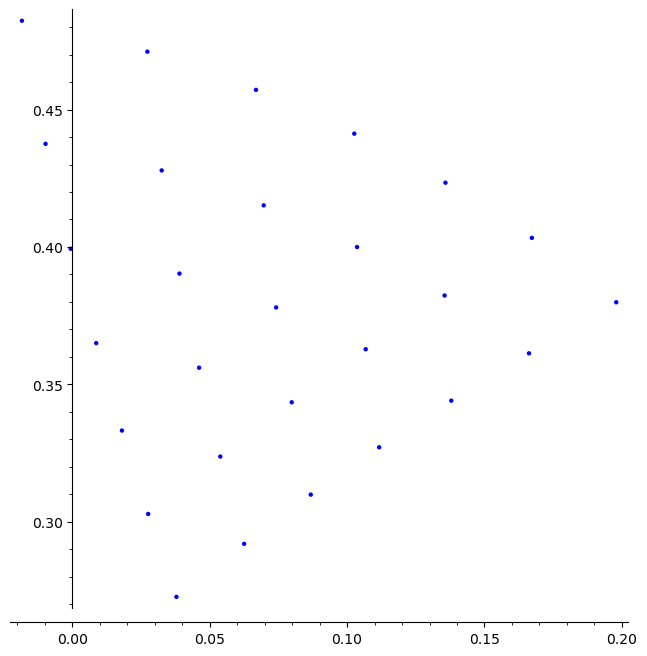}
\captionof{figure}{$m_1 = m_2 = 50, m_3=200,r=7$}
\end{minipage}%
\hfill
\begin{minipage}{0.541\linewidth}
\includegraphics[width=\linewidth]{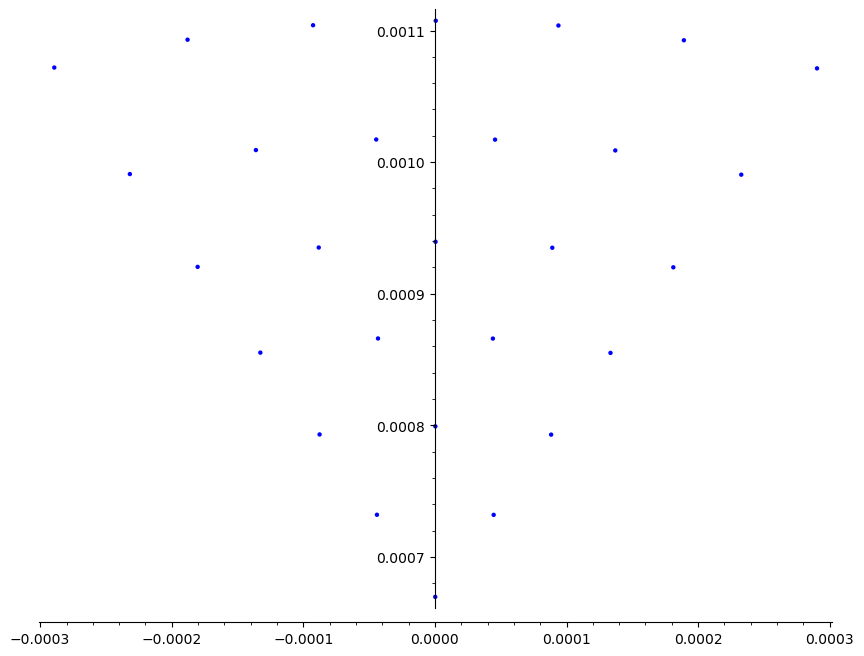}
\captionof{figure}{$m_1 = m_2 = 50, m_3=10^5,r=7$}
\end{minipage}
\end{center}

\begin{center}
\begin{minipage}{0.31\linewidth}
\includegraphics[width=\linewidth]{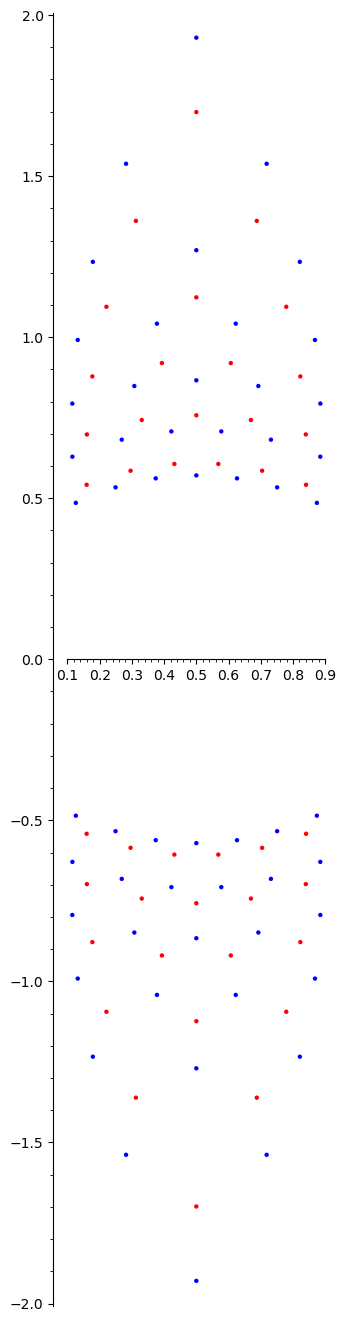}
\captionof{figure}{$m_i=20 ,r=6 \, (red), r=7 \,(blue)$}
\end{minipage}%
\hfill
\begin{minipage}{0.555\linewidth}
\includegraphics[width=\linewidth]{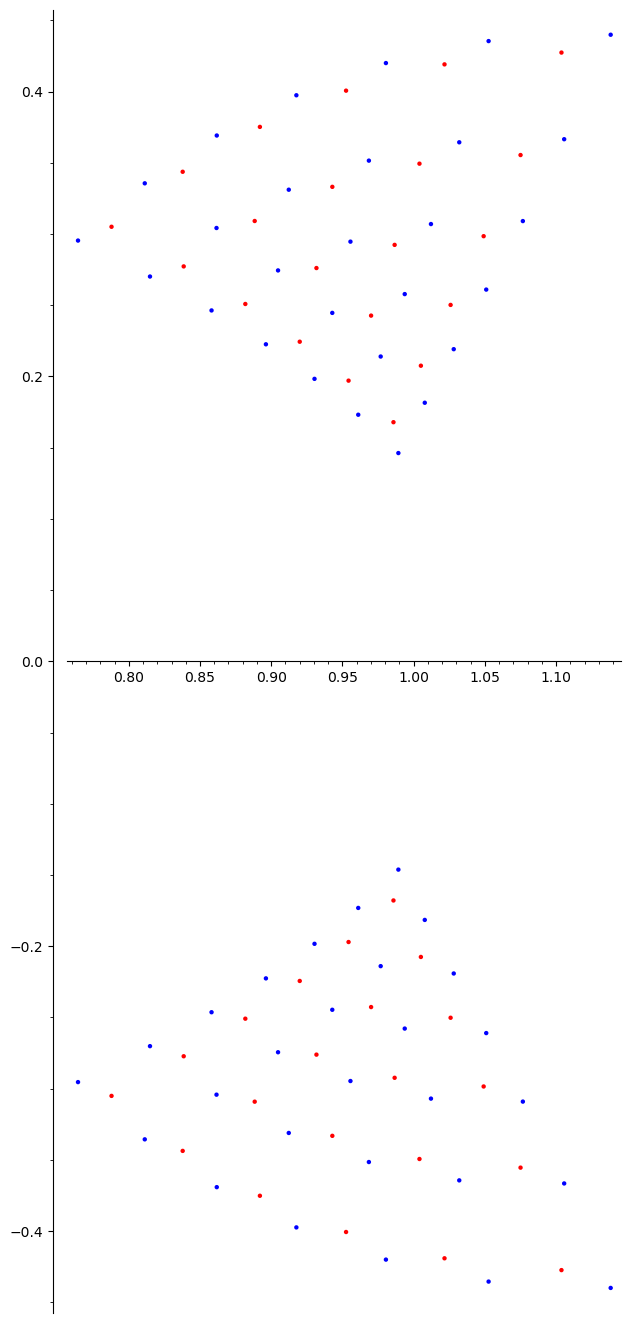}
\captionof{figure}{$m_1 = 100, m_2 = m_3 = 20, r=6 \,(blue), r= 7 \,(red)$}
\end{minipage}
\end{center}

\newpage 
\section{Acknowledgements}

The authors are grateful to Leonid Rybnikov, Yuri Chernyakov, Boris Shapiro, Vladimir Rubtsov, and Dmitry Vasiliev for useful discussions. The work of Natalia Amburg was funded by the Russian Science Foundation 
(Grant No. 24-12-00178). This work of Ilya Tosltukhin is an output of a research project implemented as part of the Basic Research Program at the National Research University Higher School of Economics (HSE University).

\vspace*{\fill}

Natalia Amburg:
\\
NRC "Kurchatov institute", Moscow
\\
Faculty of Mathematics, National Research University Higher School of  
Economics,  Moscow
\\
e-mail: amburg@mccme.ru

Ilya Tolstukhin:
\\
Faculty of Mathematics, National Research University Higher School of  
Economics,  Moscow
\\
The Skolkovo Institute of Science and Technology,  Moscow
\\
e-mail: itolstukhin@hse.ru


\begin{thebibliography}{9}
\bibitem{R1} 
L. Rybnikov, Cactus group and monodromy of Bethe vectors, International Mathematics Research Notices, Volume 2018, Issue 1, 3 January 2018, Pages 202-235

\bibitem{R2} 
L. Rybnikov, A proof of the Gaudin Bethe ansatz conjecture, International Mathematics Research Notices, 25 October 2018

\bibitem{HKRW} 
I. Halacheva, J. Kamnitzer, L.Rybnikov, A. Weeks, Crystals and monodromy of Bethe vectors, Duke Mathematical Journal. 2020. Vol. 169. No. 12. P. 2337-2419.


\bibitem{FFR}
B. Feigin, E. Frenkel, N. Reshetikhin, Gaudin Model, Bethe Ansatz and Critical Level, Comm. Math. Phys., 166 (1994), pp. 27 - 62

\bibitem{GD}
M. Gaudin, Diagonalisation d’une classe d’hamiltoniens de spin, J. de Physique, t.37, N 10, p. 1087–1098,
1976

\bibitem{Var}

I. Scherbak, A. Varchenko, Critical Points of the Product of Powers of Linear Functions and Families of Bases of Singular Vectors, Compositio Mathematica, Tome 97 (1995) no. 3, pp. 385-401.

\bibitem{Scher1}
I. Scherbak, A. Varchenko, Critical Points of Functions, sl2 Representations, and Fuchsian Differential Equations with only Univalued Solutions, January 2002, Moscow Mathematical Journal 

\bibitem{Scher}
I. Scherbak, Gaudin's Model and the Generating Function of the Wronski map, 30 August 2003, Mathematics Banach Center Publications


\bibitem{Muk}

E. Mukhin, V. Tarasov, A. Varchenko, Schubert calculus and representation of the general linear group, Journal of the American Mathematical Society, Volume 22, Number 4, October 2009, pp. 909–940



\bibitem{SHAPIRO}
B. Shapiro, M. Tater,  Asymptotics and monodromy of the algebraic spectrum of quasi-exactly solvable sextic oscillator, Experimental Mathematics, Volume 28, 2019 - Issue 1, pp. 16-23

\bibitem{SHAPIRO2} B. Shapiro, M. Tater, On spectral asymptotic of quasi-exactly solvable quartic potential, Analysis and Mathematical Physics volume 12, Article number: 2 (2022)


\bibitem{Azad}
A. Saifullin, Generalized Jucys–Murphy elements, B.A. thesis 





\end{thebibliography}
\end{document}